\documentclass[11pt]{article}
\usepackage{graphicx,amsmath,amsfonts,amscd,amssymb,bm,cite,epsfig,epsf,url,color,algorithm,algorithmic}
\usepackage{fullpage} \usepackage[small,bf]{caption}
\usepackage{subfigure}
\setlength{\captionmargin}{30pt}

\newtheorem{theorem}{Theorem}[section]
\newtheorem{lemma}[theorem]{Lemma}
\newtheorem{corollary}[theorem]{Corollary}

\newtheorem{definition}[theorem]{Definition}


\newcommand{\R}{\mathbb{R}}

\newcommand{\<}{\langle}
\renewcommand{\>}{\rangle}

\newcommand{\sgn}{\textrm{sgn}}
\renewcommand{\P}{\operatorname{\mathbb{P}}}
\newcommand{\E}{\operatorname{\mathbb{E}}}

\newcommand{\PO}{\mathcal{P}_{\Omega}} 
\newcommand{\PT}{\mathcal{P}_T}
\newcommand{\PTp}{\mathcal{P}_{T^\perp}} 
\newcommand{\POc}{\mathcal{P}_{\Omega^\perp}} 
\newcommand{\POj}{\mathcal{P}_{\Omega_j}}
\newcommand{\POzero}{\mathcal{P}_{\Omega_0}}
\newcommand{\POzerop}{\mathcal{P}_{\Omega_0^\perp}}

\newcommand{\POobs}{\mathcal{P}_{\Omega_{\text{obs}}}}
\newcommand{\POobsp}{\mathcal{P}_{\Omega^\perp_{\text{obs}}}}
\newcommand{\Obs}{\Omega_{\text{obs}}}



\newcommand{\rank}{\operatorname{rank}}

\newcommand{\trace}{\operatorname{trace}}

\newcommand{\OpId}{\mathcal{I}}

\numberwithin{equation}{section}

\def \endprf{\hfill {\vrule height6pt width6pt depth0pt}\medskip}
\newenvironment{proof}{\noindent {\bf Proof} }{\endprf\par}

\newcommand{\name}{Principal Component Pursuit}

\pagestyle{plain}

\title{Robust Principal Component Analysis?}


\author{Emmanuel J. Cand\`es$^{1,2}$, Xiaodong Li$^{2}$, Yi Ma$^{3,4}$, and John Wright$^{4}$\\
  \vspace{-.1cm}\\
  $^1$ Department of Statistics, Stanford University, Stanford, CA 94305\\
  \vspace{-.3cm}\\
  $^2$ Department of Mathematics, Stanford University, Stanford,
  CA 94305\\
 \vspace{-.3cm}\\
  $^{3,4}$ Electrical and Computer Engineering, UIUC, Urbana, IL 61801\\
 \vspace{-.3cm}\\
  $^4$ Microsoft Research Asia, Beijing, China
}

\date{December 17, 2009}

\begin{document}

\maketitle

\vspace{-0.3in}

\begin{abstract}
  This paper is about a curious phenomenon. Suppose we have a data
  matrix, which is the superposition of a low-rank component and a
  sparse component. Can we recover each component individually? We
  prove that under some suitable assumptions, it is possible to
  recover both the low-rank and the sparse components {\em exactly} by
  solving a very convenient convex program called {\em \name}; among
  all feasible decompositions, simply minimize a weighted combination
  of the nuclear norm and of the $\ell_1$ norm.  This suggests the
  possibility of a principled approach to robust principal component
  analysis since our methodology and results assert that one can
  recover the principal components of a data matrix even though a
  positive fraction of its entries are arbitrarily corrupted. This
  extends to the situation where a fraction of the entries are missing
  as well.  We discuss an algorithm for solving this optimization
  problem, and present applications in the area of video surveillance,
  where our methodology allows for the detection of objects in a
  cluttered background, and in the area of face recognition, where it
  offers a principled way of removing shadows and specularities in
  images of faces.
\end{abstract}

{\bf Keywords.}  Principal components, robustness vis-a-vis outliers,
nuclear-norm minimization, $\ell_1$-norm minimization, duality,
low-rank matrices, sparsity, video surveillance.

\maketitle

\section{Introduction}
\label{sec:intro}

\subsection{Motivation}
\label{sec:motivation}

Suppose we are given a large data matrix $M$, and know
that it may be decomposed as
\[
M = L_0 + S_0,
\]
where $L_0$ has low-rank and $S_0$ is sparse; here, both components
are of arbitrary magnitude.  We do not know the low-dimensional column
and row space of $L_0$, not even their dimension. Similarly, we do not
know the locations of the nonzero entries of $S_0$, not even how many
there are.  Can we hope to recover the low-rank and sparse components
both accurately (perhaps even exactly) and efficiently?

A {\em provably correct} and {\em scalable} solution to the above
problem would presumably have an impact on today's data-intensive
scientific discovery.\footnote{Data-intensive computing is advocated
  by Jim Gray as the fourth paradigm for scientific discovery
  \cite{Tony-Hey}.}  The recent explosion of massive amounts of
high-dimensional data in science, engineering, and society presents a
challenge as well as an opportunity to many areas such as image,
video, multimedia processing, web relevancy data analysis, search,
biomedical imaging and bioinformatics. In such application domains,
data now routinely lie in thousands or even billions of dimensions,
with a number of samples sometimes of the same order of magnitude.

To alleviate the curse of dimensionality and scale,\footnote{We refer
  to either the complexity of algorithms that increases drastically as
  dimension increases, or to their performance that decreases sharply
  when scale goes up.} we must leverage on the fact that such data
have low intrinsic dimensionality, e.g.~that they lie on some
low-dimensional subspace \cite{Eckart1936-Psychometrika}, are sparse
in some basis \cite{ChenS2001-SIAM}, or lie on some low-dimensional
manifold \cite{Tenenbaum2000-Science,Belkin2003-NC}. Perhaps the
simplest and most useful assumption is that the data all lie near some
low-dimensional subspace. More precisely, this says that if we stack
all the data points as column vectors of a matrix $M$, the matrix
should have (approximately) low-rank: mathematically,
\[
M = L_0 + N_0,
\]
where $L_0$ has low-rank and $N_0$ is a small perturbation matrix.
Classical {\em Principal Component Analysis} (PCA)
\cite{Hotelling,Eckart1936-Psychometrika,Jolliffe1986} seeks the best
(in an $\ell^2$ sense) rank-$k$ estimate of $L_0$ by solving
\[
  \begin{array}{ll}
    \text{minimize}   & \quad \|M - L\|\\
    \text{subject to} & \quad \text{rank}(L) \le k. 
  \end{array}
\] 
(Throughout the paper, $\|M\|$ denotes the $2$-norm; that is, the
largest singular value of $M$.) This problem can be efficiently solved
via the singular value decomposition (SVD) and enjoys a number of
optimality properties when the noise $N_0$ is small and
i.i.d. Gaussian.

\paragraph{Robust PCA.} PCA is arguably the most widely used
statistical tool for data analysis and dimensionality reduction
today. However, its brittleness with respect to {\em grossly}
corrupted observations often puts its validity in jeopardy -- a single
grossly corrupted entry in $M$ could render the estimated $\hat{L}$
arbitrarily far from the true $L_0$. Unfortunately, gross errors are
now ubiquitous in modern applications such as image processing, web
data analysis, and bioinformatics, where some measurements may be
arbitrarily corrupted (due to occlusions, malicious tampering, or
sensor failures) or simply irrelevant to the low-dimensional structure
we seek to identify.  A number of natural approaches to robustifying
PCA have been explored and proposed in the literature over several
decades. The representative approaches include influence function
techniques \cite{Huber,DeLaTorre2003-IJCV}, multivariate trimming
\cite{Gnanadesikan1972-Biometrics}, alternating minimization
\cite{Ke2005-CVPR}, and random sampling techniques
\cite{Fischler1981-ACM}. Unfortunately, none of these existing
approaches yields a polynomial-time algorithm with strong performance
guarantees under broad conditions\footnote{Random sampling approaches
  guarantee near-optimal estimates, but have complexity exponential in
  the rank of the matrix $L_0$. Trimming algorithms have comparatively
  lower computational complexity, but guarantee only locally optimal
  solutions.}. The new problem we consider here can be considered as
an idealized version of {\em Robust PCA}, in which we aim to recover a
{\em low-rank} matrix $L_0$ from highly corrupted measurements $M =
L_0 + S_0$. Unlike the small noise term $N_0$ in classical PCA, the
entries in $S_0$ can have arbitrarily large magnitude, and their
support is assumed to be {\em sparse} but unknown\footnote{The unknown
  support of the errors makes the problem more difficult than the
  matrix completion problem that has been recently much studied.}.

\paragraph{Applications.} There are many important applications in
which the data under study can naturally be modeled as a
low-rank plus a sparse contribution. All the statistical applications,
in which robust principal components are sought, of course fit our
model. Below, we give examples inspired by contemporary challenges in
computer science, and note that depending on the applications, either
the low-rank component or the sparse component could be the object of
interest:
\begin{itemize}
\item {\em Video Surveillance.} Given a sequence of surveillance video
  frames, we often need to identify activities that stand out from the
  background. If we stack the video frames as columns of a matrix $M$,
  then the low-rank component $L_0$ naturally corresponds to the
  stationary background and the sparse component $S_0$ captures the
  moving objects in the foreground. However, each image frame has
  thousands or tens of thousands of pixels, and each video fragment
  contains hundreds or thousands of frames. It would be impossible to
  decompose $M$ in such a way unless we have a truly scalable solution
  to this problem. In Section \ref{sec:experiments}, we will show the
  results of our algorithm on video decomposition.

\item {\em Face Recognition.} It is well known that images of a
  convex, Lambertian surface under varying illuminations span a
  low-dimensional subspace \cite{Basri2003-PAMI}. This fact has been
  a main reason why low-dimensional models are mostly effective for
  imagery data. In particular, images of a human's face can be
  well-approximated by a low-dimensional subspace. Being able to
  correctly retrieve this subspace is crucial in many applications
  such as face recognition and alignment. However, realistic face
  images often suffer from self-shadowing, specularities, or
  saturations in brightness, which make this a difficult task and
  subsequently compromise the recognition performance. In Section
  \ref{sec:experiments}, we will show how our method is able to
  effectively remove such defects in face images.

\item {\em Latent Semantic Indexing.} Web search engines often need to
  analyze and index the content of an enormous corpus of documents. A
  popular scheme is the {\em Latent Semantic Indexing} (LSI)
  \cite{Dewester1990-JSIS,Papadimitriou}. The basic idea is to gather
  a document-versus-term matrix $M$ whose entries typically encode the
  relevance of a term (or a word) to a document such as the frequency
  it appears in the document (e.g. the TF/IDF). PCA (or SVD) has
  traditionally been used to decompose the matrix as a low-rank part
  plus a residual, which is not necessarily sparse (as we would like).
  If we were able to decompose $M$ as a sum of a low-rank component
  $L_0$ and a sparse component $S_0$, then $L_0$ could capture common
  words used in all the documents while $S_0$ captures the few key
  words that best distinguish each document from others.

\item {\em Ranking and Collaborative Filtering.} The problem of
  anticipating user tastes is gaining increasing importance in online
  commerce and advertisement. Companies now routinely collect user
  rankings for various products, e.g., movies, books, games, or web
  tools, among which the Netflix Prize for movie ranking is the best
  known \cite{Netflix}. The problem is to use incomplete rankings
  provided by the users on some of the products to predict the
  preference of any given user on any of the products. This problem is
  typically cast as a low-rank matrix completion problem. However, as
  the data collection process often lacks control or is sometimes even
  {\it ad hoc} -- a small portion of the available rankings could be
  noisy and even tampered with. The problem is more challenging since
  we need to simultaneously complete the matrix and correct the
  errors.  That is, we need to infer a low-rank matrix $L_0$ from a
  set of incomplete and corrupted entries.  In Section
  \ref{sec:matrix-completion}, we will see how our results can be
  extended to this situation.
\end{itemize}
Similar problems also arise in many other applications such as
graphical model learning, linear system identification, and coherence
decomposition in optical systems, as discussed in
\cite{Venkat-09}. All in all, the new applications we have listed
above require solving the low-rank and sparse decomposition problem
for matrices of extremely high dimension and under much broader
conditions, a goal this paper aims to achieve.


\subsection{A surprising message} 
\label{sec:surprise}

At first sight, the separation problem seems impossible to solve since
the number of unknowns to infer for $L_0$ and $S_0$ is twice as many
as the given measurements in $M \in \R^{n_1\times n_2}$. Furthermore,
it seems even more daunting that we expect to reliably obtain the
low-rank matrix $L_0$ with errors in $S_0$ of arbitrarily large
magnitude.

In this paper, we are going to see that very surprisingly, not only
can this problem be solved, it can be solved by {\em tractable} convex
optimization. Let $\|M\|_* := \sum_i \sigma_i(M)$ denote the nuclear
norm of the matrix $M$, i.e.~the sum of the singular values of $M$,
and let $\|M\|_1 = \sum_{ij} |M_{ij}|$ denote the $\ell_1$-norm of $M$
seen as a long vector in $\R^{n_1 \times n_2}$.  Then we will show
that under rather weak assumptions, the {\em\name}\, (PCP) estimate
solving\footnote{Although the name naturally suggests an emphasis on
  the recovery of the low-rank component, we reiterate that in some
  applications, the sparse component truly is the object of interest.} 
\begin{equation}
\label{eq:sdp}
  \begin{array}{ll}
    \text{minimize}   & \quad \|L\|_* + \lambda \|S\|_1\\
    \text{subject to} & \quad L + S = M 
  \end{array}
\end{equation}
exactly recovers the low-rank $L_0$ and the sparse
$S_0$. Theoretically, this is guaranteed to work even if the rank of
$L_0$ grows almost linearly in the dimension of the matrix, and the
errors in $S_0$ are up to a constant fraction of all
entries. Algorithmically, we will see that the above problem can be
solved by efficient and scalable algorithms, at a cost not so much
higher than the classical PCA. Empirically, our simulations and
experiments suggest this works under surprisingly broad conditions for
many types of real data. In Section \ref{sec:connections}, we will
comment on the similar approach taken in the paper \cite{Venkat-09},
which was released during the preparation of this manuscript.

\subsection{When does separation make sense?}
\label{sec:incoherence}

A normal reaction is that the objectives of this paper cannot be
met. Indeed, there seems to not be enough information to perfectly
disentangle the low-rank and the sparse components. And indeed, there
is some truth to this, since there obviously is an identifiability
issue. For instance, suppose the matrix $M$ is equal to $e_1 e_1^*$
(this matrix has a one in the top left corner and zeros everywhere
else). Then since $M$ is both sparse and low-rank, how can we decide
whether it is low-rank or sparse? To make the problem meaningful, we
need to impose that the low-rank component $L_0$ is not sparse.  In
this paper, we will borrow the general notion of incoherence
introduced in \cite{CR08} for the matrix completion problem; this is
an assumption concerning the singular vectors of the low-rank
component. Write the singular value decomposition of $L_0 \in \R^{n_1
  \times n_2}$ as
\[
L_0 = U \Sigma V^* = \sum_{i = 1}^r \sigma_i u_i v_i^*, 
\]
where $r$ is the rank of the matrix, $\sigma_1, \ldots, \sigma_r$ are
the positive singular values, and $U = [u_1, \ldots, u_r]$, $V = [v_1,
\ldots, v_r]$ are the matrices of left- and right-singular
vectors. Then the incoherence condition with parameter $\mu$ states
that
\begin{equation}
  \label{eq:PU}
  \max_i \|U^* e_i\|^2 \le \frac{\mu r}{n_1}, \quad \max_i \|V^*e_i\|^2 \le
   \frac{\mu r}{n_2}, 
\end{equation}
and 
\begin{equation}
  \label{eq:UV}
  \|UV^*\|_\infty \le  \sqrt{\frac{\mu r}{n_1 n_2}}. 
\end{equation}
Here and below, $\|M\|_\infty = \max_{i,j} |M_{ij}|$, i.e.~is the
$\ell_\infty$ norm of $M$ seen as a long vector. Note that since the
orthogonal projection $P_U$ onto the column space of $U$ is given by
$P_U = UU^*$, \eqref{eq:PU} is equivalent to $\max_i \|P_U e_i\|^2 \le
\mu r/n_1$, and similarly for $P_V$.  As discussed in earlier
references \cite{CR08,CT09,GrossMC}, the incoherence condition asserts
that for small values of $\mu$, the singular vectors are reasonably
spread out -- in other words, not sparse.

Another identifiability issue arises if the sparse matrix has
low-rank.  This will occur if, say, all the nonzero entries of $S$
occur in a column or in a few columns. Suppose for instance, that the
first column of $S_0$ is the opposite of that of $L_0$, and that all
the other columns of $S_0$ vanish. Then it is clear that we would not
be able to recover $L_0$ and $S_0$ by any method whatsoever since $M =
L_0 + S_0$ would have a column space equal to, or included in that of
$L_0$. To avoid such meaningless situations, we will assume that the
sparsity pattern of the sparse component is selected uniformly at
random.

\subsection{Main result}
\label{sec:main}

The surprise is that under these minimal assumptions, the simple PCP
solution perfectly recovers the low-rank and the sparse components,
provided of course that the rank of the low-rank component is not too
large, and that the sparse component is reasonably sparse. Below,
$n_{(1)} = \text{max}(n_1,n_2)$ and $n_{(2)} = \text{min}(n_1,n_2)$.
\begin{theorem}
\label{teo:main}
Suppose $L_0$ is $n \times n$, obeys \eqref{eq:PU}--\eqref{eq:UV}, and
that the support set of $S_0$ is uniformly distributed among all sets
of cardinality $m$. Then there is a numerical constant $c$ such that
with probability at least $1 - c n^{-10}$ (over the choice of support
of $S_0$), \name\, \eqref{eq:sdp} with $\lambda = 1/\sqrt{n}$ is {\em
  exact}, i.e.~$\hat L = L_0$ and $\hat S = S_0$, provided that
\begin{equation}
\label{eq:main}
\rank(L_0) \le \rho_r  n\, \mu^{-1} (\log n)^{-2} \quad \text{ and } 
\quad m \le \rho_s\, n^2.
\end{equation}
Above, $\rho_r$ and $\rho_s$ are positive numerical constants. In the
general rectangular case where $L_0$ is $n_1 \times n_2$, PCP with
$\lambda = 1/\sqrt{n_{(1)}}$ succeeds with probability at least $1 - c
n_{(1)}^{-10}$, provided that $\rank(L_0) \le \rho_r n_{(2)} \,
\mu^{-1} (\log n_{(1)})^{-2}$ and $m \le \rho_s\, n_1 n_2$.
\end{theorem}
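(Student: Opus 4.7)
The plan is to certify $(L_0, S_0)$ as the unique optimal solution of \eqref{eq:sdp} via a dual construction. Writing the SVD $L_0 = U\Sigma V^*$ and setting $T = \{UA^* + BV^* : A, B \text{ arbitrary}\}$ (the tangent space to the low-rank variety at $L_0$) and $\Omega = \operatorname{supp}(S_0)$, convex duality says that $(L_0, S_0)$ is optimal as soon as there exists a dual matrix $Y$ with
\begin{equation*}
  \PT Y = UV^*, \quad \|\PTp Y\| \le 1, \qquad \PO Y = \lambda \sgn(S_0), \quad \|\POc Y\|_\infty \le \lambda.
\end{equation*}
To upgrade optimality to uniqueness, I would aim to make the two inequalities strict, say $\|\PTp Y\| < 1/2$ and $\|\POc Y\|_\infty < \lambda/2$, and combine them with the geometric \emph{injectivity} bound $\|\PT \PO \PT\| \le 1/2$. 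A short perturbation argument (decompose any feasible $H$ along $T, T^\perp, \Omega, \Omega^\perp$ and play the two strict bounds against each other, using $\iprod{Y}{H}$ as the pivot) then shows that any $(L_0 + H, S_0 - H)$ with $H \ne 0$ strictly increases the objective, with the injectivity used to rule out $H \in T$ supported on $\Omega$.

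I would first pass, by a standard sandwich argument, from the uniform model on $\Omega$ to a Bernoulli model of parameter $\rho = m/(n_1 n_2)$. The injectivity step is then a concentration estimate: the non-commutative Bernstein inequality applied to the random operator $\PT \PO \PT$, combined with the tangent-space incoherence bound $\|\PT(e_i e_j^*)\|_F^2 \le 2\mu r/n_{(2)}$ that follows from \eqref{eq:PU}, yields $\|\PT \PO \PT - \rho \PT\| \le 1/2$ with high probability once $\rho_r$ in \eqref{eq:main} is a small enough constant. The heart of the argument is then the construction of the certificate $Y = Y^L + Y^S$. For $Y^L$ I would use a Gross-style \emph{golfing scheme} on $\Omega^c$: partition $\Omega^c$ into $j_0 \asymp \log n_{(1)}$ independent Bernoulli blocks $\Omega_1, \dots, \Omega_{j_0}$ of parameter $q$, set $Q_0 = UV^*$, iterate $Q_j = (\OpId - q^{-1}\PT \POj) Q_{j-1}$, and take $Y^L = \sum_{j=1}^{j_0} q^{-1} \POj Q_{j-1}$. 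Each step is a random contraction on $T$ with high probability, so $\PT Y^L - UV^* = -Q_{j_0}$ shrinks geometrically in Frobenius norm, while $Y^L$ lives entirely on $\Omega^c$. For $Y^S$ I would invert on $\Omega$ via the Neumann series $Y^S = \lambda (\PO - \PO \PT \PO)^{-1} \sgn(S_0)$, which converges by the injectivity step and makes $\PO Y = \lambda \sgn(S_0)$ exactly.

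The remaining step is to verify the two strict norm bounds. The bound $\|\PTp Y\| < 1/2$ is comparatively benign: the golfing contribution is handled by $\|\PTp Y^L\| \le \|Q_{j_0}\|_F$, which is tiny after $\log n_{(1)}$ rounds, while the $Y^S$ contribution reduces to a matrix Bernstein bound on $\PTp \PO \sgn(S_0)$, exploiting that the signs are independent and $|\Omega|$ is small. The main obstacle is the entrywise bound $\|\POc Y\|_\infty < \lambda/2$: one must propagate an $\ell_\infty$ bound through the golfing recursion, establishing that with high probability each operator $\OpId - q^{-1}\PT \POj$ contracts the $\ell_\infty$ norm on $T$ by a factor of $1/2$. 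This is a vector-valued Bernstein estimate performed entry by entry, and is the step where the joint incoherence bound \eqref{eq:UV} on $\|UV^*\|_\infty$ does the real work. The scale $\lambda = 1/\sqrt{n_{(1)}}$ is precisely the value that balances $\|UV^*\|_\infty \lesssim \sqrt{\mu r/(n_1 n_2)}$ on the nuclear-norm side against the random-sign fluctuations $\lambda \sqrt{|\Omega|}$ on the $\ell_1$ side, so no other choice would make both strict bounds go through simultaneously. With the two strict bounds and the injectivity in hand, the uniqueness argument from the first paragraph concludes that $(\hat L, \hat S) = (L_0, S_0)$, and the probability bookkeeping over the $\log n_{(1)}$ golfing steps gives the advertised $1 - c n_{(1)}^{-10}$ success probability.
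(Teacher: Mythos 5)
Your overall architecture is the paper's: reduction to a Bernoulli support model, injectivity of $\PO$ on $T$ via the concentration bound of \cite{CR08}, a golfing-scheme certificate for the low-rank part, a Neumann-series/least-squares certificate for the sparse part, and strict dual bounds yielding uniqueness. But there is a genuine missing idea: Theorem \ref{teo:main} makes \emph{no} randomness assumption on the values or signs of $S_0$ --- only the support is random --- whereas your verification of the sparse certificate explicitly ``exploit[s] that the signs are independent'' and balances $\lambda$ against ``random-sign fluctuations.'' The Hoeffding/Bernstein steps (conditional on $\Omega$) require the signs to be i.i.d.\ symmetric, i.e.\ zero mean; with fixed adversarial signs they fail outright. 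Concretely, if $\sgn(S)=\mathbf{1}\mathbf{1}^*$ on its support, then $E=\sgn(S_0)=\PO(\mathbf{1}\mathbf{1}^*)$ has $\|E\|\approx \rho n$, so the leading term $\lambda\|E\|\approx\rho\sqrt{n}$ of the sparse certificate diverges instead of being $O(\sqrt{\rho})$, and the $\ell_\infty$ bound on $\POc$ of the certificate suffers the same way. The paper closes exactly this gap with its derandomization argument (Theorem \ref{teo:derandom}), which uses the elimination theorem (Theorem \ref{teo:elimination}) to pass from fixed signs with $\Omega\sim\mathrm{Ber}(\rho)$ to i.i.d.\ symmetric signs with $\Omega\sim\mathrm{Ber}(2\rho)$; without this step (or a substitute), your argument proves the theorem only under an additional random-sign hypothesis, which is strictly weaker than the statement.

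A second, smaller but real, inconsistency: your sufficient conditions demand $\PT Y=UV^*$ and $\PO Y=\lambda\sgn(S_0)$ \emph{exactly}, yet the construction you propose delivers neither. The golfing output satisfies only $\|\PT Y^L-UV^*\|_F\le e^{-j_0}\sqrt{r}$, and $Y^S=\lambda(\PO-\PO\PT\PO)^{-1}\sgn(S_0)$, being supported on $\Omega$ but not in $T^\perp$, gives $\PO Y^S\neq\lambda\sgn(S_0)$ and pollutes the $T$-component. The paper resolves both points by taking $W^S=\lambda\PTp(\PO-\PO\PT\PO)^{-1}\sgn(S_0)$ (so that $W^S\in T^\perp$ and $\PO W^S=\lambda\sgn(S_0)$ exactly) and by replacing the exact optimality condition with the relaxed certificate of Lemma \ref{teo:kktdg}, which tolerates a residual on $\Omega$ of Frobenius norm at most $\lambda/4$ and absorbs it in the uniqueness argument via $\|\PO H\|_F\le\|\POc H\|_F+2\|\PTp H\|_F$ (its analogue with slack in $T$, needed for your variant, is Lemma \ref{teo:kktdg2}). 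You need such a quantitative relaxation --- with tolerance at the scale $\lambda$, not merely ``tiny'' --- or the $\PTp$-corrected least-squares certificate, for the construction and the optimality conditions to meet; as written, the certificate you build does not satisfy the conditions you require of it.
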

In other words, matrices $L_0$ whose singular vectors---or principal
components---are reasonably spread can be recovered with probability
nearly one from arbitrary and completely unknown corruption patterns
(as long as these are randomly distributed). In fact, this works for
large values of the rank, i.e.~on the order of $n/(\log n)^2$ when
$\mu$ is not too large.  We would like to emphasize that the only
`piece of randomness' in our assumptions concerns the locations of the
nonzero entries of $S_0$; everything else is deterministic.  In
particular, all we require about $L_0$ is that its singular vectors
are not spiky. Also, we make no assumption about the magnitudes or
signs of the nonzero entries of $S_0$. To avoid any ambiguity, our
model for $S_0$ is this: take an {\em arbitrary} matrix $S$ and set to
zero its entries on the random set $\Omega^c$; this gives $S_0$.

A rather remarkable fact is that there is no tuning parameter in our
algorithm. Under the assumption of the theorem, minimizing
\[
\|L\|_* + \frac{1}{\sqrt{n_{(1)}}} \|S\|_1, \quad n_{(1)} =
\textrm{max}(n_1,n_2)
\]
always returns the correct answer. This is surprising because one
might have expected that one would have to choose the right scalar
$\lambda$ to balance the two terms in $\|L\|_* + \lambda \|S\|_1$
appropriately (perhaps depending on their relative size). This is,
however, clearly not the case. In this sense, the choice $\lambda =
1/\sqrt{n_{(1)}}$ is universal.  Further, it is not a priori very
clear why $\lambda = 1/\sqrt{n_{(1)}}$ is a correct choice no matter
what $L_0$ and $S_0$ are. It is the mathematical analysis which
reveals the correctness of this value. In fact, the proof of the
theorem gives a whole range of correct values, and we have selected a
sufficiently simple value in that range.

Another comment is that one can obtain results with larger
probabilities of success, i.e.~of the form $1 - O(n^{-\beta})$ (or $1
- O(n_{(1)}^{-\beta})$) for $\beta > 0$ at the expense of reducing the
value of $\rho_r$.

\subsection{Connections with prior work and innovations}
\label{sec:connections}

The last year or two have seen the rapid development of a scientific
literature concerned with the {\em matrix completion} problem
introduced in \cite{CR08}, see also
\cite{CT09,Candes2009-PIEEE,Montanari-IT,GrossQuantum,GrossMC} and the
references therein. In a nutshell, the matrix completion problem is
that of recovering a low-rank matrix from only a small fraction of its
entries, and by extension, from a small number of linear functionals.
Although other methods have been proposed \cite{Montanari-IT}, the
method of choice is to use convex optimization
\cite{CT09,Candes2009-PIEEE,GrossQuantum,GrossMC,Recht2008-SR}: among
all the matrices consistent with the data, simply find that with
minimum nuclear norm. The papers cited above all prove the
mathematical validity of this approach, and our mathematical analysis
borrows ideas from this literature, and especially from those
pioneered in \cite{CR08}. Our methods also much rely on the powerful
ideas and elegant techniques introduced by David Gross in the context
of quantum-state tomography \cite{GrossQuantum,GrossMC}. In
particular, the clever golfing scheme \cite{GrossMC} plays a crucial
role in our analysis, and we introduce two novel modifications to this
scheme.

Despite these similarities, our ideas depart from the literature on
matrix completion on several fronts. First, our results obviously are
of a different nature.  Second, we could think of our separation
problem, and the recovery of the low-rank component, as a matrix
completion problem. Indeed, instead of having a fraction of observed
entries available and the other missing, we have a fraction available,
but do not know which one, while the other is not missing but entirely
corrupted altogether. Although, this is a harder problem, one way to
think of our algorithm is that it simultaneously detects the corrupted
entries, and perfectly fits the low-rank component to the remaining
entries that are deemed reliable. In this sense, our methodology and
results go beyond matrix completion.  Third, we introduce a novel
de-randomization argument that allows us to fix the signs of the
nonzero entries of the sparse component. We believe that this
technique will have many applications. One such application is in the
area of compressive sensing, where assumptions about the randomness of
the signs of a signal are common, and merely made out of convenience
rather than necessity; this is important because assuming independent
signal signs may not make much sense for many practical applications
when the involved signals can all be non-negative (such as images).

We mentioned earlier the related work \cite{Venkat-09}, which also
considers the problem of decomposing a given data matrix into sparse
and low-rank components, and gives sufficient conditions for convex
programming to succeed. These conditions are phrased in terms of two
quantities. The first is the maximum ratio between the $\ell_\infty$
norm and the operator norm, restricted to the subspace generated by
matrices whose row or column spaces agree with those of $L_0$. The
second is the maximum ratio between the operator norm and the
$\ell_\infty$ norm, restricted to the subspace of matrices that vanish
off the support of $S_0$. Chandrasekaran et.\ al.\ show that when the
product of these two quantities is small, then the recovery is exact
for a certain interval of the regularization parameter
\cite{Venkat-09}.

One very appealing aspect of this condition is that it is completely
deterministic: it does not depend on any random model for $L_0$ or
$S_0$. It yields a corollary that can be easily compared to our
result: suppose $n_1 = n_2 = n$ for simplicity, and let $\mu_0$ be the
smallest quantity satisfying (1.2), then correct recovery occurs
whenever
\[ 
\max_{j} \{i : [S_0]_{ij} \neq 0\} \times \sqrt{\mu_0 r/n} < 1/12.
\]
The left-hand side is at least as large as $\rho_s \sqrt{\mu_0 n r}$,
where $\rho_s$ is the fraction of entries of $S_0$ that are
nonzero. Since $\mu_0 \ge 1$ always, this statement only guarantees
recovery if $\rho_s = O((nr)^{-1/2})$; i.e., even when
$\mathrm{rank}(L_0) = O(1)$, only vanishing fractions of the entries
in $S_0$ can be nonzero.

In contrast, our result shows that for incoherent $L_0$, correct
recovery occurs with high probability for $\mathrm{rank}(L_0)$ on the
order of $n/[\mu \log^2 n]$ and a number of nonzero entries in $S_0$
on the order of $n^2$. That is, matrices of large rank can be
recovered from non-vanishing fractions of sparse errors. This
improvement comes at the expense of introducing one piece of
randomness: a uniform model on the error support.\footnote{Notice that
the bound of \cite{Venkat-09} depends only on the support of $S_0$,
and hence can be interpreted as a worst case result with respect to
the signs of $S_0$. In contrast, our result does not randomize over
the signs, but does assume that they are sampled from a fixed sign
pattern. Although we do not pursue it here due to space limitations,
our analysis also yields a result which holds for worst case sign
patterns, and guarantees correct recovery with $\mathrm{rank}(L_0) =
O(1)$, and a sparsity pattern of cardinality $\rho n_1 n_2$ for some $\rho
> 0$.}

Our analysis has one additional advantage, which is of significant
practical importance: it identifies a simple, non-adaptive choice of
the regularization parameter $\lambda$. In contrast, the conditions on
the regularization parameter given by Chandrasekaran et al.~depend on
quantities which in practice are not known a-priori. The experimental
section of \cite{Venkat-09} suggests searching for the correct
$\lambda$ by solving many convex programs. Our result, on the other
hand, demonstrates that the simple choice $\lambda = 1/\sqrt{n}$ works
with high probability for recovering any square incoherent matrix.


\subsection{Implications for matrix completion from grossly corrupted
  data}
\label{sec:matrix-completion}

We have seen that our main result asserts that it is possible to
recover a low-rank matrix even though a significant fraction of its
entries are corrupted. In some applications, however, some of the
entries may be missing as well, and this section addresses this
situation. Let $\PO$ be the orthogonal projection onto the linear
space of matrices supported on $\Omega \subset [n_1] \times [n_2]$,
\[
\PO X = \begin{cases} X_{ij}, & (i,j) \in \Omega,\\
 0, & (i,j) \notin
  \Omega.
\end{cases}
\]
Then imagine we only have available a few entries of $L_0 + S_0$,
which we conveniently write as
\[
Y = \POobs(L_0 + S_0) = \POobs L_0 + S'_0;
\]
that is, we see only those entries $(i,j) \in \Obs \subset [n_1]
\times [n_2]$. 
This models the following problem: we wish to recover $L_0$ but only
see a few entries about $L_0$, and among those a fraction happens to
be corrupted, and we of course do not know which one. As is easily
seen, this is a significant extension of the matrix completion
problem, which seeks to recover $L_0$ from undersampled but otherwise
perfect data $\POobs L_0$.

We propose recovering $L_0$ by solving the following problem:
\begin{equation}
  \label{eq:sdp2}
   \begin{array}{ll}
     \text{minimize}   & \quad \|L\|_* + \lambda \|S\|_1\\
     \text{subject to} & \quad \POobs(L + S) = Y. 
  \end{array}
\end{equation}
In words, among all decompositions matching the available data,
\name\, finds the one that minimizes the weighted combination of the
nuclear norm, and of the $\ell_1$ norm. Our observation is that under
some conditions, this simple approach recovers the low-rank component
exactly. In fact, the techniques developed in this paper establish
this result:
\begin{theorem}
\label{teo:MCRobust}
Suppose $L_0$ is $n \times n$, obeys the conditions
\eqref{eq:PU}--\eqref{eq:UV}, and that $\Obs$ is uniformly distributed
among all sets of cardinality $m$ obeying $m = 0.1 n^2$.  Suppose for
simplicity, that each observed entry is corrupted with probability
$\tau$ independently of the others.  Then there is a numerical
constant $c$ such that with probability at least $1 - c n^{-10}$,
\name\, \eqref{eq:sdp2} with $\lambda = 1/\sqrt{0.1 n}$ is
exact, i.e.~$\hat L = L_0$, provided that
\begin{equation}
  \label{eq:MCrobust}
  \rank(L_0) \le \rho_r  \, n 
  \mu^{-1} (\log n)^{-2}, \quad \text{and} \quad   
  \tau \le \tau_s. 
\end{equation}
Above, $\rho_r$ and $\tau_s$ are positive numerical constants. For
general $n_1 \times n_2$ rectangular matrices, PCP with $\lambda =
1/\sqrt{0.1 n_{(1)}}$ succeeds from $m = 0.1 n_1 n_2$ corrupted
entries with probability at least $1 - c n_{(1)}^{-10}$, provided
that $\rank(L_0) \le \rho_r \, n_{(2)} \mu^{-1} (\log n_{(1)})^{-2}$.
\end{theorem}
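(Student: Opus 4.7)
My plan is to adapt the dual-certificate construction from the proof of Theorem~\ref{teo:main} to the partial-observation setting. Under the Bernoulli form of the sampling model, the set of \emph{uncorrupted observed} entries $\Omega_{g}:=\Obs\setminus\Omega_{e}$ (writing $\Omega_{e}$ for the error support) is itself a uniformly random subset of $[n_{1}]\times[n_{2}]$ of constant density $0.1(1-\tau)$, while $\Omega_{e}$ has density $0.1\tau$. Thus for \eqref{eq:sdp2}, $\Omega_{g}$ plays the role that $\Omega^{c}$ played for \eqref{eq:sdp}, and $\Omega_{e}$ plays the role of $\Omega$; the only genuinely new feature is that the certificate is required to vanish off $\Obs$. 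Writing the KKT conditions for \eqref{eq:sdp2}: since $\POobs(L+S)=Y$ leaves $L$ and $S$ unconstrained off $\Obs$, any optimal $S$ may be taken supported in $\Obs$, and subgradient calculus reduces optimality of $(L_{0},S_{0})$ to producing a dual matrix $\Lambda$ supported on $\Obs$ satisfying
\[
\PT\Lambda=UV^{\ast},\qquad \|\PTp\Lambda\|<1,\qquad \mathcal{P}_{\Omega_{e}}\Lambda=\lambda\,\mathrm{sgn}(S_{0}),\qquad \|\mathcal{P}_{\Omega_{g}}\Lambda\|_{\infty}<\lambda,
\]
together with the injectivity condition that $\mathcal{P}_{\Omega_{g}}$ be one-to-one on $T$; the latter is equivalent to $\|\PT\mathcal{P}_{\Omega_{e}\cup\Omega_{\mathrm{obs}}^{\perp}}\PT\|<1$, which at complementary density $0.9+0.1\tau<1$ follows from the incoherence hypothesis via the standard non-commutative Bernstein bound.

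I would then construct $\Lambda$ by a two-piece recipe exactly as for Theorem~\ref{teo:main}. Decompose $\Lambda=\Lambda^{S}+\Lambda^{L}$ with $\Lambda^{S}$ supported on $\Omega_{e}$ and $\Lambda^{L}$ on $\Omega_{g}$; then $\Lambda$ is automatically supported on $\Obs$. Take $\Lambda^{S}$ to be the Neumann-series ``least-squares sign dual'' used in Theorem~\ref{teo:main} (convergent because $\|\PT\mathcal{P}_{\Omega_{e}}\|<1$ whp at density $0.1\tau$), and build $\Lambda^{L}$ by Gross's golfing scheme on $\Omega_{g}$: partition $\Omega_{g}$ into $j_{0}=C\log n$ independent Bernoulli pieces $\Omega_{1},\ldots,\Omega_{j_{0}}$ of per-piece rate $q=0.1(1-\tau)/j_{0}$, set $Z_{0}=UV^{\ast}-\PT\Lambda^{S}$, and iterate $Y_{j}=Y_{j-1}+q^{-1}\mathcal{P}_{\Omega_{j}}Z_{j-1}$, $Z_{j}=UV^{\ast}-\PT Y_{j}$. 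Each increment lives in $\Omega_{j}\subset\Omega_{g}\subset\Obs$, so the support constraint on $\Lambda$ is automatic.

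The main obstacle is verifying the three quantitative bounds $\|\PT\Lambda-UV^{\ast}\|_{F}$ exponentially small, $\|\PTp\Lambda\|<1/2$, and $\|\mathcal{P}_{\Omega_{g}}\Lambda\|_{\infty}<\lambda/2$. These are exactly the three workhorse estimates of Theorem~\ref{teo:main}, each obtained from \eqref{eq:PU}--\eqref{eq:UV}, non-commutative Bernstein applied to each golfing increment, and the sign-derandomization argument. The only adaptations needed are that the ``clean'' density is $0.1(1-\tau)$ rather than $1-\rho_{s}$ and the ``error'' density is $0.1\tau$ rather than $\rho_{s}$; both remain bounded away from $0$ and $1$, so the same concentration inequalities go through with the factor $0.1$ absorbed into $\rho_{r}$ and $\tau_{s}$. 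The scaling $\lambda=1/\sqrt{0.1\,n_{(1)}}$ is the natural analogue of $\lambda=1/\sqrt{n_{(1)}}$: at the effective sample size $0.1\,n_{1}n_{2}$, it is the one that balances $\lambda\|\mathrm{sgn}(S_{0})\|$ against $\|UV^{\ast}\|_{F}$ in the $T^{\perp}$ operator-norm bound.
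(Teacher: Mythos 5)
There is a genuine gap, and it sits exactly where the partial--observation problem differs from Theorem \ref{teo:main}: the construction of the sign part of the certificate. The least--squares dual from Theorem \ref{teo:main} is $\lambda\,\PTp(\PO-\PO\PT\PO)^{-1}\sgn(S_0)$, and because of the outer projection $\PTp$ its support spreads over \emph{all} entries, in particular over the unobserved set $\Obs^c$. So you cannot simultaneously declare $\Lambda^{S}$ to be ``the Neumann--series least--squares sign dual used in Theorem \ref{teo:main}'' and to be supported on $\Omega_e$; and since your golfing piece $\Lambda^{L}$ is confined to $\Omega_g\subset\Obs$, it cannot cancel the leakage of $\Lambda^{S}$ onto $\Obs^c$, so the certificate violates the support constraint that you yourself identified as the new feature. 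If instead you drop the $\PTp$ so that $\Lambda^{S}$ really is supported on $\Omega_e$, then your disjoint--support decomposition gives $\mathcal{P}_{\Omega_e}\Lambda=\Lambda^{S}$, and the requirement $\mathcal{P}_{\Omega_e}\Lambda=\lambda\,\sgn(S_0')$ forces $\Lambda^{S}=\lambda\,\sgn(S_0')$ exactly -- the Neumann series plays no role, and the whole burden (correcting the $T$--component $\lambda\PT\sgn(S_0')$, plus new $\ell_\infty$ and operator--norm estimates for it) is silently shifted onto the golfing step, none of which your outline supplies. The paper resolves this tension differently: it performs least squares orthogonally to the \emph{enlarged} subspace $T+\Obs^\perp$, setting $W^{S}=\lambda(\OpId-\mathcal{P}_{T+\Obs^\perp})(\mathcal{P}_{\Omega}-\mathcal{P}_{\Omega}\mathcal{P}_{T+\Obs^\perp}\mathcal{P}_{\Omega})^{-1}\sgn(S_0')$ (here $\Omega$ is the corrupted observed set), which lies in $T^\perp$, vanishes off $\Obs$, and restricts to $\lambda\sgn(S_0')$ on $\Omega$. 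Making that series converge requires a genuinely new estimate, $\|\mathcal{P}_{\Omega}\mathcal{P}_{T+\Obs^\perp}\mathcal{P}_{\Omega}\|\le 2\tau'$, proved via the identity $\mathcal{P}_{\Omega}\mathcal{P}_{T+\Obs^\perp}\mathcal{P}_{\Omega}=\mathcal{P}_{\Omega}\PT(\PT\POobs\PT)^{-1}\PT\mathcal{P}_{\Omega}$ together with the invertibility of $\PT\POobs\PT$ on $T$ at sampling rate $p_0=0.1$ (Theorem \ref{teo:rudelson}). This ingredient has no counterpart in Theorem \ref{teo:main} and is missing from your argument.

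A second, smaller gap: your sufficient conditions are the exact ones ($\PT\Lambda=UV^*$), while your construction only produces $\PT\Lambda\approx UV^*$. Bridging this needs an inexact--duality lemma, and the one from the full--data case (Lemma \ref{teo:kktdg}) cannot be reused verbatim, because it rests on $\|\PO\PT\|\le 1/2$; here the relevant operator is $\mathcal{P}_{\Gamma^\perp}\PT$, where $\Gamma$ is the clean observed set, and $\Gamma^c$ has density $0.9+0.1\tau$, so this norm is close to $1$, not below $1/2$. This is why the paper proves the sharper Lemma \ref{teo:kktdg2}, demanding a residual of size $n^{-2}$ and the quantitative injectivity $\|\PT\mathcal{P}_{\Gamma^\perp}M\|_F\le n\|\PTp\mathcal{P}_{\Gamma^\perp}M\|_F$ (Lemma \ref{teo:xiao}). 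Your remaining reductions (Bernoulli sampling, derandomization of signs, golfing on the clean entries with rate $q\approx 0.1(1-\tau)/\log n$) do match the paper's, but without the two items above the proof does not go through.
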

In short, perfect recovery from incomplete and corrupted entries is
possible by convex optimization.

On the one hand, this result extends our previous result in the
following way. If all the entries are available, i.e.~$m = n_1 n_2$,
then this is Theorem \ref{teo:main}. On the other hand, it extends
matrix completion results. Indeed, if $\tau = 0$, we have a pure
matrix completion problem from about a fraction of the total number of
entries, and our theorem guarantees perfect recovery as long as $r$
obeys \eqref{eq:MCrobust}, which for large values of $r$, matches the
strongest results available. We remark that the recovery is exact,
however, via a different algorithm. To be sure, in matrix completion
one typically minimizes the nuclear norm $\|L\|_*$ subject to the
constraint $\POobs L = \POobs L_0$. Here, our program would solve
\begin{equation}
   \begin{array}{ll}
     \text{minimize}   & \quad \|L\|_* + \lambda \|S\|_1\\
     \text{subject to} & \quad \POobs(L + S) = \POobs L_0,  
  \end{array}
\end{equation}
and return $\hat L = L_0$, $\hat S = 0$! In this context, Theorem
\ref{teo:MCRobust} proves that matrix completion is stable vis a vis
gross errors.

\paragraph{Remark.} We have stated Theorem \ref{teo:MCRobust} merely
to explain how our ideas can easily be adapted to deal with low-rank
matrix recovery problems from undersampled and possibly grossly
corrupted data. In our statement, we have chosen to see 10\% of the
entries but, naturally, similar results hold for all other positive
fractions provided that they are large enough.  We would like to make
it clear that a more careful study is likely to lead to a stronger
version of Theorem \ref{teo:MCRobust}. In particular, for very low
rank matrices, we expect to see similar results holding with far fewer
observations; that is, in the limit of large matrices, from a
decreasing fraction of entries. In fact, our techniques would already
establish such sharper results but we prefer not to dwell on such
refinements at the moment, and leave this up for future work.

\subsection{Notation}
\label{sec:notation}

We provide a brief summary of the notations used throughout the
paper. We shall use five norms of a matrix. The first three are
functions of the singular values and they are: 1) the operator norm or
$2$-norm denoted by $\|X\|$; 2) the Frobenius norm denoted by
$\|X\|_F$; and 3) the nuclear norm denoted by $\|X\|_*$. The last two
are the $\ell_1$ and $\ell_\infty$ norms of a matrix seen as a long
vector, and are denoted by $\|X\|_1$ and $\|X\|_\infty$ respectively.
The Euclidean inner product between two matrices is defined by the
formula $\<X, Y\> := \trace({X}^* {Y})$, so that $\|X\|_F^2 =
\<X,X\>$.

Further, we will also manipulate linear transformations which act on
the space of matrices, and we will use calligraphic letters for these
operators as in $\PO X$. 
We shall also abuse notation by also letting $\Omega$ be the linear
space of matrices supported on $\Omega$. Then $\POc$ denotes the
projection onto the space of matrices supported on $\Omega^c$ so that
$\OpId = \PO + \POc$, where $\OpId$ is the identity operator. We will
consider a single norm for these, namely, the operator norm (the top
singular value) denoted by $\|\mathcal{A}\|$, which we may want to
think of as $\|\mathcal{A}\| = \sup_{\{\|X\|_F = 1\}} \|\mathcal{A}
X\|_F$; for instance, $\|\PO\| = 1$ whenever $\Omega \neq \emptyset$.

\subsection{Organization of the paper}
\label{sec:organization}

The paper is organized as follows. In Section \ref{sec:architecture},
we provide the key steps in the proof of Theorem \ref{teo:main}. This
proof depends upon on two critical properties of dual certificates,
which are established in the separate Section \ref{sec:lemmas}. The
reason why this is separate is that in a first reading, the reader
might want to jump to Section \ref{sec:experiments}, which presents
applications to video surveillance, and computer vision. Section
\ref{sec:algorithms} introduces algorithmic ideas to find the \name\,
solution when $M$ is of very large scale. We conclude the paper with a
discussion about future research directions in Section
\ref{sec:discussion}. Finally, the proof of Theorem \ref{teo:MCRobust}
is in the Appendix, Section \ref{sec:appendix}, together with those of
intermediate results.

\section{Architecture of the Proof}
\label{sec:architecture}

This section introduces the key steps underlying the proof of our main
result, Theorem \ref{teo:main}. We will prove the result for square
matrices for simplicity, and write $n = n_1 = n_2$. Of course, we
shall indicate where the argument needs to be modified to handle the
general case.  Before we start, it is helpful to review some basic
concepts and introduce additional notation that shall be used
throughout.  For a given scalar $x$, we denote by $\sgn(x)$ the sign
of $x$, which we take to be zero if $x = 0$. By extension, $\sgn(S)$
is the matrix whose entries are the signs of those of $S$.  We recall
that any subgradient of the $\ell_1$ norm at $S_0$ supported on
$\Omega$, is of the form
\[
\sgn(S_0) + F, 
\]
where $F$ vanishes on $\Omega$, i.e.~$\PO F = 0$, and obeys
$\|F\|_\infty \le 1$.

We will also manipulate the set of subgradients of the nuclear
norm. From now on, we will assume that $L_0$ of rank $r$ has the
singular value decomposition $U\Sigma V^*$, where $U, V \in
\R^{n\times r}$ just as in Section \ref{sec:incoherence}. Then any
subgradient of the nuclear norm at $L_0$ is of the form
\[
UV^* + W, 
\]
where $U^* W = 0$, $W V = 0$ and $\|W\| \le 1$. Denote by $T$ the
linear space of matrices 
\begin{equation}
  \label{eq:T}
  T := \{UX^* + YV^*, \, X, Y \in \R^{n \times r}\}, 
\end{equation}
and by $T^\perp$ its orthogonal complement. It is not hard to see that
taken together, $U^* W = 0$ and $W V = 0$ are equivalent to $\PT W =
0$, where $\PT$ is the orthogonal projection onto $T$. Another way to
put this is $\PTp W = W$. In passing, note that for any matrix $M$,
$\PTp M = (I - UU^*) M (I - VV^*)$, where we recognize that $I - UU^*$
is the projection onto the orthogonal complement of the linear space
spanned by the columns of $U$ and likewise for $(I - VV^*)$. A
consequence of this simple observation is that for any matrix $M$,
$\|\PTp M\| \le \|M\|$, a fact that we will use several times in the
sequel. Another consequence is that for any matrix of the form $e_i e_j^*$, 
\[
\|\PTp e_i e_j^*\|_F^2 = \|(I - UU^*) e_i\|^2 \|(I - VV^*) e_j\|^2 \ge
(1-\mu r/n)^2,
\]
where we have assumed $\mu r/n \le 1$. Since $\|\PT e_ie_j^*\|_F^2 +
\|\PTp e_i e_j^*\|_F^2 = 1$, this gives
\begin{equation}
  \label{eq:PTeiej}
  \|\PT e_i e_j^*\|_F \le \sqrt{\frac{2\mu r}{n}}. 
\end{equation}
For rectangular matrices, the estimate is $ \|\PT e_i e_j^*\|_F \le
\sqrt{\frac{2\mu r}{\min(n_1,n_2)}}$. 

Finally, in the sequel we will write that an event holds with high or
large probability whenever it holds with probability at least $1 -
O(n^{-10})$ (with $n_{(1)}$ in place of $n$ for rectangular matrices).

\subsection{An elimination theorem}
\label{sec:elimination}

We begin with a useful definition and an elementary result we
shall use a few times.
\begin{definition}
  We will say that $S'$ is a trimmed version of $S$ if
  $\text{supp}(S') \subset \text{supp}(S)$ and $S'_{ij} = S_{ij}$
  whenever $S'_{ij} \neq 0$. 
\end{definition}
In words, a trimmed version of $S$ is obtained by setting some of the
entries of $S$ to zero.  Having said this, the following intuitive
theorem asserts that if \name\, correctly
recovers the low-rank and sparse components of $M_0 = L_0 + S_0$, it
also correctly recovers the components of a matrix $M'_0 = L_0 + S'_0$
where $S'_0$ is a trimmed version of $S_0$. This is intuitive since
the problem is somehow easier as there are fewer things to recover.
\begin{theorem}
\label{teo:elimination}
Suppose the solution to \eqref{eq:sdp} with input data $M_0 = L_0 +
S_0$ is unique and exact, and consider $M'_0 = L_0 + S'_0$, where $S'_0$
is a trimmed version of $S_0$. Then the solution to \eqref{eq:sdp}
with input $M'_0$ is exact as well.
\end{theorem}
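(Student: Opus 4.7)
The plan is a direct feasibility-swapping argument exploiting uniqueness of the original optimum. Let $(\hat L, \hat S)$ denote any optimal solution of PCP with input $M_0' = L_0 + S_0'$. Decompose $S_0 = S_0' + S_0''$ where $S_0''$ collects precisely those entries of $S_0$ that were zeroed out to form the trimmed version $S_0'$; by construction $S_0'$ and $S_0''$ have disjoint supports, so $\|S_0\|_1 = \|S_0'\|_1 + \|S_0''\|_1$.

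The core observation is that $(\hat L, \hat S + S_0'')$ is feasible for the original problem with data $M_0$, since $\hat L + \hat S + S_0'' = M_0' + S_0'' = L_0 + S_0 = M_0$. By optimality of $(L_0,S_0)$ for $M_0$ and the triangle inequality,
\begin{equation*}
\|L_0\|_* + \lambda\|S_0\|_1 \;\le\; \|\hat L\|_* + \lambda\|\hat S + S_0''\|_1 \;\le\; \|\hat L\|_* + \lambda\|\hat S\|_1 + \lambda\|S_0''\|_1.
\end{equation*}
On the other hand, $(L_0,S_0')$ is feasible for the trimmed problem with data $M_0'$, so by optimality of $(\hat L, \hat S)$,
\begin{equation*}
\|\hat L\|_* + \lambda\|\hat S\|_1 \;\le\; \|L_0\|_* + \lambda\|S_0'\|_1.
\end{equation*}
Adding $\lambda\|S_0''\|_1$ to both sides and using $\|S_0\|_1 = \|S_0'\|_1 + \|S_0''\|_1$ closes the loop: every inequality above must be an equality.

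In particular, $(\hat L, \hat S + S_0'')$ achieves the optimal value of PCP with input $M_0$ and is feasible there. The hypothesis that the optimum for $M_0$ is \emph{unique} then forces $\hat L = L_0$ and $\hat S + S_0'' = S_0$, whence $\hat S = S_0'$. This holds for every optimal $(\hat L,\hat S)$, so PCP with input $M_0'$ recovers $(L_0, S_0')$ exactly. There is no real obstacle here; the only subtle point is recognizing that $S_0''$ must be \emph{added} to $\hat S$ (not subtracted from $S_0$) to build a feasible competitor for the original problem, and that uniqueness—rather than merely optimality—is what transfers across. The argument does not use any structural property of $L_0$ or the support of $S_0$, so it applies verbatim to rectangular matrices and to any value of $\lambda$.
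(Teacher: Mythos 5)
Your proof is correct and follows essentially the same route as the paper's: decompose $S_0 = S_0' + S_0''$, observe that $(\hat L, \hat S + S_0'')$ is feasible for the original problem, chain the optimality of $(\hat L,\hat S)$ for the trimmed problem with the triangle inequality, and invoke uniqueness of the original optimum to conclude $\hat L = L_0$, $\hat S = S_0'$. The only cosmetic difference is that you also cite optimality of $(L_0,S_0)$ as a lower bound to force equalities, which the paper's argument gets directly by comparing the feasible competitor's value to the optimal value before applying unicity.
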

\begin{proof}
  Write $S'_0 = \POzero S_0$ for some $\Omega_0 \subset [n]
  \times [n]$ and let $(\hat{L}, \hat{S})$ be the solution of
  \eqref{eq:sdp} with input $L_0+S'_0$. Then
\[
\|\hat{L}\|_* + \lambda \|\hat{S}\|_1 \le \|L_0\|_* + \lambda
\|\POzero S_0\|_1
\]
and, therefore,
\[
\|\hat{L}\|_* + \lambda \|\hat{S}\|_1 + \lambda \|\POzerop S_0\|_1
\le \|L_0\|_* + \lambda \|S_0\|_1.
\]
Note that $({\hat L}, \hat{S} + \POzerop S_0)$ is feasible
for the problem with input data $L_0 + S_0$, and since $\|\hat{S} +
\POzerop S_0\|_1 \le \|\hat{S}\|_1 +
\|\POzerop S_0\|_1$, we have
\[
\|\hat{L}\|_* + \lambda \|\hat{S} + \POzerop S_0\|_1 \le \|L_0\|_* +
\lambda \|S_0\|_1. 
\]
The right-hand side, however, is the optimal value, and by unicity of
the optimal solution, we must have $\hat{L} = L_0$, and $\hat{S} +
\POzerop S_0 = S_0$ or $\hat{S} = \POzero S_0 = S'_0$. This proves the
claim.
\end{proof} 

\paragraph{The Bernoulli model.} In Theorem \ref{teo:main},
probability is taken with respect to the uniformly random subset
$\Omega = \{(i,j) : S_{ij} \neq 0\}$ of cardinality $m$. In practice,
it is a little more convenient to work with the {\em Bernoulli model}
$\Omega = \{(i,j) : \delta_{ij} = 1\}$, where the $\delta_{ij}$'s are
i.i.d.~variables Bernoulli taking value one with probability $\rho$
and zero with probability $1-\rho$, so that the expected cardinality
of $\Omega$ is $\rho n^2$. From now on, we will write $\Omega \sim
\text{Ber}(\rho)$ as a shorthand for $\Omega$ is sampled from the
Bernoulli model with parameter $\rho$.

Since by Theorem \ref{teo:elimination}, the success of the algorithm
is monotone in $|\Omega|$, any guarantee proved for the Bernoulli
model holds for the uniform model as well, and vice versa, if we allow
for a vanishing shift in $\rho$ around $m/n^2$. The arguments
underlying this equivalence are standard, see \cite{CRT,CT09}, and may
be found in the Appendix for completeness.

\subsection{Derandomization}
\label{sec:derandom}

In Theorem \ref{teo:main}, the values of the nonzero entries of $S_0$
are fixed. It turns out that it is easier to prove the theorem under a
stronger assumption, which assumes that the signs of the nonzero
entries are independent symmetric Bernoulli variables, i.e.~take the
value $\pm 1$ with probability $1/2$ (independently of the choice of
the support set). The convenient theorem below shows that establishing
the result for random signs is sufficient to claim a similar result
for fixed signs.
\begin{theorem}
  \label{teo:derandom}
  Suppose $L_0$ obeys the conditions of Theorem \ref{teo:main} and
  that the locations of the nonzero entries of $S_0$ follow the
  Bernoulli model with parameter $2\rho_s$, and the signs of $S_0$ are
  i.i.d. $\pm 1$ as above (and independent from the locations). Then
  if the PCP solution is exact with high probability, then it is also
  exact with at least the same probability for the model in which the
  signs are fixed and the locations are sampled from the Bernoulli
  model with parameter $\rho_s$.
\end{theorem}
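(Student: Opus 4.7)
The plan is to couple the two probability models via an auxiliary matrix of i.i.d.\ random signs, so that the trimmed version of the random-sign sparse matrix has exactly the prescribed fixed signs, and then invoke the elimination theorem (Theorem~\ref{teo:elimination}) to transfer exact recovery from the random-sign ensemble to the fixed-sign one. Concretely, fix any target sign pattern $\Sigma\in\{-1,+1\}^{n\times n}$ and any target magnitudes $M_{ij}\ge 0$; the fixed-sign model at rate $\rho_s$ then produces
\[
S^{\mathrm{fix}}_0 \;=\; \sum_{(i,j)\in\Omega_0} M_{ij}\,\Sigma_{ij}\,e_i e_j^{*},
\qquad \Omega_0\sim\mathrm{Ber}(\rho_s).
\]

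For the coupling, sample $\Omega\sim\mathrm{Ber}(2\rho_s)$ together with an independent Rademacher matrix $E$ (entries $\pm 1$ with probability $\tfrac12$ each, i.i.d.), and form
\[
S^{\mathrm{rand}}_0 \;=\; \sum_{(i,j)\in\Omega} M_{ij}\, E_{ij}\,e_i e_j^{*},
\]
which realizes the random-sign model at rate $2\rho_s$ in the theorem's hypothesis. Now define the coupled support
\[
\Omega_0 \;:=\; \bigl\{(i,j)\in\Omega : E_{ij}=\Sigma_{ij}\bigr\}.
\]
Since $\P[(i,j)\in\Omega_0]=2\rho_s\cdot\tfrac12=\rho_s$ and these events are independent across entries, we obtain $\Omega_0\sim\mathrm{Ber}(\rho_s)$, matching the distribution demanded by the fixed-sign model.

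The key observation is that on $\Omega_0$ we have $M_{ij}E_{ij}=M_{ij}\Sigma_{ij}$, so $\POzero S^{\mathrm{rand}}_0 = S^{\mathrm{fix}}_0$; in the language of Section~\ref{sec:elimination}, $S^{\mathrm{fix}}_0$ is a trimmed version of $S^{\mathrm{rand}}_0$. By hypothesis PCP is exact for $L_0+S^{\mathrm{rand}}_0$ with probability at least $1-p$, and on that event Theorem~\ref{teo:elimination} guarantees that PCP is also exact for $L_0+S^{\mathrm{fix}}_0$. Integrating out the coupling,
\[
\P\bigl[\text{PCP exact for } L_0+S^{\mathrm{fix}}_0\bigr] \;\ge\; \P\bigl[\text{PCP exact for } L_0+S^{\mathrm{rand}}_0\bigr] \;\ge\; 1-p,
\]
which is the claim.

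I do not anticipate a serious obstacle: the entire argument is a one-line coupling once the setup is in place. The only point that requires care is the factor of two in the Bernoulli rate, which is forced precisely by the $\tfrac12$ probability that a random sign $E_{ij}$ agrees with the target $\Sigma_{ij}$ at a given location; any other choice would fail to produce a $\mathrm{Ber}(\rho_s)$ marginal for $\Omega_0$. One should also note that Theorem~\ref{teo:elimination} implicitly requires uniqueness of the PCP optimum, which is part of the standard ``exact recovery'' statement we are assuming for the random-sign ensemble, so that hypothesis is inherited automatically.
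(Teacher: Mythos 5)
Your coupling argument is correct and is essentially the paper's own proof: the paper likewise realizes the fixed-sign $\mathrm{Ber}(\rho_s)$ matrix as a trimmed version of the random-sign $\mathrm{Ber}(2\rho_s)$ matrix (via an elimination matrix $\Delta$ that kills entries where the random sign disagrees with the target sign), checks that the trimmed matrix has the right distribution, and then invokes Theorem~\ref{teo:elimination}. Your intersection $\Omega_0=\{(i,j)\in\Omega: E_{ij}=\Sigma_{ij}\}$ is just a cleaner phrasing of that same construction, including the factor-of-two bookkeeping.
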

This theorem is convenient because to prove our main result, we only
need to show that it is true in the case where the signs of the sparse
component are random. 
 
\begin{proof}
  Consider the model in which the signs are fixed. In this model, it
  is convenient to think of $S_0$ as $\PO S$, for some fixed matrix
  $S$, where $\Omega$ is sampled from the Bernoulli model with
  parameter $\rho_s$. Therefore, $S_0$ has independent components
  distributed as
 \begin{equation*}
   (S_0)_{ij} = \begin{cases} S_{ij}, &
     \text{w.~p. }
     \rho_s,\\
     0, & \text{w.~p. } 1-\rho_s.
\end{cases} 
\end{equation*}
Consider now a random sign matrix with i.i.d.~entries distributed as
\[
E_{ij} = \begin{cases} 1, & \text{w.~p. }
  \rho_s,\\
  0, & \text{w.~p. } 1-2\rho_s,\\
  -1, & \text{w.~p. } \rho_s,
\end{cases} 
\]
and an ``elimination'' matrix $\Delta$ with entries defined by
\[
\Delta_{ij} =  \begin{cases} 0, & \text{if }
    E_{ij} [\sgn(S)]_{ij} = -1,\\
1, & \text{otherwise}.
\end{cases} 
\]
Note that the entries of $\Delta$ are independent since they are
functions of independent variables.

Consider now $S'_0 = \Delta \circ (|S| \circ E)$, where $\circ$
denotes the Hadamard or componentwise product so that, $[S'_0]_{ij} =
\Delta_{ij} \, (|S_{ij}| E_{ij})$. Then we claim that $S'_0$ and $S_0$
have the same distribution. To see why this is true, it suffices by
independence to check that the marginals match. For $S_{ij} \neq 0$,
we have
\begin{align*}
  \P([S'_0]_{ij} = S_{ij}) & = \P(\Delta_{ij} = 1 \text{ and } E_{ij}
  =
  [\sgn(S)]_{ij})\\
  & = \P( E_{ij} [\sgn(S)]_{ij} \neq -1 \text{ and } E_{ij} =
  [\sgn(S)]_{ij})\\
  & = \P(E_{ij} = [\sgn(S)]_{ij}) = \rho_s,
\end{align*}
which establishes the claim. 

This construction allows to prove the theorem. Indeed, $|S| \circ E$
now obeys the random sign model, and by assumption, PCP recovers $|S|
\circ E$ with high probability. By the elimination theorem, this
program also recovers $S'_0 = \Delta \circ (|S| \circ E)$. Since
$S'_0$ and $S_0$ have the same distribution, the theorem follows.
\end{proof}

\subsection{Dual certificates}
\label{sec:kkt}

We introduce a simple condition for the pair $(L_0,S_0)$ to be the
unique optimal solution to \name. These conditions are stated in terms
of a dual vector, the existence of which certifies optimality. (Recall
that $\Omega$ is the space of matrices with the same support as the
sparse component $S_0$, and that $T$ is the space defined via the the
column and row spaces of the low-rank component $L_0$ \eqref{eq:T}.)
\begin{lemma}
\label{teo:kkt}
Assume that $\|\PO \PT\| < 1$. With the standard notations, $(L_0,
S_0)$ is the unique solution if there is a pair $(W, F)$ obeying
\[
UV^* + W = \lambda(\sgn(S_0) + F),  
\]
with $\PT W = 0$, $\|W\| < 1$, $\PO F = 0$ and $\|F\|_\infty < 1$.
\end{lemma}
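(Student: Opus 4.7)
The plan is to run the standard convex-analysis argument: compare the objective value at $(L_0, S_0)$ against that at any feasible perturbation $(L_0 + H, S_0 - H)$, and show that the objective strictly increases unless $H = 0$. Feasibility of the perturbation is automatic since $(L_0 + H) + (S_0 - H) = M$.

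First I would invoke the subgradient characterizations already recalled in the text. For any $W_0 \in T^\perp$ with $\|W_0\| \le 1$ and any $F_0 \in \Omega^\perp$ with $\|F_0\|_\infty \le 1$, convexity gives
\[
\|L_0 + H\|_* \ge \|L_0\|_* + \iprod{UV^* + W_0}{H}, \qquad \|S_0 - H\|_1 \ge \|S_0\|_1 - \iprod{\sgn(S_0) + F_0}{H}.
\]
Now I would choose the extremal subgradients: pick $W_0 \in T^\perp$, $\|W_0\| \le 1$, attaining $\iprod{W_0}{\PTp H} = \|\PTp H\|_*$, and $F_0 \in \Omega^\perp$, $\|F_0\|_\infty \le 1$, attaining $\iprod{F_0}{\POc H} = \|\POc H\|_1$. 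Summing the two bounds and using $\iprod{W_0}{H} = \iprod{W_0}{\PTp H}$ and $\iprod{F_0}{H} = \iprod{F_0}{\POc H}$, I obtain
\[
\|L_0 + H\|_* + \lambda\|S_0 - H\|_1 \ge \|L_0\|_* + \lambda\|S_0\|_1 + \|\PTp H\|_* + \lambda\|\POc H\|_1 + \iprod{UV^* - \lambda\,\sgn(S_0)}{H}.
\]

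Next I would plug in the dual certificate identity $UV^* - \lambda\,\sgn(S_0) = -W + \lambda F$. Using $\PT W = 0$ and $\PO F = 0$, this inner product equals $-\iprod{W}{\PTp H} + \lambda\iprod{F}{\POc H}$, which by duality of $(\|\cdot\|, \|\cdot\|_*)$ and $(\|\cdot\|_\infty, \|\cdot\|_1)$ is at least $-\|W\|\,\|\PTp H\|_* - \lambda\|F\|_\infty\,\|\POc H\|_1$. Combining,
\[
\|L_0 + H\|_* + \lambda\|S_0 - H\|_1 \ge \|L_0\|_* + \lambda\|S_0\|_1 + (1-\|W\|)\|\PTp H\|_* + \lambda(1-\|F\|_\infty)\|\POc H\|_1.
\]
Because $\|W\| < 1$ and $\|F\|_\infty < 1$, the right-hand side strictly exceeds $\|L_0\|_* + \lambda\|S_0\|_1$ unless both $\PTp H = 0$ and $\POc H = 0$, i.e.\ $H \in T \cap \Omega$.

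Finally, to rule out nonzero $H$ in $T \cap \Omega$ I would use the hypothesis $\|\PO\PT\| < 1$: any such $H$ satisfies $H = \PT H$ and $H = \PO H$, hence $H = \PO\PT H$, giving $\|H\|_F \le \|\PO\PT\|\,\|H\|_F$, forcing $H = 0$. The only mildly delicate point is justifying the existence of the extremal subgradients $W_0$ and $F_0$ above, but these come respectively from the SVD of $\PTp H$ (take $W_0 = \tilde U\tilde V^*$ where $\PTp H = \tilde U \tilde\Sigma \tilde V^*$, noting $\tilde U^* U = 0$ and $\tilde V^* V = 0$ so $W_0 \in T^\perp$) and from the sign pattern of $\POc H$; everything else is a clean duality manipulation.
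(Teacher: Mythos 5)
Your argument is the paper's own proof: compare the objective at the feasible perturbation $(L_0+H,\,S_0-H)$ using extremal subgradients, absorb the certificate term $\iprod{UV^*-\lambda\sgn(S_0)}{H}$ by duality, and rule out nonzero $H\in T\cap\Omega$ via $\|\PO\PT\|<1$. One sign slip: since the perturbation of $S_0$ is $-H$, the $\ell_1$ subgradient bound contributes $-\lambda\iprod{F_0}{\POc H}$, so to obtain the $+\lambda\|\POc H\|_1$ term in your displayed inequality you must choose $F_0$ with $\iprod{F_0}{\POc H}=-\|\POc H\|_1$, i.e.\ $F_0=-\sgn(\POc H)$, whereas your stated choice ($\iprod{F_0}{\POc H}=+\|\POc H\|_1$) would yield $-\lambda\|\POc H\|_1$ instead. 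With that sign corrected the argument is complete and coincides with the paper's, which makes exactly this choice of $F_0$.
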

Note that the condition $\|\PO \PT\| < 1$ is equivalent to saying that
$\Omega \cap T = \{0\}$.

\begin{proof}
  We consider a feasible perturbation $(L_0 + H, S_0 - H)$ and show
  that the objective increases whenever $H \neq 0$, hence proving that
  $(L_0,S_0)$ is the unique solution. To do this, let $UV^* + W_0$ be
  an arbitrary subgradient of the nuclear norm at $L_0$, and
  $\sgn(S_0) + F_0$ be an arbitrary subgradient of the $\ell_1$-norm
  at $S_0$. By definition of subgradients,
\[
\|L_0 + H\|_* + \lambda \|S_0 - H\|_1 \ge \|L_0\|_* + \lambda
\|S_0\|_1 + \<UV^* + W_0, H\> - \lambda \<\sgn(S_0) + F_0, H\>.
\]
Now pick $W_0$ such that $\<W_0, H\> = \|\PTp H\|_*$ and $F_0$ such
that $\<F_0, H\> = -\|\POc H\|_1$.\footnote{For instance, $F_0 = -
  \sgn(\POc H)$ is such a matrix. Also, by duality between the nuclear
  and the operator norm, there is a matrix obeying $\|W\| = 1$ such
  that $\<W, \PTp H\> = \|\PTp H\|_*$, and we just take $W_0 =
  \PTp(W)$.}  We have
\[
\|L_0 + H\|_* + \lambda \|S_0 - H\|_1 \ge \|L_0\|_* + \lambda
\|S_0\|_1 + \|\PTp H\|_* + \lambda \|\POc H\|_1 + \<UV^*- \lambda
\sgn(S_0), H\>.
\]
By assumption
\[
|\<UV^*- \lambda \sgn(S_0), H\>| \le |\<W,H\>| + \lambda |\<F, H\>| \le \beta
(\|\PTp H\|_* + \lambda \|\POc H\|_1)
\]
for $\beta = \text{max}(\|W\|, \|F\|_\infty) < 1$ and, thus,
\[
\|L_0 + H\|_* + \lambda \|S_0 - H\|_1 \ge \|L_0\|_* + \lambda
\|S_0\|_1 + (1-\beta) \Bigl(\|\PTp H\|_* + \lambda \|\POc H\|_1\Bigr). 
\]
Since by assumption, $\Omega \cap T = \{0\}$, we have $\|\PTp H\|_* +
\lambda \|\POc H\|_1 > 0$ unless $H = 0$. 
\end{proof}

Hence, we see that to prove exact recovery, it is sufficient to produce a
`dual certificate' $W$ obeying
\begin{equation}
\label{eq:dual-certif}
\begin{cases} W \in T^\perp, \\
  \|W\| < 1,\\
  \PO(UV^* + W) = \lambda \sgn(S_0),\\
  \|\POc(UV^* + W)\|_\infty < \lambda.\\
\end{cases}
\end{equation}
Our method, however, will produce with high probability a slightly
different certificate. The idea is to slightly relax the constraint
$\PO(UV^* + W) = \lambda \sgn(S_0)$, a relaxation that has been
introduced by David Gross in \cite{GrossMC} in a different context. We
prove the following lemma.
\begin{lemma}
\label{teo:kktdg}
Assume $\|\PO\PT\| \le 1/2$ and $\lambda < 1$.  Then with the same
notation, $(L_0, S_0)$ is the unique solution if there is a pair $(W,
F)$ obeying
\[
UV^* + W = \lambda(\sgn(S_0) + F + \PO D)
\]
with $\PT W = 0$ and $\|W\| \le \frac12$, $\PO F = 0$ and
$\|F\|_\infty \le \frac12$, and $\|\PO D\|_F \le \frac{1}{4}$.
\end{lemma}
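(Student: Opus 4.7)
The plan is to mirror the proof of Lemma \ref{teo:kkt}, with one additional step to absorb the new $\lambda \PO D$ term using the assumption $\|\PO\PT\|\le 1/2$. I would consider a feasible perturbation $(L_0 + H, S_0 - H)$ and, exactly as before, pick a subgradient $UV^* + W_0$ of the nuclear norm at $L_0$ with $W_0 = \PTp(W^\star)$ chosen so that $\langle W_0, H\rangle = \|\PTp H\|_*$, together with a subgradient $\sgn(S_0) + F_0$ of the $\ell_1$-norm at $S_0$ with $F_0 = -\sgn(\POc H)$ so that $\langle F_0, H\rangle = -\|\POc H\|_1$. This gives
\[
\|L_0 + H\|_* + \lambda\|S_0 - H\|_1 \ge \|L_0\|_* + \lambda\|S_0\|_1 + \|\PTp H\|_* + \lambda\|\POc H\|_1 + \langle UV^* - \lambda\sgn(S_0), H\rangle.
\]

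Next I would use the dual certificate hypothesis to rewrite $UV^* - \lambda\sgn(S_0) = -W + \lambda F + \lambda \PO D$ and estimate the three resulting inner products. Since $W \in T^\perp$, duality gives $|\langle W, H\rangle| = |\langle W, \PTp H\rangle| \le \|W\|\,\|\PTp H\|_* \le \tfrac12 \|\PTp H\|_*$; since $F$ is supported on $\Omega^c$, $|\langle F, H\rangle| \le \|F\|_\infty\,\|\POc H\|_1 \le \tfrac12 \|\POc H\|_1$; and by Cauchy--Schwarz, $|\langle \PO D, H\rangle| = |\langle \PO D, \PO H\rangle| \le \tfrac14 \|\PO H\|_F$.

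The main obstacle, relative to Lemma \ref{teo:kkt}, is the last term: $\|\PO H\|_F$ is a priori unrelated to $\|\PTp H\|_*$ or $\|\POc H\|_1$, so I need a way to absorb it. This is exactly where $\|\PO\PT\|\le 1/2$ enters. Writing $\PO H = \PO\PT H + \PO\PTp H$ gives $\|\PO H\|_F \le \tfrac12\|\PT H\|_F + \|\PTp H\|_F$, and writing $\PT H = \PT\PO H + \PT\POc H$ gives $\|\PT H\|_F \le \|\PO H\|_F + \|\POc H\|_F$. Combining and solving yields $\|\PO H\|_F \le 2\|\PTp H\|_F + \|\POc H\|_F$, which I then weaken via $\|\cdot\|_F \le \|\cdot\|_*$ and $\|\cdot\|_F \le \|\cdot\|_1$ to
\[
\|\PO H\|_F \;\le\; 2\|\PTp H\|_* + \|\POc H\|_1.
\]

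Plugging these bounds back into the subgradient inequality, the right-hand side is bounded below by
\[
\|L_0\|_* + \lambda\|S_0\|_1 + \tfrac{1-\lambda}{2}\|\PTp H\|_* + \tfrac{\lambda}{4}\|\POc H\|_1,
\]
after collecting coefficients and using $\lambda < 1$. Both coefficients are strictly positive, so the objective strictly increases unless $\|\PTp H\|_* = \|\POc H\|_1 = 0$, i.e.\ $H \in T \cap \Omega$. But the condition $\|\PO\PT\|\le 1/2 < 1$ forces $T \cap \Omega = \{0\}$ (if $H \in T \cap \Omega$ were nonzero, then $\|\PO\PT H\|_F = \|H\|_F$, contradicting $\|\PO\PT\|\le 1/2$), so $H = 0$ and $(L_0, S_0)$ is the unique optimum. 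The only genuinely new ingredient beyond the proof of Lemma \ref{teo:kkt} is the two-line derivation of the bound on $\|\PO H\|_F$.
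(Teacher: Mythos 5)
Your proof is correct and follows essentially the same route as the paper: the same subgradient inequality from Lemma \ref{teo:kkt}, Cauchy--Schwarz on the $\lambda\langle \PO D, H\rangle$ term, the identical absorption bound $\|\PO H\|_F \le \|\POc H\|_F + 2\|\PTp H\|_F$ obtained from $\|\PO\PT\|\le 1/2$, and the same final lower bound $\tfrac{1-\lambda}{2}\|\PTp H\|_* + \tfrac{\lambda}{4}\|\POc H\|_1$. The only difference is that you spell out the concluding step ($\PTp H = 0$ and $\POc H = 0$ force $H \in T\cap\Omega = \{0\}$) which the paper leaves implicit.
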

\begin{proof} 
  Following the proof of Lemma \ref{teo:kkt}, we have
\begin{align*}
  \|L_0 + H\|_* + \lambda \|S_0 - H\|_1 & \ge \|L_0\|_* + \lambda
  \|S_0\|_1 + \frac12 \Bigl(\|\PTp H\|_* + \lambda \|\POc H\|_1\Bigr)
  - \lambda \<\PO D,
  H\>\\
  & \ge \|L_0\|_* + \lambda \|S_0\|_1 + \frac12 \Bigl(\|\PTp H\|_* +
  \lambda \|\POc H\|_1\Bigr) - {\lambda \over 4} \|\PO H\|_F.
\end{align*}
Observe now that 
\begin{align*}
  \|\PO H\|_F & \le \|\PO \PT H\|_F  + \|\PO \PTp H\|_F  \\
  & \le \frac12 \|H\|_F + \|\PTp H\|_F \\
  & \le \frac12\|\PO H\|_F + \frac12 \|\POc H\|_F  + \|\PTp H\|_F 
\end{align*}
and, therefore, 
\[
 \|\PO H\|_F \le  \|\POc H\|_F  + 2 \|\PTp H\|_F. 
\]
In conclusion, 
\[
\|L_0 + H\|_* + \lambda \|S_0 - H\|_1 \ge \|L_0\|_* + \lambda
\|S_0\|_1 + \frac12 \Bigl((1-\lambda) \|\PTp H\|_* + {\lambda \over 2}
\|\POc H\|_1\Bigr),
\]
and the term between parenthesis is strictly positive when $H \neq
0$. 
\end{proof}

As a consequence of Lemma \ref{teo:kktdg}, it now suffices to produce
a dual certificate $W$ obeying
\begin{equation}
\label{eq:dual-certif-dg}
\begin{cases} W \in T^\perp, \\
  \|W\| < 1/2,\\
  \|\PO(UV^* -\lambda \sgn(S_0) + W)\|_F \le \lambda/4,\\
  \|\POc(UV^* + W)\|_\infty < \lambda/2.\\
\end{cases}
\end{equation}
Further, we would like to note that the existing literature on matrix
completion \cite{CR08} gives good bounds on $\|\PO\PT\|$, see Theorem
\ref{teo:rudelson} in Section \ref{sec:keys}.

\subsection{Dual certification via the golfing scheme}
\label{sec:golfing}

In the papers \cite{GrossMC,GrossQuantum}, Gross introduces a new
scheme, termed the golfing scheme, to construct a dual certificate for
the matrix completion problem, i.e.~the problem of reconstructing a
low-rank matrix from a subset of its entries. In this section, we will
adapt this clever golfing scheme, with two important modifications, to
our separation problem.

Before we introduce our construction, our model assumes that $\Omega
\sim \text{Ber}(\rho)$, or equivalently that $\Omega^c \sim
\text{Ber}(1-\rho)$. Now the distribution of $\Omega^c$ is the same as
that of $\Omega^c = \Omega_1 \cup \Omega_2 \cup \ldots \cup
\Omega_{j_0}$, where each $\Omega_j$ follows the Bernoulli model with
parameter $q$, which has an explicit expression. To see this, observe
that by independence, we just need to make sure that any entry $(i,j)$
is selected with the right probability. We have
\[
\P((i,j) \in \Omega) = \P(\text{Bin}(j_0,q) = 0) = (1-q)^{j_0},
\]
so that the two models are the same if 
\[
\rho = (1-q)^{j_0},
\]
hence justifying our assertion. Note that because of overlaps between
the $\Omega_j$'s, $q \ge (1-\rho)/j_0$.




We now propose constructing a dual certificate $$W = W^L + W^S,$$ where
each component is as follows:
\begin{enumerate}
\item {\em Construction of $W^L$ via the golfing scheme.}  Fix an
  integer $j_0 \ge 1$ whose value shall be discussed later, and let
  $\Omega_j$, $1 \le j \le j_0$, be defined as above so that $\Omega^c
  = \cup_{1 \le j \le j_0} \Omega_j$.  Then starting with $Y_0 = 0$,
  inductively define
\[
Y_j = Y_{j-1} + q^{-1} \POj \PT(UV^* - Y_{j-1}),
\]
and set 
\begin{equation}
  \label{eq:WL}
  W^L = \PTp Y_{j_0}. 
\end{equation}
This is a variation on the golfing scheme discussed in \cite{GrossMC},
which assumes that the $\Omega_j$'s are sampled with replacement, and
does not use the projector $\POj$ but something more complicated
taking into account the number of times a specific entry has been
sampled.

\item {\em Construction of $W^S$ via the method of least squares.}
  Assume that $\|\PO\PT\| < 1/2$. Then $\|\PO \PT \PO\| < 1/4$ and,
  thus, the operator $\PO - \PO \PT \PO$ mapping $\Omega$ onto itself
  is invertible; we denote its inverse by $(\PO - \PO \PT \PO)^{-1}$.
  We then set
  \begin{equation}
    \label{eq:WS}
W^S = \lambda \PTp (\PO - \PO \PT \PO)^{-1} \sgn(S_0).
\end{equation}
Clearly, an equivalent definition is via the convergent Neumann series 
\begin{equation}
\label{eq:neuman}
W^S = \lambda \PTp \sum_{k \ge 0} (\PO \PT \PO)^k \sgn(S_0). 
\end{equation}
Note that $\PO W^S = \lambda \PO(I - \PT) (\PO - \PO \PT \PO)^{-1}
\sgn(S_0) = \lambda \sgn(S_0)$.  With this, the construction has a
natural interpretation: one can verify that among all matrices $W \in
T^\perp$ obeying $\PO W = \lambda \sgn(S_0)$, $W^S$ is that with
minimum Frobenius norm.
\end{enumerate}
Since both $W^L$ and $W^S$ belong to $T^\perp$ and $\PO W^S = \lambda
\sgn(S_0)$, we will establish that $W^L + W^S$ is a valid dual
certificate if it obeys
\begin{equation}
\label{eq:dual-certif-use}
\begin{cases}
  \|W^L + W^S\| < 1/2,\\
  \|\PO(UV^* + W^L)\|_F \le \lambda/4,\\
  \|\POc(UV^* + W^L + W^S)\|_\infty < \lambda/2.\\
\end{cases}
\end{equation}

\subsection{Key lemmas}
\label{sec:keys}

We now state three lemmas, which taken collectively, establish our
main theorem. The first may be found in \cite{CR08}.
\begin{theorem}\cite[Theorem 4.1]{CR08}
  \label{teo:rudelson} 
  Suppose $\Omega_0$ is sampled from the Bernoulli model with
  parameter $\rho_0$. Then with high probability,
  \begin{equation}
    \label{eq:rudelson} 
    \|\PT - \rho_0^{-1} \PT \POzero \PT \| \le \epsilon,  
  \end{equation}
  provided that $\rho_0 \ge C_0 \, \epsilon^{-2} \, \frac{\mu r \log
    n}{n}$ for some numerical constant $C_0 > 0$ ($\mu$ is the
  incoherence parameter). For rectangular matrices, we need $\rho_0
  \ge C_0 \, \epsilon^{-2} \, \frac{\mu r \log n_{(1)}}{n_{(2)}}$. 
\end{theorem}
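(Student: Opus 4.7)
The plan is to recognize $\PT \POzero \PT$ as a sum of independent, self-adjoint, rank-one random operators on the space of matrices, and apply an operator Bernstein (noncommutative Bernstein) inequality. Throughout, write $\delta_{ij}$ for the i.i.d.\ $\text{Ber}(\rho_0)$ indicator that $(i,j)\in\Omega_0$, and observe the decomposition
\[
\POzero = \sum_{i,j} \delta_{ij}\, \langle e_i e_j^*,\cdot\rangle\, e_i e_j^*,
\]
so that, for any matrix $X \in T$,
\[
\PT\POzero\PT(X) \;=\; \sum_{i,j} \delta_{ij}\,\langle \PT(e_i e_j^*), X\rangle\, \PT(e_i e_j^*).
\]

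First, I would compute the expectation and center. Since $\{e_i e_j^*\}$ is an orthonormal basis of $\R^{n\times n}$ and $\PT$ is an orthogonal projector, $\sum_{i,j} \langle \PT(e_i e_j^*),\cdot\rangle\, \PT(e_i e_j^*) = \PT$, so $\E[\PT\POzero\PT] = \rho_0\PT$. Hence
\[
\rho_0^{-1}\PT\POzero\PT - \PT \;=\; \rho_0^{-1}\sum_{i,j} (\delta_{ij}-\rho_0)\, \mathcal{T}_{ij},
\qquad \mathcal{T}_{ij}(X) := \langle \PT(e_i e_j^*), X\rangle\, \PT(e_i e_j^*),
\]
which is a sum of independent, zero-mean, self-adjoint operators on the space of matrices.

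Second, I would estimate the two quantities needed for operator Bernstein: a uniform bound $R$ on the summands and a variance proxy $\sigma^2$. Each summand $\rho_0^{-1}(\delta_{ij}-\rho_0)\mathcal{T}_{ij}$ is rank one with operator norm at most $\rho_0^{-1}\|\PT(e_i e_j^*)\|_F^2$. By the incoherence bound (2.2), $\|\PT(e_i e_j^*)\|_F^2 \le 2\mu r/n$, so $R \le 2\mu r/(\rho_0 n)$. For the variance, using $\mathcal{T}_{ij}^2 = \|\PT(e_i e_j^*)\|_F^2\,\mathcal{T}_{ij}$ and $\mathrm{Var}(\delta_{ij}) = \rho_0(1-\rho_0) \le \rho_0$,
\[
\Bigl\|\sum_{i,j} \rho_0^{-2}\mathrm{Var}(\delta_{ij})\,\E[\mathcal{T}_{ij}^2]\Bigr\|
\;\le\; \frac{2\mu r}{\rho_0 n}\,\Bigl\|\sum_{i,j}\mathcal{T}_{ij}\Bigr\|
\;=\; \frac{2\mu r}{\rho_0 n}\,\|\PT\| \;=\; \frac{2\mu r}{\rho_0 n},
\]
so $\sigma^2 \le 2\mu r/(\rho_0 n)$.

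Finally, I would invoke the operator Bernstein inequality (for self-adjoint sums of independent operators on an $n^2$-dimensional Hilbert space),
\[
\P\bigl(\|\rho_0^{-1}\PT\POzero\PT - \PT\| > \epsilon\bigr)
\;\le\; 2n^2 \exp\!\left(-\frac{\epsilon^2/2}{\sigma^2 + R\epsilon/3}\right),
\]
and plug in the bounds above. For $\epsilon \le 1$, both terms in the denominator are of order $\mu r/(\rho_0 n)$, so the exponent is $\gtrsim \epsilon^2 \rho_0 n/(\mu r)$; choosing $\rho_0 \ge C_0\,\epsilon^{-2}\mu r(\log n)/n$ with $C_0$ large enough makes the probability at most $O(n^{-10})$. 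The rectangular case is identical once (2.2) is replaced by $\|\PT e_i e_j^*\|_F^2 \le 2\mu r/\min(n_1,n_2)$ and the covering dimension is taken to be $n_1 n_2$, giving the $\log n_{(1)}$ factor.

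The main obstacle is technical rather than conceptual: deploying the correct version of operator Bernstein for a sum of (not necessarily positive) self-adjoint operators, and carefully squeezing out both $R$ and $\sigma^2$ via the incoherence identity $\|\PT(e_i e_j^*)\|_F^2 \le 2\mu r/n$ so that both scale like $\mu r/(\rho_0 n)$; once that scaling is in hand, the stated sampling rate falls out immediately.
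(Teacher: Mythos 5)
Your proposal is correct, but be aware that the paper itself offers no proof of this statement: it is imported verbatim from \cite[Theorem 4.1]{CR08}, and the original argument there is genuinely different from yours. In \cite{CR08} the quantity $\|\PT - \rho_0^{-1}\PT\POzero\PT\|$ is first controlled \emph{in expectation} via a Rudelson-type selection estimate, itself obtained through symmetrization and the noncommutative Khintchine inequality, and the expectation bound is then upgraded to a high-probability statement by a separate concentration step. You instead write $\rho_0^{-1}\PT\POzero\PT - \PT$ directly as a sum of independent, centered, rank-one self-adjoint operators $\rho_0^{-1}(\delta_{ij}-\rho_0)\,\langle \PT(e_ie_j^*),\cdot\rangle\,\PT(e_ie_j^*)$ and apply the operator Bernstein inequality in one shot, with the incoherence estimate \eqref{eq:PTeiej} supplying both the uniform bound $R$ and the variance proxy $\sigma^2$, each of order $\mu r/(\rho_0 n)$; this is the now-standard streamlined proof (in the spirit of Gross's and Recht's later treatments, and of the paper's own Bernstein-based proof of Lemma \ref{teo:infty}), and it recovers the same sampling rate $\rho_0 \gtrsim \epsilon^{-2}\mu r \log n/n$ with failure probability $O(n^{-10})$ once $C_0$ is large, consistent with the paper's remark that the exponent can be boosted by inflating $C_0$. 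What your route buys is self-containedness and brevity at no loss in the rate; what the original route bought, at the time, was independence from matrix-valued Bernstein inequalities that were not yet in common circulation. Two cosmetic points: the operators $\mathcal{T}_{ij}$ are deterministic, so the expectation in your variance display is superfluous, and the dimensional factor in the Bernstein bound can be taken as $\dim T \le 2nr$ rather than $n^2$, though using $n^2$ only affects constants.
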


Among other things, this lemma is important because it shows that
$\|\PO\PT\| \le 1/2$, provided $|\Omega|$ is not too large. Indeed, if
$\Omega \sim \text{Ber}(\rho)$, we have
\[
\|\PT - (1-\rho)^{-1} \PT \POc \PT\| \le \epsilon,  
\] 
with the proviso that $1-\rho \ge C_0 \, \epsilon^{-2} \, \frac{\mu r \log
  n}{n}$. Note, however, that since $\OpId = \PO + \POc$, 
\[
\PT - (1-\rho)^{-1} \PT \POc \PT = (1-\rho)^{-1}(\PT \PO \PT - \rho\PT)
\]
and, therefore, by the triangular inequality
\[
\|\PT \PO \PT\| \le \epsilon(1-\rho) + \rho \|\PT\| = \rho +
\epsilon(1-\rho). 
\]
Since $\|\PO\PT\|^2 = \|\PT\PO\PT\|$, we have established the
following: 
\begin{corollary}
  \label{teo:POPT}
  Assume that $\Omega \sim \text{Ber}(\rho)$, then $\|\PO\PT\|^2 \le
  \rho + \epsilon$, provided that $1-\rho \ge C_0 \, \epsilon^{-2} \,
  \frac{\mu r \log n}{n}$, where $C_0$ is as in Theorem
  \ref{teo:rudelson}.  For rectangular matrices, the modification is
  as in Theorem \ref{teo:rudelson}.
\end{corollary}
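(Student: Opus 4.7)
My plan is to deduce the corollary directly from Theorem \ref{teo:rudelson} applied to the complementary support. Since $\Omega \sim \text{Ber}(\rho)$ is equivalent to $\Omega^c \sim \text{Ber}(1-\rho)$, I would invoke the theorem with $\Omega_0 = \Omega^c$ and $\rho_0 = 1-\rho$. The sampling hypothesis $1-\rho \ge C_0 \epsilon^{-2} \mu r \log n / n$ then yields, with high probability,
\[
\|\PT - (1-\rho)^{-1} \PT \POc \PT\| \le \epsilon.
\]

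Next, I would translate this bound into one involving $\PT \PO \PT$. Using $\OpId = \PO + \POc$ together with the fact that $\PT$ is an orthogonal projection (so $\PT^2 = \PT$), I can write
\[
\PT - (1-\rho)^{-1} \PT \POc \PT \;=\; (1-\rho)^{-1}\bigl(\PT\PO\PT - \rho \PT\bigr).
\]
Multiplying through by $(1-\rho)$ and applying the triangle inequality with $\|\PT\| = 1$, I obtain $\|\PT \PO \PT\| \le \rho + \epsilon(1-\rho) \le \rho + \epsilon$.

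Finally, I would pass from $\PT\PO\PT$ to $\PO\PT$ via the standard identity $\|\mathcal{A}\|^2 = \|\mathcal{A}^*\mathcal{A}\|$ for operators on the Hilbert space of matrices with the Frobenius inner product. Since $\PO$ and $\PT$ are self-adjoint, $(\PO\PT)^*(\PO\PT) = \PT\PO\PT$, hence $\|\PO\PT\|^2 = \|\PT\PO\PT\| \le \rho + \epsilon$, which is the claimed inequality.

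There is no real obstacle here: all three steps are routine once Theorem \ref{teo:rudelson} is in hand. The only subtlety worth flagging is the bookkeeping between the Bernoulli parameters on $\Omega$ and $\Omega^c$, and making sure the hypothesis on $1-\rho$ (rather than $\rho$) is what is required in order to apply the Rudelson-type bound to $\POc$. The rectangular-matrix version is obtained by the same manipulations, substituting the rectangular sampling condition $1-\rho \ge C_0 \epsilon^{-2}\mu r \log n_{(1)}/n_{(2)}$ from Theorem \ref{teo:rudelson}.
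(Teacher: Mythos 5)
Your proposal is correct and follows essentially the same route as the paper: apply Theorem \ref{teo:rudelson} to $\Omega^c \sim \text{Ber}(1-\rho)$, use $\OpId = \PO + \POc$ to rewrite the bound as one on $\PT\PO\PT - \rho\PT$, and conclude via $\|\PO\PT\|^2 = \|\PT\PO\PT\|$. No gaps.
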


The lemma below is proved is Section \ref{sec:lemmas}. 
\begin{lemma}
  \label{teo:WL}
  Assume that $\Omega \sim \text{Ber}(\rho)$ with parameter $\rho \le
  \rho_s$ for some $\rho_s > 0$. Set $j_0 = 2\lceil \log n \rceil$ (use
  $\log n_{(1)}$ for rectangular matrices). Then under the other
  assumptions of Theorem \ref{teo:main}, the matrix $W^L$
  \eqref{eq:WL} obeys
\begin{enumerate}
\item[(a)] $\|W^L\| < 1/4$,
\item[(b)] $\|\PO(UV^* +  W^L)\|_F < \lambda/4$,
\item[(c)] $\|\POc(UV^* + W^L)\|_\infty < \lambda/4$.
\end{enumerate}
\end{lemma}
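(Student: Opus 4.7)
\textbf{Proof proposal for Lemma \ref{teo:WL}.} The plan is to track the residual matrices $Z_j := \PT(UV^* - Y_j)$, which remain in $T$ and shrink geometrically thanks to Theorem \ref{teo:rudelson}. Direct substitution into the golfing recursion gives $Z_0 = UV^*$ and $Z_j = (\PT - q^{-1}\PT\POj\PT)\,Z_{j-1}$, while telescoping the update yields $Y_{j_0} = \sum_{j=1}^{j_0} q^{-1}\POj Z_{j-1}$ and the clean identity $UV^* + W^L = Y_{j_0} + Z_{j_0}$. Because $\Omega_j$ is drawn independently of the history $Z_{j-1}$, Theorem \ref{teo:rudelson} applied conditionally on $\Omega_1,\dots,\Omega_{j-1}$ gives $\|\PT - q^{-1}\PT\POj\PT\| \le 1/2$ with high probability on each slice, provided $q \gtrsim \mu r\log n/n$. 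With $j_0 = 2\lceil\log n\rceil$ and $\rho \le \rho_s$ a small constant, the relation $(1-q)^{j_0} = \rho$ forces $q$ on the order of $1/\log n$, which under the rank hypothesis of Theorem \ref{teo:main} comfortably clears this threshold and hence $\|Z_j\|_F \le 2^{-j}\sqrt{r}$ by induction.

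Part (b) then drops out almost for free: since $Y_{j_0}$ is supported on $\cup_j \Omega_j = \Omega^c$, one has $\PO(UV^* + W^L) = \PO Z_{j_0}$, and the Frobenius contraction yields $\|Z_{j_0}\|_F \le 2^{-j_0}\sqrt{r} \le n^{-2}\sqrt{r}$, far below $\lambda/4 = 1/(4\sqrt{n})$.

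For parts (a) and (c), I would invoke two auxiliary concentration statements (to be established in Section \ref{sec:lemmas}): for any deterministic $Z \in T$ and $\Omega' \sim \text{Ber}(q)$ independent of $Z$, with high probability
\[
\|(q^{-1}\mathcal{P}_{\Omega'} - \OpId)Z\| \;\lesssim\; \sqrt{\tfrac{n\log n}{q}}\;\|Z\|_\infty, \qquad \|Z - q^{-1}\PT\mathcal{P}_{\Omega'} Z\|_\infty \;\le\; \tfrac12\|Z\|_\infty.
\]
Iterating the second bound together with the incoherence estimate \eqref{eq:UV} produces $\|Z_j\|_\infty \le 2^{-j}\|UV^*\|_\infty \le 2^{-j}\sqrt{\mu r/n^2}$. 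For (a), write $W^L = \sum_{j=1}^{j_0}\PTp(q^{-1}\POj - \OpId)\,Z_{j-1}$ (valid since $\PTp Z_{j-1} = 0$), apply the triangle inequality and the first concentration bound, and sum the geometric series: this delivers $\|W^L\| \lesssim \sqrt{\mu r\log n/(qn)} \lesssim \sqrt{\mu r \log^2 n/n}$, which falls below $1/4$ precisely under $\rank(L_0) \le \rho_r\, n\mu^{-1}(\log n)^{-2}$. For (c), decompose $\POc(UV^* + W^L) = Y_{j_0} + \POc Z_{j_0}$ and bound $|[Y_{j_0}]_{ij}| \le q^{-1}\sum_j \|Z_{j-1}\|_\infty \le 2q^{-1}\|UV^*\|_\infty \sim \log n\cdot\sqrt{\mu r/n^2}$, again below $\lambda/4$ under the same rank condition, while $\POc Z_{j_0}$ is much smaller still.

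The main technical obstacle is the operator-norm concentration estimate $\|(q^{-1}\mathcal{P}_{\Omega'} - \OpId)Z\| \lesssim \sqrt{n\log n/q}\,\|Z\|_\infty$, which is what ultimately drives the $(\log n)^2$ factor in the final rank hypothesis; establishing it will require a matrix Bernstein or noncommutative Khintchine-type argument exploiting the independence of the Bernoulli selectors and the incoherence of $T$. Once that estimate (and its entrywise counterpart) are in hand, everything else is bookkeeping around the contraction and the telescoping identity.
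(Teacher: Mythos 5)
Your proposal follows essentially the same route as the paper: the same residuals $Z_j$, the same telescoping identity $UV^* + W^L = Y_{j_0} + Z_{j_0}$, and the same three ingredients — the Frobenius contraction of Theorem \ref{teo:rudelson}, the entrywise contraction (the paper's Lemma \ref{teo:infty}, proved via Bernstein), and the operator-norm bound $\|(q^{-1}\POj - \OpId)Z\| \lesssim \sqrt{n\log n/q}\,\|Z\|_\infty$ (the paper's Lemma \ref{teo:sixthree}, imported from \cite{CR08}) — with only the bookkeeping difference that you fix the contraction factor at $1/2$ while the paper tunes a small $\epsilon$. One harmless nit: with factor $1/2$ and $j_0 = 2\lceil \log n\rceil$ you get $2^{-j_0} \le n^{-2\ln 2}$ rather than $n^{-2}$, which is still far below $\lambda/4$ so (b) is unaffected (the paper takes the contraction factor at most $e^{-1}$ precisely to land at $n^{-2}$).
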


Since $\|\PO\PT\| < 1$ with large probability, $W^S$ is well defined
and the following holds.
\begin{lemma}
  \label{teo:WS}
  Assume that $S_0$ is supported on a set $\Omega$ sampled as in Lemma
  \ref{teo:WL}, and that the signs of $S_0$ are i.i.d.~symmetric (and
  independent of $\Omega$).  Then under the other assumptions of
  Theorem \ref{teo:main}, the matrix $W^S$ \eqref{eq:WS} obeys
\begin{enumerate}
\item[(a)] $\|W^S\| < 1/4$,
\item[(b)] $\|\POc W^S\|_\infty < \lambda/4$.
\end{enumerate}
\end{lemma}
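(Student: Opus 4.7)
The plan is to work with the Neumann representation $W^S = \lambda\PTp\sum_{k\ge 0}(\PO\PT\PO)^k \sgn(S_0)$ from \eqref{eq:neuman}, which is valid because Corollary~\ref{teo:POPT} gives $\|\PO\PT\PO\| \le 1/4$ with high probability. Before anything else I would invoke Theorem~\ref{teo:derandom} to assume the signs $\{[\sgn(S_0)]_{ij}\}_{(i,j)\in\Omega}$ are i.i.d.\ symmetric $\pm 1$, independent of $\Omega$; then conditional on $\Omega$, both parts of the lemma reduce to concentration inequalities for sign sums.

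For part~(b), I would fix $(i,j)\in\Omega^c$ and unfold the inner product using self-adjointness of $\PTp$ and of $(\PO-\PO\PT\PO)^{-1}$ on $\Omega$:
\[
\langle e_ie_j^*,\, W^S\rangle \;=\; \lambda\,\langle X_{ij},\,\sgn(S_0)\rangle, \qquad X_{ij} := (\PO-\PO\PT\PO)^{-1}\PO\PTp e_ie_j^*.
\]
Conditional on $\Omega$ this is a weighted sum of independent $\pm 1$'s supported on $\Omega$, so Hoeffding's inequality yields $|\langle e_ie_j^*,W^S\rangle| \le C\lambda\sqrt{\log n}\,\|X_{ij}\|_F$ with probability at least $1-O(n^{-\beta})$ for any chosen $\beta$. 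Since $(i,j)\notin\Omega$ forces $\PO e_ie_j^*=0$, one has $\PO\PTp e_ie_j^* = -\PO\PT e_ie_j^*$, and combining the Neumann estimate $\|(\PO-\PO\PT\PO)^{-1}\| \le 4/3$ with the incoherence bound \eqref{eq:PTeiej} gives $\|X_{ij}\|_F \le \tfrac{4}{3}\sqrt{2\mu r/n}$. Plugging in $\lambda = 1/\sqrt n$ produces a per-entry bound of order $\sqrt{\mu r \log n}/n$, which falls strictly below $\lambda/4$ whenever $r \le \rho_r n/(\mu \log^2 n)$ with $\rho_r$ a small enough absolute constant; a union bound over the at most $n^2$ entries of $\Omega^c$ finishes~(b).

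Part~(a) is the main obstacle. I would split $W^S = \lambda\PTp\sgn(S_0) + \lambda\PTp\sum_{k\ge 1}(\PO\PT\PO)^k\sgn(S_0)$. The first piece is a $\PTp$-contraction of a random sign matrix: under the derandomized Bernoulli model, $\sgn(S_0)$ has i.i.d.\ entries taking each value $\pm 1$ with probability $\rho_s$ (and zero otherwise), so standard random-matrix estimates (non-commutative Khintchine or Seginer) give $\|\sgn(S_0)\| \le C\sqrt{n\rho_s}$ with high probability, hence $\lambda\|\PTp\sgn(S_0)\| \le C\sqrt{\rho_s} < 1/8$ once $\rho_s$ is small. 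The delicate piece is the Neumann tail: one cannot afford to bound its operator norm by its Frobenius norm, which would be too lossy by a factor of order $\sqrt n$. Instead I would regard the tail as a matrix-valued Rademacher sum $\sum_{(k,l)\in\Omega}\epsilon_{kl}\phi_{kl}$ indexed by $\Omega$ (with $\phi_{kl} = \PTp\sum_{j\ge 1}(\PO\PT\PO)^j(e_ke_l^*)$) and apply the non-commutative Khintchine inequality conditional on $\Omega$. The geometric decay $\|(\PO\PT\PO)^j\| \le 4^{-j}$ gives absolute summability in $j$, while the incoherence estimate \eqref{eq:PTeiej} is what must be used to control the matrix-valued variance in operator norm. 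The central technical challenge is propagating the $\mu r/n$ factor through the iterated $\PT$'s without collapsing the argument to the trivial Frobenius bound, so that the final tail estimate is strictly below $1/8$ under the rank and sparsity assumptions of Theorem~\ref{teo:main}.
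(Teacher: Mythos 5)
Your part (b) is essentially the paper's argument: the same matrix $X(i,j)=(\PO-\PO\PT\PO)^{-1}\PO\PT(e_ie_j^*)$ (up to sign), scalar Hoeffding conditional on $\Omega$, the incoherence bound \eqref{eq:PTeiej} to control $\|X(i,j)\|_F\lesssim\sqrt{\mu r/n}$, and a union bound over $n^2$ entries; this part is fine. The same is true of the first half of part (a): the split $W^S=\lambda\PTp E+\lambda\PTp\mathcal{R}(E)$ with $\mathcal{R}=\sum_{k\ge1}(\PO\PT\PO)^k$ and the bound $\lambda\|E\|\lesssim\sqrt{\rho}$ are exactly what the paper does. (Invoking Theorem \ref{teo:derandom} inside this lemma is unnecessary, since the lemma already hypothesizes i.i.d.\ symmetric signs; derandomization is applied afterwards at the level of Theorem \ref{teo:main}, but this is harmless.)

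The genuine gap is the Neumann tail in part (a), which you explicitly leave as ``the central technical challenge.'' This is the heart of the lemma, and the route you sketch is problematic, not just incomplete: if you treat $\mathcal{R}(E)=\sum_{(k,l)\in\Omega}\epsilon_{kl}\,\mathcal{R}(e_ke_l^*)$ as a matrix Rademacher sum and control the matrix variance by summing Frobenius norms of the summands (which is what the incoherence bound $\|\PT e_ke_l^*\|_F\le\sqrt{2\mu r/n}$ naturally gives, since $\|\mathcal{R}(e_ke_l^*)\|_F\lesssim\sigma\sqrt{\mu r/n}$ for $(k,l)\in\Omega$), you get a variance of order $|\Omega|\cdot\mu r/n\approx\rho\,\mu r\,n$ and hence $\lambda\,\E\|\mathcal{R}(E)\|\lesssim\sqrt{\rho\,\mu r\log n}$, which is nowhere near $1/8$ in the regime $r\sim n/(\mu\log^2 n)$ of Theorem \ref{teo:main}. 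Obtaining a genuinely smaller operator-norm bound on the variance $\sum_{(k,l)\in\Omega}\mathcal{R}(e_ke_l^*)\mathcal{R}(e_ke_l^*)^*$ would require a separate concentration argument over the random $\Omega$ inside the iterated projections, and you do not indicate how to do this. The paper sidesteps the issue entirely: it uses a $1/2$-net $N$ of $\mathbb{S}^{n-1}$ with $|N|\le 6^n$, writes $\|\mathcal{R}(E)\|\le 4\sup_{x,y\in N}\<y,\mathcal{R}(E)x\>$, and applies scalar Hoeffding to $\<\mathcal{R}(yx^*),E\>$ conditional on $\Omega$; the only input is $\|\mathcal{R}(yx^*)\|_F\le\|\mathcal{R}\|\le\sigma^2/(1-\sigma^2)$ on the event $\{\|\PO\PT\|\le\sigma\}$ (no per-entry incoherence needed here at all), and the factor $1/\lambda^2=n$ in the Hoeffding exponent beats the $6^{2n}$ cardinality of the net once $\sigma$, equivalently $\rho$, is a sufficiently small constant. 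To repair your proof, replace the non-commutative Khintchine step by this covering argument (or supply an actual operator-norm variance bound, which is substantially harder).
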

The proof is also in Section \ref{sec:lemmas}.  Clearly, $W^L$ and
$W^S$ obey \eqref{eq:dual-certif-use}, hence certifying that \name\,
correctly recovers the low-rank and sparse components with high
probability when the signs of $S_0$ are random. The earlier
``derandomization'' argument then establishes Theorem \ref{teo:main}.

\section{Proofs of Dual Certification}
\label{sec:lemmas}

This section proves the two crucial estimates, namely, Lemma
\ref{teo:WL} and Lemma \ref{teo:WS}.

\subsection{Preliminaries}

We begin by recording two results which shall be useful in proving
Lemma \ref{teo:WL}.  While Theorem \ref{teo:rudelson} asserts that
with large probability,
\[
\|Z - \rho_0^{-1} \PT \POzero Z\|_F \le \epsilon \|Z\|_F,
\]
for all $Z \in T$, the next lemma shows that for a fixed $Z$, the
sup-norm of $Z - \rho_0^{-1} \PT \POzero(Z)$ also does not increase
(also with large probability).
\begin{lemma}
  \label{teo:infty} 
  Suppose $Z \in T$ is a fixed matrix, and $\Omega_0 \sim
  \text{Ber}(\rho_0)$. Then with  high probability,
   \begin{equation}
    \label{eq:infty} 
    \|Z - \rho_0^{-1} \PT \POzero Z\|_\infty \le \epsilon \|Z\|_\infty
  \end{equation}
  provided that $\rho_0 \ge C_0 \, \epsilon^{-2} \, \frac{\mu r \log
    n}{n}$ (for rectangular matrices, $\rho_0 \ge C_0 \, \epsilon^{-2}
  \, \frac{\mu r \log n_{(1)}}{n_{(2)}}$) for some numerical constant
  $C_0 > 0$.
\end{lemma}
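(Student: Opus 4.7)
The plan is to fix an entry $(a,b)$, write the scalar $(Z - \rho_0^{-1}\PT\POzero Z)_{ab}$ as a sum of independent, mean-zero random variables indexed by the Bernoulli selectors $\delta_{ij}$ defining $\Omega_0$, bound that sum using the scalar Bernstein inequality, and then union-bound over the $n^2$ choices of $(a,b)$.

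To set this up, I would exploit $\PT Z = Z$ (since $Z\in T$) and the self-adjointness of $\PT$ to write
\[
(Z - \rho_0^{-1}\PT\POzero Z)_{ab} \;=\; \iprod{e_a e_b^*}{Z - \rho_0^{-1}\PT\POzero Z}
\;=\; \iprod{\PT(e_a e_b^*)}{Z - \rho_0^{-1}\POzero Z}.
\]
Writing $X := \PT(e_a e_b^*)$ and $\POzero Z = \sum_{ij} \delta_{ij} Z_{ij}\, e_i e_j^*$ with $\delta_{ij}$ i.i.d.\ Bernoulli$(\rho_0)$, this equals $\sum_{ij} \xi_{ij}$ where
\[
\xi_{ij} \;:=\; (1 - \rho_0^{-1}\delta_{ij})\, X_{ij}\, Z_{ij}.
\]
The $\xi_{ij}$ are independent and $\E\xi_{ij}=0$, so Bernstein applies.

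For the two parameters feeding Bernstein, the almost sure bound is
\[
|\xi_{ij}| \;\le\; \rho_0^{-1}\|X\|_\infty\|Z\|_\infty \;\le\; \rho_0^{-1}\sqrt{\tfrac{2\mu r}{n}}\,\|Z\|_\infty,
\]
since $\|X\|_\infty\le\|X\|_F\le\sqrt{2\mu r/n}$ by \eqref{eq:PTeiej}, and the variance total is
\[
\sum_{ij}\operatorname{Var}(\xi_{ij}) \;\le\; \rho_0^{-1}\sum_{ij} X_{ij}^2 Z_{ij}^2 \;\le\; \rho_0^{-1}\|X\|_F^2\,\|Z\|_\infty^2 \;\le\; \rho_0^{-1}\tfrac{2\mu r}{n}\,\|Z\|_\infty^2.
\]
Bernstein then yields, for each fixed $(a,b)$,
\[
\P\!\left( \left| (Z - \rho_0^{-1}\PT\POzero Z)_{ab}\right| > \epsilon\|Z\|_\infty \right)
\;\le\; 2\exp\!\left( -\,\frac{\epsilon^2/2}{\rho_0^{-1}\tfrac{2\mu r}{n} + \tfrac{1}{3}\rho_0^{-1}\sqrt{\tfrac{2\mu r}{n}}\,\epsilon} \right).
\]
Under the assumption $\rho_0 \ge C_0\epsilon^{-2}\mu r\log n/n$, the first term in the denominator is at most $2\epsilon^2/(C_0\log n)$ and the second is of the same or smaller order (noting $\sqrt{\mu r/n}\le 1$ and $\epsilon\le 1$), so for $C_0$ sufficiently large the exponent is at most $-12\log n$, giving a bound like $2n^{-12}$.

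The final step is a union bound over the $n^2$ choices of $(a,b)$, which gives probability at least $1 - 2n^{-10}$ that the $\ell_\infty$ bound holds simultaneously; the rectangular case is identical after replacing $n$ by $n_{(1)}$ in the logarithm and $n$ by $n_{(2)}$ in the incoherence estimate $\|\PT e_ae_b^*\|_F\le\sqrt{2\mu r/n_{(2)}}$. The main technical nuisance is just the variance/range bookkeeping above, together with ensuring that $C_0$ is large enough to absorb the additive $\epsilon$-term in Bernstein's denominator; no new probabilistic idea beyond a scalar Bernstein is needed.
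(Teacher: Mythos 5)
Your overall strategy is the same as the paper's: write each entry of $Z-\rho_0^{-1}\PT\POzero Z$ as a sum of independent, centered variables in the Bernoulli selectors $\delta_{ij}$, apply scalar Bernstein, and union bound over the $n^2$ entries. However, there is a genuine gap in the range bound you feed into Bernstein. You bound $|X_{ij}|\le\|X\|_\infty\le\|X\|_F\le\sqrt{2\mu r/n}$, so your uniform bound on the summands is $c\approx\rho_0^{-1}\sqrt{\mu r/n}\,\|Z\|_\infty$, and you then assert that the linear term $\tfrac13\rho_0^{-1}\sqrt{2\mu r/n}\,\epsilon$ in the denominator is ``of the same or smaller order'' as the variance term $\rho_0^{-1}\tfrac{2\mu r}{n}$. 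That comparison is false in general: the ratio of the two terms is of order $\epsilon\sqrt{n/(\mu r)}$, which is large whenever $\epsilon\gg\sqrt{\mu r/n}$ — for instance $\epsilon$ a fixed constant and $\mu r=O(1)$, a regime the lemma must cover (and essentially the regime in which it is invoked inside the golfing scheme, where $\rho_0=q$ sits right at the threshold). In that regime the Bernstein exponent degrades from the needed $\gtrsim\epsilon^2 n\rho_0/(\mu r)\ge c\,C_0\log n$ down to $\approx\epsilon\rho_0\sqrt{n/(\mu r)}$, which at the stated threshold $\rho_0\asymp\epsilon^{-2}\mu r\log n/n$ is $\approx(\log n)\sqrt{\mu r/n}/\epsilon$; with $\mu r=O(1)$ and constant $\epsilon$ this is $O(\log n/\sqrt n)=o(1)$, so your tail bound is vacuous and no $1-O(n^{-10})$ statement follows.

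The repair is exactly the sharper entrywise estimate the paper uses: since $X_{ij}=\langle e_ie_j^*,\PT(e_ae_b^*)\rangle=\langle\PT(e_ie_j^*),\PT(e_ae_b^*)\rangle$, Cauchy--Schwarz combined with \eqref{eq:PTeiej} applied to \emph{both} factors gives $|X_{ij}|\le\|\PT(e_ie_j^*)\|_F\,\|\PT(e_ae_b^*)\|_F\le 2\mu r/n$, i.e.\ the correct scale is $\mu r/n$, not $\sqrt{\mu r/n}$. With $c\le\rho_0^{-1}(2\mu r/n)\|Z\|_\infty$ the linear term is dominated by (a constant times) the variance term for $\epsilon\le1$, Bernstein yields an exponent of order $\epsilon^2 n\rho_0/(\mu r)\ge c'\,C_0\log n$, and your union bound over the $n^2$ entries (and the rectangular modification) then goes through as written; the rest of your argument — the identity via $\PT Z=Z$, the variance computation, and the union bound — matches the paper's proof.
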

The proof is an application of Bernstein's inequality and may be found
in the Appendix. A similar but somewhat different version of
\eqref{eq:infty} appears in \cite{RechtMC}.

The second result was proved in \cite{CR08}.
\begin{lemma}\cite[Theorem 6.3]{CR08}
\label{teo:sixthree}
  Suppose $Z$ is fixed, and $\Omega_0 \sim \text{Ber}(\rho_0)$. Then
  with high probability,
  \label{teo:CR} 
   \begin{equation}
    \label{eq:CR} 
    \|(I - \rho_0^{-1} \POzero) Z\| \le C'_0 \sqrt{\frac{n\log n}{\rho_0}} 
    \|Z\|_\infty
  \end{equation}
  for some small numerical constant $C'_0 > 0$ provided that $\rho_0
  \ge C_0 \, \frac{\mu \log n}{n}$ (or $\rho_0 \ge C'_0 \, \frac{\mu
    \log n_{(1)}}{n_{(2)}}$ for rectangular matrices in which case
  $n_{(1)} \log n_{(1)}$ replaces $n \log n$ in \eqref{eq:CR}).
\end{lemma}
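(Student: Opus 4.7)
\textbf{Proof proposal for Lemma \ref{teo:sixthree}.} The plan is a direct application of the matrix Bernstein inequality to a sum of independent, zero-mean random matrices. Let $\delta_{ij}$ be i.i.d.\ Bernoulli($\rho_0$) variables indicating $(i,j)\in\Omega_0$. Then
\[
(\OpId - \rho_0^{-1}\PO_{\Omega_0})Z \;=\; \sum_{i,j}\Bigl(1 - \rho_0^{-1}\delta_{ij}\Bigr)Z_{ij}\,e_ie_j^*
\;=:\;\sum_{i,j}X_{ij},
\]
where the $X_{ij}$ are independent and $\E X_{ij}=0$. My goal is to control $\|\sum_{ij}X_{ij}\|$ with the usual two ingredients: a deterministic bound on $\|X_{ij}\|$ and a bound on the matrix variance $\max\bigl(\|\sum\E X_{ij}X_{ij}^*\|,\|\sum\E X_{ij}^*X_{ij}\|\bigr)$.

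For the uniform bound, $\|X_{ij}\|\le\rho_0^{-1}|Z_{ij}|\le R:=\rho_0^{-1}\|Z\|_\infty$ almost surely (the worst case is $\delta_{ij}=1$, since $\rho_0<1$). For the variance, since $\E(1-\rho_0^{-1}\delta_{ij})^2 = (1-\rho_0)/\rho_0 \le \rho_0^{-1}$, a direct computation gives
\[
\sum_{i,j}\E[X_{ij}X_{ij}^*] \;\preceq\; \rho_0^{-1}\sum_i\Bigl(\sum_j Z_{ij}^2\Bigr)e_ie_i^*,
\qquad
\sum_{i,j}\E[X_{ij}^*X_{ij}] \;\preceq\; \rho_0^{-1}\sum_j\Bigl(\sum_i Z_{ij}^2\Bigr)e_je_j^*,
\]
both of which are diagonal. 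Their operator norms are therefore at most $\rho_0^{-1}n\|Z\|_\infty^2=:\sigma^2$, using the crude row/column bound $\sum_j Z_{ij}^2\le n\|Z\|_\infty^2$.

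Matrix Bernstein (Tropp) then yields, for any $t>0$,
\[
\P\!\left(\Bigl\|\sum_{i,j}X_{ij}\Bigr\| \ge t\right)
\;\le\; 2n\,\exp\!\left(-\frac{t^2/2}{\sigma^2+Rt/3}\right).
\]
Setting $t = C'_0\sqrt{n\log n/\rho_0}\,\|Z\|_\infty$, the variance term $\sigma^2$ dominates the Bernstein denominator precisely when $Rt/3\lesssim\sigma^2$, i.e.\ when $\rho_0 n \gtrsim \log n$, which is guaranteed by the hypothesis $\rho_0 \ge C_0\mu\log n/n$ (with $\mu\ge 1$). In that regime the bound becomes $2n\exp(-cC_0'^2\log n)$, which is $O(n^{-10})$ once $C_0'$ is taken large enough. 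The rectangular case is identical with $n$ replaced by $n_{(1)}$ in the Bernstein prefactor and in the row/column variance estimate, so one obtains the stated $\sqrt{n_{(1)}\log n_{(1)}/\rho_0}\,\|Z\|_\infty$ bound under the corresponding condition on $\rho_0$.

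The main technical obstacle is really just the bookkeeping in the variance step: one must resist over-estimating and use the diagonal structure of $\sum\E X_{ij}X_{ij}^*$ to get $\sigma^2\propto n\|Z\|_\infty^2/\rho_0$ rather than the wasteful $n^2\|Z\|_\infty^2/\rho_0$. Once the variance proxy is tight, the matrix Bernstein calculation and the verification that the sub-Gaussian term dominates (using the assumed lower bound on $\rho_0$) are routine. No independence between rows and columns of $Z$ is used — only that $Z$ is deterministic — so the lemma indeed holds for arbitrary fixed $Z$.
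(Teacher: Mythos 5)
Your proof is correct, but it follows a different route from the paper, which in fact offers no proof of this lemma at all: the statement is imported verbatim from Cand\`es and Recht \cite{CR08} (their Theorem 6.3), and the original argument there does not proceed via a matrix Bernstein inequality but rather through symmetrization and bounds on the spectral norm of random matrices with independent entries together with a concentration step. Your argument is a clean, self-contained alternative: the decomposition $(\OpId-\rho_0^{-1}\POzero)Z=\sum_{ij}(1-\rho_0^{-1}\delta_{ij})Z_{ij}e_ie_j^*$ into independent zero-mean summands is right, the uniform bound $R=\rho_0^{-1}\|Z\|_\infty$ is right, and your variance computation correctly exploits the diagonal structure of $\sum_{ij}\E[X_{ij}X_{ij}^*]$ and $\sum_{ij}\E[X_{ij}^*X_{ij}]$ to get $\sigma^2\le \rho_0^{-1}n\|Z\|_\infty^2$ (and $\rho_0^{-1}n_{(1)}\|Z\|_\infty^2$ in the rectangular case, with dimensional prefactor $n_1+n_2$), which is exactly what produces the stated $\sqrt{n\log n/\rho_0}$ scaling. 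The only point you gloss over is the interplay of constants: to have the variance term dominate at $t=C_0'\sqrt{n\log n/\rho_0}\,\|Z\|_\infty$ you need $n\rho_0\gtrsim C_0'^2\log n$, i.e.\ $C_0$ large relative to $C_0'^2$; alternatively, if the linear (sub-exponential) branch of Bernstein is active, the exponent is of order $t/R=C_0'\sqrt{n\rho_0\log n}\ge C_0'\sqrt{C_0}\,\log n$, which still yields the $O(n^{-10})$ failure probability once the numerical constants are fixed compatibly — it would be worth one sentence to close this loop explicitly, and it matches the paper's remark that the constants scale linearly with the desired exponent $\beta$. What your route buys is a short, modern, fully quantitative proof with explicit constants; what the \cite{CR08}-style route buys is avoidance of the matrix-concentration machinery (only scalar concentration plus classical random-matrix norm bounds), and refinements of that approach can shave part of the logarithmic factor, which matrix Bernstein cannot.
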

As a remark, Lemmas \ref{teo:infty} and \ref{teo:sixthree}, and Theorem
\ref{teo:rudelson} all hold with probability at least $1 -
O(n^{-\beta})$, $\beta > 2$, if $C_0$ is replaced by $C \beta$ for
some numerical constant $C > 0$.

\subsection{Proof of Lemma \ref{teo:WL}}
\label{sec:WL}

We begin by introducing a piece of notation and set $Z_j = UV^* - \PT
Y_j$ obeying
\[
Z_j = (P_T - q^{-1} \PT \POj \PT) Z_{j-1}. 
\]
Obviously $Z_j \in T$ for all $j \ge 0$.  First, note that when
\begin{equation}
\label{eq:q}
q \ge C_0 \, \epsilon^{-2} \,
\frac{\mu r \log n}{n},
\end{equation}
(for rectangular matrices, take $q \ge C_0 \, \epsilon^{-2} \,
\frac{\mu r \log n_{(1)}}{n_{(2)}}$), we have
\begin{equation}
\|Z_j\|_\infty \le \epsilon \| Z_{j-1} \|_\infty  
\label{eq:infty-useful}
\end{equation}
by Lemma \ref{teo:infty}. (This holds with high probability
because $\Omega_j$ and $Z_{j-1}$ are independent, and this is why the
golfing scheme is easy to use.) In particular, this gives that with
high probability
\[
\|Z_j\|_\infty \le \epsilon^j \|UV^*\|_\infty. 
\]
When $q$ obeys the same estimate, 
\begin{equation}
\|Z_j\|_F \le \epsilon \|Z_{j-1}\|_F 
\label{eq:infty-useful2}
\end{equation}
by Theorem \ref{teo:rudelson}.  In particular, this gives that with
high probability
\begin{equation}
  \label{eq:F-useful}
  \|Z_j\|_F \le \epsilon^{j} \|UV^*\|_F = \epsilon^j \sqrt{r}. 
\end{equation}
Below, we will assume $\epsilon \le e^{-1}$.

\paragraph{Proof of (a).} We prove the first part of the lemma and the
argument parallels that in \cite{GrossMC}, see also
\cite{RechtMC}. From
\[
Y_{j_0} = \sum_j q^{-1} \POj Z_{j-1},
\]
we deduce
\begin{align*}
  \|W^L\| = \|\PTp Y_{j_0}\|_\infty &
  \le \sum_j  \|q^{-1} \PTp \POj Z_{j-1}\|\\
  & = \sum_j  \|\PTp (q^{-1} \POj Z_{j-1} - Z_{j-1})\|\\
  & \le \sum_j \|q^{-1} \POj Z_{j-1} - Z_{j-1}\|\\
  & \le C'_0 \sqrt{\frac{n\log n}{q}} \sum_j \|Z_{j-1}\|_\infty\\
  & \le C'_0 \sqrt{\frac{n\log n}{q}} \sum_j \epsilon^{j-1} \|UV^*\|_\infty\\
  & \le C'_0 (1-\epsilon)^{-1} \sqrt{\frac{n \log n}{q}}
  \|UV^*\|_\infty.
\end{align*}
The fourth step follows from Lemma \ref{teo:sixthree} and the fifth
from \eqref{eq:infty-useful2}. Since $\|UV^*\| \le \sqrt{\mu r}/{n}$,
this gives
\[
\|W^L\| \le C' \epsilon
\] 
for some numerical constant $C'$ whenever $q$ obeys
\eqref{eq:q}. 

\paragraph{Proof of (b).} Since $\PO Y_{j_0} = 0$, 
\[
\PO(UV^* + \PTp Y_{j_0}) = \PO(UV^* - \PT Y_{j_0}) = \PO(Z_{j_0}),   
\]
and it follows from \eqref{eq:F-useful} that
\[
\|Z_{j_0} \|_F \le \epsilon^{{j_0}} \|UV^*\|_F = \epsilon^{{j_0}} \sqrt{r}.  
\]
Since $\epsilon \le e^{-1}$ and $j_0 \ge 2\log n$, $\epsilon^{{j_0}}
\le 1/n^2$ and this proves the claim.

\paragraph{Proof of (c).} We have $UV^* + W^L = Z_{j_0} + Y_{j_0}$ and
know that $Y_{j_0}$ is supported on $\Omega^c$.  Therefore, since
$\|Z_{j_0}\|_F \le \lambda/8$, it suffices to show that
$\|Y_{j_0}\|_\infty \le \lambda/8$. We have
\begin{align*}
  \|Y_{j_0}\|_\infty & \le  q^{-1} \sum_j  \|\POj Z_{j-1}\|_\infty \\
  & \le q^{-1} \sum_j \|Z_{j-1}\|_\infty\\
  & \le q^{-1} \sum_j \epsilon^j \|UV^*\|_\infty.
\end{align*}
Since $\|UV^*\|_\infty \le \sqrt{\mu r}/{n}$, this gives
\[
\|Y_{j_0}\|_\infty \le C' \frac{\epsilon^2}{\sqrt{\mu r (\log n)^2}}
\]
for some numerical constant $C'$ whenever $q$ obeys \eqref{eq:q}.
Since $\lambda = 1/\sqrt{n}$, $\|Y_{j_0}\|_\infty \le \lambda/8$ if
\[
\epsilon \le C \Bigl( \frac{\mu r (\log n)^2}{n} \Bigr)^{1/4}.
\]

\paragraph{Summary.} We have seen that (a) and (b) are satisfied if
$\epsilon$ is sufficiently small and $j_0 \ge 2\log n$. For (c), we can
take $\epsilon$ on the order of $(\mu  r (\log n)^2/n)^{1/4}$,
which will be sufficiently small as well provided that $\rho_r$ in
\eqref{eq:main} is sufficiently small. Note that everything is
consistent since $C_0 \, \epsilon^{-2} \frac{\mu r \log n}{n} <
1$. This concludes the proof of Lemma \ref{teo:WL}.

\subsection{Proof of Lemma \ref{teo:WS}}

It is convenient to introduce the sign matrix $E = \sgn(S_0)$
distributed as
\begin{equation}
\label{eq:random-sign}
E_{ij} = \begin{cases} 1, & \text{w.~p. }
  \rho/2,\\
  0, & \text{w.~p. } 1-\rho,\\
  -1, & \text{w.~p. } \rho/2. 
\end{cases} 
\end{equation}
We shall be interested in the event $\{\|\PO\PT\| \le \sigma\}$ which
holds with large probability when $\sigma = \sqrt{\rho} + \epsilon$,
see Corollary \ref{teo:POPT}. In particular, for any $\sigma > 0$,
$\{\|\PO\PT\| \le \sigma\}$ holds with high probability provided
$\rho$ is sufficiently small. 

\paragraph{Proof of (a).} By construction, 
\begin{align*}
  W^S & = \lambda \PTp  E + \lambda \PTp \sum_{k \ge 1}
  (\PO\PT\PO)^k E\\
  & := \PTp W_0^S + \PTp W_1^S.
\end{align*}
For the first term, we have $\|\PTp W_0^S\| \le \|W_0^S\| = \lambda
\|E\|$. Then standard arguments about the norm of a matrix with
i.i.d.~entries give \cite{Vershynin280}
\[
\|E\| \le 4\sqrt{n \rho}
\]
with large probability.  Since $\lambda = 1/\sqrt{n}$, this gives
$\|W_0^S\| \le 4 \sqrt{\rho}$. When the matrix is rectangular, we have 
\[
\|E\| \le 4\sqrt{n_{(1)} \rho}
\]
with high probability. Since $\lambda = 1/\sqrt{n_{(1)}}$ in
this case, $\|W_0^S\| \le 4 \sqrt{\rho}$ as well.

Set $\mathcal{R} = \sum_{k \ge 1} (\PO\PT\PO)^{k}$ and observe that
$\mathcal{R}$ is self-adjoint. For the second term, $\|\PTp W_1^S\|
\le \|W_1^S\|$, where $W_1^S = \lambda \mathcal{R} (E)$.  We need to
bound the operator norm of the matrix $\mathcal{R}(E)$, and use a
standard covering argument to do this. Throughout, $N$ denotes an
$1/2$-net for $\mathbb{S}^{n-1}$ of size at most $6^n$ (such a net
exists, see \cite[Theorem 4.16]{Ledoux}). Then a standard argument
\cite{Vershynin280} shows that
\[
\|\mathcal{R}(E)\| = \sup_{x, y \in \mathbb{S}^{n-1}} \<y,
\mathcal{R}(E) x\> \le 4 \sup_{x,y \in N} \<y, \mathcal{R}(E) x\>.
\]
For a fixed pair $(x,y)$ of unit-normed vectors in $N \times N$,
define the random variable
\[
X(x,y) := \<y, \mathcal{R} (E) x\> = \<\mathcal{R}(yx^*), E\>. 
\]
Conditional on $\Omega = \text{supp}(E)$, the signs of $E$ are
i.i.d.~symmetric and Hoeffding's inequality gives
\[
\P(|X(x,y)| > t \, | \, \Omega) \le 2 \exp\Bigl(-\frac{2t^2}{
  \|\mathcal{R}(xy^*)\|_F^2}\Bigr).
\]
Now since $\|yx^*\|_F = 1$, the matrix $\mathcal{R}(yx^*)$ obeys $
\|\mathcal{R}(yx^*)\|_F \le \|\mathcal{R}\|$ and, therefore,
\[
\P\Bigl(\sup_{x,y \in N} |X(x,y)| > t \, | \, \Omega\Bigr) \le 2|N|^2 
\exp\Bigl(-\frac{2t^2}{\|\mathcal{R}\|^2}\Bigr).
\]
Hence, 
\[
\P(\|\mathcal{R}(E)\| > t \, | \, \Omega) \le 2|N|^2
\exp\Bigl(-\frac{t^2}{8 \|\mathcal{R}\|^2}\Bigr). 
\]
On the event $\{\|\PO\PT\| \le \sigma\}$,
\[
\|\mathcal{R}\| \le \sum_{k \ge 1} \sigma^{2k} =
\frac{\sigma^2}{1-\sigma^2}
\]
and, therefore, unconditionally,
\[
\P(\|\mathcal{R}(E)\| > t) \le 2 |N|^2 \, \exp\Bigl(- \frac{\gamma^2
  t^2}{2}\Bigr) + \P(\|\PO\PT\| \ge \sigma), \qquad \gamma =
\frac{1-\sigma^2}{2\sigma^2}.
\]
This gives 
\[
\P(\lambda \|\mathcal{R}(E)\| > t) \le 2 \times 6^{2n} \exp\Bigl(-
\frac{\gamma^2 t^2}{2\lambda^2}\Bigr) + \P(\|\PO\PT\| \ge \sigma).
\]
With $\lambda = 1/\sqrt{n}$,
\[
\|W^S\| \le 1/4,
\]
with large probability, provided that $\sigma$, or equivalently
$\rho$, is small enough.

\paragraph{Proof of (b).} Observe that 
\[
\POc W^S  = -\lambda \POc \PT (\PO - \PO \PT \PO)^{-1} E.
\]
Now for $(i,j) \in \Omega^c$, $W^S_{ij} = \<e_i, W^S e_j\> = \<e_i
e_j^*, W^S\>$, and we have 
\[
W^S_{ij} = \lambda \<X(i,j), E\>, 
\]
where $X(i,j)$ is the matrix $-(\PO - \PO \PT \PO)^{-1} \PO
\PT(e_ie_j^*)$.  Conditional on $\Omega = \text{supp}(E)$, the signs
of $E$ are i.i.d.~symmetric, and Hoeffding's inequality gives
\[
\P( |W^S_{ij}| > t \lambda \, | \, \Omega) \le 2
\exp\Bigl(-\frac{2t^2}{ \|X(i,j)\|_F^2}\Bigr), 
\]
and, thus,
\[
\P\Bigl(\sup_{i,j} |W^S_{ij}| > t \lambda \, | \, \Omega\Bigr) \le 2n^2
\exp\Bigl(-\frac{2t^2}{\sup_{i,j} \|X(i,j)\|_F^2}\Bigr).
\] 
Since \eqref{eq:PTeiej} holds, we have
\[
\|\PO \PT(e_ie_j^*)\|_F \le \|\PO \PT\| \|\PT(e_ie_j^*)\|_F \le \sigma
\sqrt{2\mu r/n}
\]
on the event $\{\|\PO\PT\| \le \sigma\}$.  On the same event, $\|(\PO -
\PO \PT \PO)^{-1}\| \le (1-\sigma^2)^{-1}$ and, therefore, 
\[
\|X(i,j)\|^2_F \le \frac{2\sigma^2}{(1-\sigma^2)^2} \, \frac{\mu r}{n}.
\]
Then unconditionally,
\[
\P\Bigl(\sup_{i,j} |W^S_{ij}| > t \lambda\Bigr) \le 2n^2
\exp\Bigl(-\frac{n \gamma^2 t^2}{\mu r}\Bigr)+ \P(\|\PO\PT\| \ge
\sigma), \qquad \gamma = \frac{(1-\sigma^2)^2}{2\sigma^2}.
\]
This proves the claim when $\mu r < \rho'_r n (\log n)^{-1}$ and
$\rho'_r$ is sufficiently small.

\section{Numerical Experiments and
  Applications}\label{sec:experiments}

In this section, we perform numerical experiments corroborating our
main results and suggesting their many applications in image and video
analysis. We first investigate \name's ability to
correctly recover matrices of various rank from errors of various
density. We then sketch applications in background modeling from video
and removing shadows and specularities from face images. 

While the exact recovery guarantee provided by Theorem \ref{teo:main}
is independent of the particular algorithm used to solve \name, its
applicability to large scale problems depends on the availability of
scalable algorithms for nonsmooth convex optimization. For the
experiments in this section, we use the an augmented Lagrange
multiplier algorithm introduced in
\cite{Lin2009-ALM,Yuan2009}.\footnote{Both \cite{Lin2009-ALM,Yuan2009}
  have posted a version of their code online.} In Section
\ref{sec:algorithms}, we describe this algorithm in more detail, and
explain why it is our algorithm of choice for sparse and low-rank
separation.

One important implementation detail in our approach is the choice of
$\lambda$. Our analysis identifies one choice, $\lambda =
1/\sqrt{\text{max}(n_1,n_2)}$, which works well for incoherent
matrices. In order to illustrate the theory, throughout this section
we will always choose $\lambda = 1/\sqrt{\text{max}(n_1,n_2)}$. For
practical problems, however, it is often possible to improve
performance by choosing $\lambda$ according to prior knowledge about
the solution. For example, if we know that $S$ is very sparse,
increasing $\lambda$ will allow us to recover matrices $L$ of larger
rank. For practical problems, we recommend $\lambda =
1/\sqrt{\text{max}(n_1,n_2)}$ as a good rule of thumb, which can then be
adjusted slightly to obtain the best possible result.

\subsection{Exact recovery from varying fractions of error}

We first verify the correct recovery phenomenon of Theorem \ref{teo:main} on
randomly generated problems. We consider square matrices of varying
dimension $n = 500, \ldots, 3000$. We generate a rank-$r$ matrix
$L_0$ as a product $L_0 = X Y^*$ where $X$ and $Y$ are $n \times r$
matrices with entries independently sampled from a
$\mathcal{N}(0,1/n)$ distribution. $S_0$ is generated by choosing a
support set $\Omega$ of size $k$ uniformly at random, and setting $S_0
= \PO E$, where $E$ is a matrix with independent Bernoulli $\pm 1$
entries.

Table \ref{tab:exact-recovery} (top) reports the results with $r =
\mathrm{rank}(L_0) = 0.05 \times n$ and $k = \| S_0 \|_0 = 0.05 \times
n^2$.  Table \ref{tab:exact-recovery} (bottom) reports the results for
a more challenging scenario, $\mathrm{rank}(L_0) = 0.05 \times n$ and
$k = 0.10 \times n^2$. In all cases, we set $\lambda =
1/\sqrt{n}$. Notice that in all cases, solving the convex PCP
gives a result $(L, S)$ with the correct rank and sparsity. Moreover,
the relative error $\| L - L_0 \|_F / \| L_0\|_F$ is small, less than
$10^{-5}$ in all examples considered.\footnote{We measure relative
  error in terms of $L$ only, since in this paper we view the sparse
  and low-rank decomposition as recovering a low-rank matrix $L_0$
  from gross errors. $S_0$ is of course also well-recovered: in
  this example, the relative error in $S$ is actually smaller than
  that in $L$.}

The last two columns of Table \ref{tab:exact-recovery} give the number
of partial singular value decompositions computed in the course of the
optimization ($\#$ SVD) as well as the total computation time. This
experiment was performed in Matlab on a Mac Pro with dual quad-core
2.66 GHz Intel Xenon processors and 16 GB RAM. As we will discuss in
Section \ref{sec:algorithms} the dominant cost in solving the convex
program comes from computing one partial SVD per
iteration. Strikingly, in Table \ref{tab:exact-recovery}, the number
of SVD computations is nearly constant regardless of dimension, 
and in all cases less than 17.\footnote{One might reasonably ask whether this near constant
  number of iterations is due to the fact that random problems are in
  some sense well-conditioned. There is some validity to this concern,
  as we will see in our real data examples. \cite{Lin2009-ALM}
  suggests a continuation strategy (there termed ``Inexact ALM'') that
  produces qualitatively similar solutions with a similarly small
  number of iterations. However, to the best of our knowledge its
  convergence is not guaranteed.} This suggests that in addition to
being theoretically well-founded, the recovery procedure advocated in
this paper is also reasonably practical.

\begin{table}
\begin{centering}
\begin{tabular}{|c|c|c|c|c|c|c|c|}
\hline
Dimension $n$ & $\mathrm{rank}(L_0)$ & $\| S_0 \|_0$ & $\mathrm{rank}(\hat L)$ & $\| \hat S \|_0$ & $\frac{\| \hat L - L_0 \|_F}{ \|L_0\|_F }$ & $\#$ SVD & Time(s) \\
\hline
\hline
500 & 25 & 12,500 & 25 & 12,500 & $1.1\times 10^{-6}$ & 16 & 2.9 \\
\hline
1,000 & 50 & 50,000 & 50 & 50,000 & $1.2\times 10^{-6}$ & 16 & 12.4 \\
\hline
2,000 & 100 & 200,000 & 100 & 200,000 & $1.2 \times 10^{-6}$& 16 & 61.8\\
\hline
3,000 & 250 & 450,000 & 250 & 450,000 & $2.3 \times 10^{-6}$ & 15 & 185.2 \\
\hline
\end{tabular} \\
$\rank(L_0) = 0.05 \times n$, $\| S_0 \|_0 = 0.05 \times n^2$. \\
\vspace{.1in}
\begin{tabular}{|c|c|c|c|c|c|c|c|}
\hline
Dimension $n$ & $\mathrm{rank}(L_0)$ & $\| S_0 \|_0$ & $\mathrm{rank}(\hat L)$ & $\| \hat S \|_0$ & $\frac{\| \hat L - L_0 \|_F}{ \|L_0\|_F }$ & $\#$ SVD & Time(s) \\
\hline
\hline
500 & 25 & 25,000 & 25 & 25,000 & $1.2\times 10^{-6}$ & 17 & 4.0 \\
\hline
1,000 & 50 & 100,000 & 50 & 100,000 & $2.4\times 10^{-6}$& 16 & 13.7 \\
\hline
2,000 & 100 & 400,000 & 100 & 400,000 & $2.4 \times 10^{-6}$ & 16 & 64.5 \\
\hline
3,000 & 150 & 900,000 & 150 & 900,000 & $2.5 \times 10^{-6}$ & 16 & 191.0 \\
\hline
\end{tabular} \\
$\rank(L_0) = 0.05 \times n$, $\| S_0 \|_0 = 0.10 \times n^2$. \\ \vspace{.1in}
\end{centering}
\caption{Correct recovery for random problems of varying size. 
  Here, $L_0 = XY^* \in \R^{n \times n}$ with $X,Y \in \mathbb{R}^{n \times r}$; $X,Y$ have entries i.i.d.~$\mathcal{N}(0,1/n)$. $S_0 \in \{-1,0,1\}^{n \times n}$ has support chosen uniformly at random and independent random signs; $\|S_0\|_0$ is the number of nonzero entries in $S_0$.   
  Top: recovering matrices of rank $0.05 \times n$ from $5\%$ gross errors. Bottom: recovering matrices of rank $0.05 \times n$ from $10\%$ gross errors. In all cases, the rank of $L_0$ and $\ell_0$-norm of $S_0$ are correctly estimated. Moreover, the number of partial singular value decompositions ($\#$ SVD) required to solve PCP is almost constant.} \label{tab:exact-recovery}
\end{table}

\subsection{Phase transition in rank and sparsity}
Theorem \ref{teo:main} shows that convex programming correctly recovers an
incoherent low-rank matrix from a constant fraction $\rho_s$ of
errors. We next empirically investigate the algorithm's ability to
recover matrices of varying rank from errors of varying sparsity. We
consider square matrices of dimension $n_1 = n_2 = 400$. We generate
low-rank matrices $L_0 = X Y^*$ with $X$ and $Y$ independently chosen
$n \times r$ matrices with i.i.d.~Gaussian entries of mean zero and
variance $1/n$. For our first experiment, we assume a Bernoulli model
for the support of the sparse term $S_0$, with random signs: each
entry of $S_0$ takes on value $0$ with probability $1-\rho$, and
values $\pm 1$ each with probability $\rho/2$. For each $(r,\rho)$
pair, we generate $10$ random problems, each of which is solved via
the algorithm of Section \ref{sec:algorithms}. We declare a trial to
be successful if the recovered $\hat L$ satisfies $\| L - L_0 \|_F /
\| L_0 \|_F \le 10^{-3}$. Figure \ref{fig:phase-transition} (left)
plots the fraction of correct recoveries for each pair
$(r,\rho)$. Notice that there is a large region in which the recovery
is exact. This highlights an interesting aspect of our result: the
recovery is correct even though in some cases $\|S_0 \|_F \gg \| L_0
\|_F$ (e.g., for $r/n = \rho$, $\|S_0\|_F$ is $\sqrt{n} = 20$ times
larger!). This is to be expected from Lemma \ref{teo:kkt}: the existence (or
non-existence) of a dual certificate depends only on the signs and
support of $S_0$ and the orientation of the singular spaces of $L_0$.

However, for incoherent $L_0$, our main result goes one step further
and asserts that the signs of $S_0$ are also not important: recovery
can be guaranteed as long as its support is chosen uniformly at
random. We verify this by again sampling $L_0$ as a product of
Gaussian matrices and choosing the support $\Omega$ according to the
Bernoulli model, but this time setting $S_0 = \PO
\mathrm{sgn}(L_0)$. One might expect such $S_0$ to be more difficult
to distinguish from $L_0$. Nevertheless, our analysis showed that the
number of errors that can be corrected drops by at most $1/2$ when
moving to this more difficult model. Figure \ref{fig:phase-transition}
(middle) plots the fraction of correct recoveries over $10$ trials,
again varying $r$ and $\rho$. Interestingly, the region of correct
recovery in Figure \ref{fig:phase-transition} (middle) actually
appears to be broader than that in Figure \ref{fig:phase-transition}
(left). Admittedly, the shape of the region in the upper-left corner
is puzzling, but has been `confirmed' by several distinct simulation
experiments (using different solvers). 

Finally, inspired by the connection between matrix completion and
robust PCA, we compare the breakdown point for the low-rank and sparse
separation problem to the breakdown behavior of the nuclear-norm
heuristic for matrix completion. By comparing the two heuristics, we
can begin to answer the question {\em how much is gained by knowing
  the location $\Omega$ of the corrupted entries}? Here, we again
generate $L_0$ as a product of Gaussian matrices. However, we now
provide the algorithm with only an incomplete subset $M = \POc L_0$ of
its entries. Each $(i,j)$ is included in $\Omega$ independently with
probability $1-\rho$, so rather than a probability of error, here,
$\rho$ stands for the probability that an entry is omitted. We solve
the nuclear norm minimization problem
\[
    \text{minimize} \quad \|L\|_* \quad \text{subject to} \quad \POc L = \POc M 
\]
using an augmented Lagrange multiplier algorithm very similar to the
one discussed in Section \ref{sec:algorithms}. We again declare $L_0$
to be successfully recovered if $\| L - L_0 \|_F / \| L_0 \|_F <
10^{-3}$. Figure \ref{fig:phase-transition} (right) plots the fraction
of correct recoveries for varying $r,\rho$. Notice that nuclear norm
minimization successfully recovers $L_0$ over a much wider range of
$(r,\rho)$. This is interesting because in the regime of large $k$, $k
= \Omega(n^2)$, the best performance guarantees for each heuristic
agree in their order of growth -- both guarantee correct recovery for
$\mathrm{rank}(L_0) = O(n/\log^2 n)$. Fully explaining the difference
in performance between the two problems may require a sharper analysis
of the breakdown behavior of each.

\begin{figure}
\begin{center}
\subfigure[Robust PCA, Random Signs]{\includegraphics[width=2in]{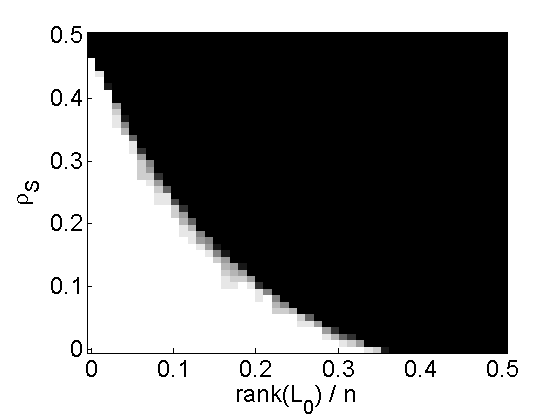}} 
\subfigure[Robust PCA, Coherent Signs]{\includegraphics[width=2in]{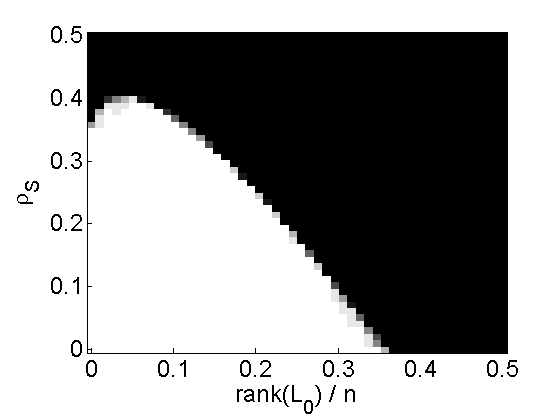}}
\subfigure[Matrix Completion]{\includegraphics[width=2in]{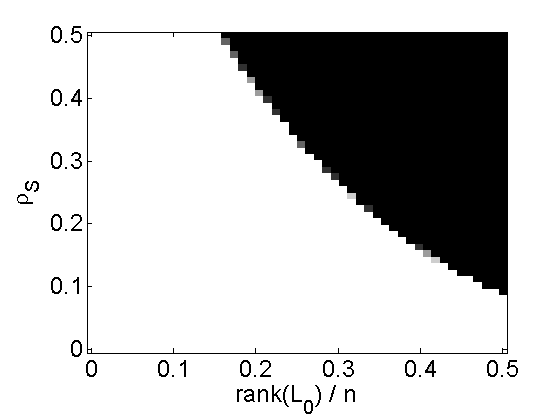}}
\end{center}
\caption{{\bf Correct recovery for varying rank and sparsity.} Fraction of correct recoveries across 10 trials, as a function of $\mathrm{rank}(L_0)$ (x-axis) and sparsity of $S_0$ (y-axis). Here, $n_1 = n_2 = 400$. In all cases, $L_0 = XY^*$ is a product of independent $n \times r$ i.i.d.~$\mathcal{N}(0,1/n)$ matrices. Trials are considered successful if $\| \hat L - L_0 \|_F / \| L_0 \|_F < 10^{-3}$. Left: low-rank and sparse decomposition, $\sgn(S_0)$ random. Middle: low-rank and sparse decomposition, $S_0 = \PO \sgn(L_0)$. Right: matrix completion. For matrix completion, $\rho_s$ is the probability that an entry is omitted from the observation.} \label{fig:phase-transition}
\end{figure}

\subsection{Application sketch: background modeling from surveillance
  video} Video is a natural candidate for low-rank modeling, due to
the correlation between frames. One of the most basic algorithmic
tasks in video surveillance is to estimate a good model for the
background variations in a scene. This task is complicated by the
presence of foreground objects: in busy scenes, every frame may
contain some anomaly. Moreover, the background model needs to be
flexible enough to accommodate changes in the scene, for example due to
varying illumination. In such situations, it is natural to model the
background variations as approximately low rank. Foreground objects,
such as cars or pedestrians, generally occupy only a fraction of the
image pixels and hence can be treated as sparse errors.

We investigate whether convex optimization can separate these sparse
errors from the low-rank background. Here, it is important to note
that the error support may not be well-modeled as Bernoulli: errors
tend to be spatially coherent, and more complicated models such as
Markov random fields may be more appropriate
\cite{Cevher2008-ECCV,Zhou2009-ICCV}. Hence, our theorems do not
necessarily guarantee the algorithm will succeed with high
probability. Nevertheless, as we will see, \name\, still
gives visually appealing solutions to this practical low-rank and
sparse separation problem, without using any additional information
about the spatial structure of the error.

We consider two example videos introduced in \cite{Li2004-TIP}. The
first is a sequence of $200$ grayscale frames taken in an
airport. This video has a relatively static background, but
significant foreground variations. The frames have resolution $176
\times 144$; we stack each frame as a column of our matrix $M \in
\mathbb{R}^{25,344 \times 200}$. We decompose $M$ into a low-rank term
and a sparse term by solving the convex PCP problem \eqref{eq:sdp}
with $\lambda = 1/\sqrt{n_{1}}$. On a desktop PC with a 2.33 GHz Core2
Duo processor and 2 GB RAM, our Matlab implementation requires 806
iterations, and roughly 43 minutes to converge.\footnote{The paper
  \cite{Lin2009-ALM} suggests a variant of ALM optimization procedure,
  there termed the ``Inexact ALM'' that finds a visually similar
  decomposition in far fewer iterations (less than 50). However, since
  the convergence guarantee for that variant is weak, we choose to
  present the slower, exact result here.} Figure
\ref{fig:surveillance-1}(a) shows three frames from the video; (b) and
(c) show the corresponding columns of the low rank matrix $\hat L$ and
sparse matrix $\hat S$ (its absolute value is shown here). Notice that
$\hat L$ correctly recovers the background, while $\hat S$ correctly
identifies the moving pedestrians. The person appearing in the images
in $\hat L$ does not move throughout the video.

\begin{figure}[t]
\begin{center}
\subfigure{\includegraphics[scale=.46]{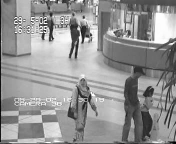}} \addtocounter{subfigure}{-1}
\subfigure{\includegraphics[scale=.46]{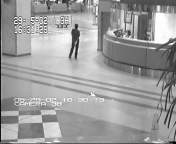}} \addtocounter{subfigure}{-1}
\subfigure{\includegraphics[scale=.46]{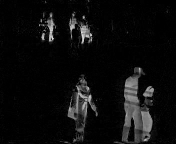}} \addtocounter{subfigure}{-1} 
\subfigure{\includegraphics[scale=.46]{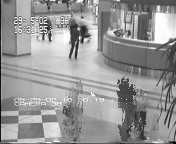}} \addtocounter{subfigure}{-1}
\subfigure{\includegraphics[scale=.46]{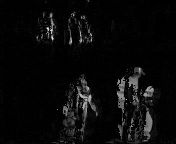}} \addtocounter{subfigure}{-1} \\
\subfigure{\includegraphics[scale=.46]{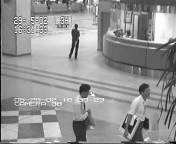}} \addtocounter{subfigure}{-1}
\subfigure{\includegraphics[scale=.46]{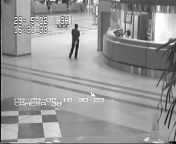}} \addtocounter{subfigure}{-1}
\subfigure{\includegraphics[scale=.46]{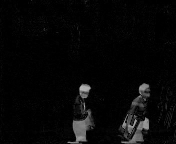}} \addtocounter{subfigure}{-1}
\subfigure{\includegraphics[scale=.46]{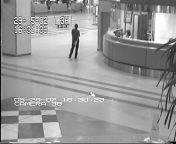}} \addtocounter{subfigure}{-1}
\subfigure{\includegraphics[scale=.46]{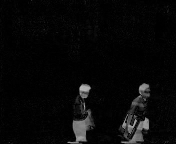}} \addtocounter{subfigure}{-1} \\
\subfigure[Original frames]{\includegraphics[scale=.46]{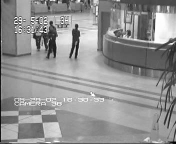}} \addtocounter{subfigure}{0}
\subfigure[Low-rank $\hat L$]{\includegraphics[scale=.46]{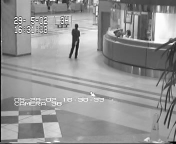}} \addtocounter{subfigure}{0}
\subfigure[Sparse $\hat S$]{\includegraphics[scale=.46]{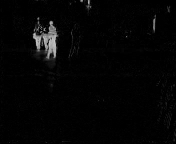}} \addtocounter{subfigure}{0}
\subfigure[Low-rank $\hat L$]{\includegraphics[scale=.46]{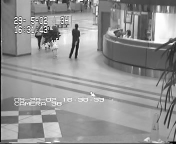}} \addtocounter{subfigure}{0}
\subfigure[Sparse $\hat S$]{\includegraphics[scale=.46]{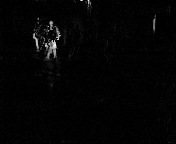}} \addtocounter{subfigure}{0} \\
\centerline{\small \hspace{1in}{Convex optimization (this work)}
\hspace{.8in}{Alternating minimization \cite{DeLaTorre2003-IJCV}}}
\end{center}
\caption{Background modeling from video. Three frames from a 200 frame
  video sequence taken in an airport \cite{Li2004-TIP}. (a) Frames of
  original video $M$. (b)-(c) Low-rank $\hat L$ and sparse components
  $\hat S$ obtained by PCP, (d)-(e) competing approach
  based on alternating minimization of an $m$-estimator
  \cite{DeLaTorre2003-IJCV}. PCP yields a much more appealing result
  despite using less prior knowledge.} \label{fig:surveillance-1}
\end{figure}

\begin{figure}[t]
\begin{center}
\subfigure{\includegraphics[scale=.50]{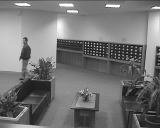}} \addtocounter{subfigure}{-1}
\subfigure{\includegraphics[scale=.50]{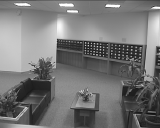}} \addtocounter{subfigure}{-1}
\subfigure{\includegraphics[scale=.50]{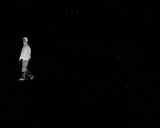}} \addtocounter{subfigure}{-1} 
\subfigure{\includegraphics[scale=.50]{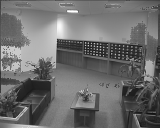}} \addtocounter{subfigure}{-1}
\subfigure{\includegraphics[scale=.50]{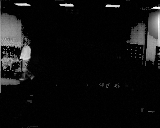}} \addtocounter{subfigure}{-1} \\
\subfigure{\includegraphics[scale=.50]{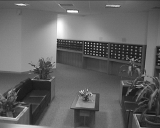}} \addtocounter{subfigure}{-1}
\subfigure{\includegraphics[scale=.50]{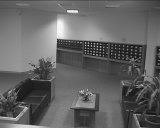}} \addtocounter{subfigure}{-1}
\subfigure{\includegraphics[scale=.50]{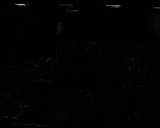}} \addtocounter{subfigure}{-1}
\subfigure{\includegraphics[scale=.50]{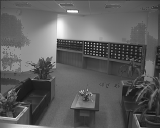}} \addtocounter{subfigure}{-1}
\subfigure{\includegraphics[scale=.50]{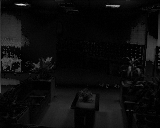}} \addtocounter{subfigure}{-1} \\
\subfigure[Original frames]{\includegraphics[scale=.50]{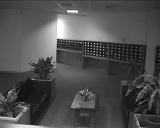}} \addtocounter{subfigure}{0}
\subfigure[Low-rank $\hat L$]{\includegraphics[scale=.50]{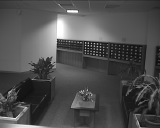}} \addtocounter{subfigure}{0}
\subfigure[Sparse $\hat S$]{\includegraphics[scale=.50]{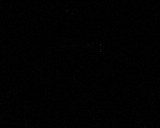}} \addtocounter{subfigure}{0}
\subfigure[Low-rank $\hat L$]{\includegraphics[scale=.50]{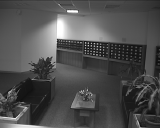}} \addtocounter{subfigure}{0}
\subfigure[Sparse $\hat S$]{\includegraphics[scale=.50]{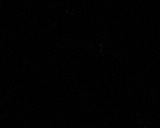}} \addtocounter{subfigure}{0} \\
\centerline{\small \hspace{1in}{Convex optimization (this work)}
\hspace{.8in}{Alternating minimization \cite{DeLaTorre2003-IJCV}}}
\end{center}
\caption{Background modeling from video. Three frames from a 250 frame
  sequence taken in a lobby, with varying illumination
  \cite{Li2004-TIP}. (a) Original video $M$. (b)-(c) Low-rank $\hat L$
  and sparse $\hat S$ obtained by PCP. (d)-(e) Low-rank
  and sparse components obtained by a competing approach based on
  alternating minimization of an m-estimator
  \cite{DeLaTorre2003-IJCV}. Again, convex programming yields a more
  appealing result despite using less prior
  information.} \label{fig:surveillance-2}
\end{figure}

Figure \ref{fig:surveillance-1} (d) and (e) compares the result
obtained by \name\, to a state-of-the-art technique from
the computer vision literature, \cite{DeLaTorre2003-IJCV}.\footnote{We
  use the code package downloaded from
  \url{http://www.salleurl.edu/~ftorre/papers/rpca/rpca.zip}, modified
  to choose the rank of the approximation as suggested in
  \cite{DeLaTorre2003-IJCV}.} That approach also aims at robustly
recovering a good low-rank approximation, but uses a more complicated,
nonconvex $m$-estimator, which incorporates a local scale estimate
that implicitly exploits the spatial characteristics of natural
images. This leads to a highly nonconvex optimization, which is
solved locally via alternating minimization. Interestingly, despite
using more prior information about the signal to be recovered, this
approach does not perform as well as the convex programming heuristic:
notice the large artifacts in the top and bottom rows of Figure
\ref{fig:surveillance-1} (d).

In Figure \ref{fig:surveillance-2}, we consider $250$ frames of a
sequence with several drastic illumination changes. Here, the
resolution is $168 \times 120$, and so $M$ is a $20,160 \times 250$
matrix. For simplicity, and to illustrate the theoretical results
obtained above, we again choose $\lambda = 1/\sqrt{n_1}$.\footnote{For
  this example, slightly more appealing results can actually be
  obtained by choosing larger $\lambda$ (say, $2 / \sqrt{n_1}$).} For
this example, on the same 2.66 GHz Core 2 Duo machine, the algorithm
requires a total of 561 iterations and 36 minutes to converge.

Figure \ref{fig:surveillance-2} (a) shows three frames taken from the
original video, while (b) and (c) show the recovered low-rank and
sparse components, respectively. Notice that the low-rank component
correctly identifies the main illuminations as background, while the
sparse part corresponds to the motion in the scene. On the other hand,
the result produced by the algorithm of \cite{DeLaTorre2003-IJCV}
treats some of the first illumination as foreground. PCP again
outperforms the competing approach, despite using less prior
information. These results suggest the potential power for convex
programming as a tool for video analysis.

Notice that the number of iterations for the real data is typically
higher than that of the simulations with random matrices given in
Table \ref{tab:exact-recovery}. The reason for this discrepancy might
be that the structures of real data could slightly deviate from the
idealistic low-rank and sparse model. Nevertheless, it is important to
realize that practical applications such as video surveillance often
provide additional information about the signals of interest, e.g. the
support of the sparse foreground is spatially piecewise contiguous, or
even impose additional requirements, e.g. the recovered background
needs to be non-negative etc. We note that the simplicity of our
objective and solution suggests that one can easily incorporate
additional constraints and more accurate models of the signals so as
to obtain much more efficient and accurate solutions in the future.


\subsection{Application sketch: removing shadows and specularities from face images} 
Face recognition is another problem domain in computer vision where
low-dimensional linear models have received a great deal of
attention. This is mostly due to the work of Basri and Jacobs, who
showed that for convex, Lambertian objects, images taken under distant
illumination lie near an approximately nine-dimensional linear
subspace known as the {\em harmonic plane}
\cite{Basri2003-PAMI}. However, since faces are neither perfectly
convex nor Lambertian, real face images often violate this low-rank
model, due to cast shadows and specularities. These errors are large
in magnitude, but sparse in the spatial domain. It is reasonable to
believe that if we have enough images of the same face, \name\, will
be able to remove these errors. As with the previous example, some
caveats apply: the theoretical result suggests the performance should
be good, but does not guarantee it, since again the error support does
not follow a Bernoulli model. Nevertheless, as we will see, the
results are visually striking.
\begin{figure}[t]
\centerline{
\begin{minipage}{3in}
\centering
\subfigure{
\includegraphics[scale=0.33]{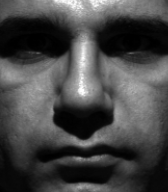}
}
\addtocounter{subfigure}{-1}
\subfigure{
\includegraphics[scale=0.33]{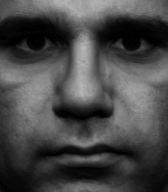}
}
\addtocounter{subfigure}{-1}
\subfigure{
\includegraphics[scale=0.33]{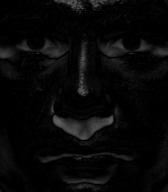}
} \addtocounter{subfigure}{-1} \\
\subfigure{
\includegraphics[scale=0.33]{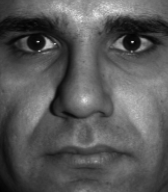}
} \addtocounter{subfigure}{-1}
\subfigure{
\includegraphics[scale=0.33]{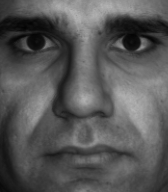}
} \addtocounter{subfigure}{-1}
\subfigure{
\includegraphics[scale=0.33]{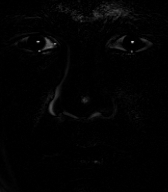}
} \addtocounter{subfigure}{-1} \\
\centering
\subfigure[$M$]{
\includegraphics[scale=0.33]{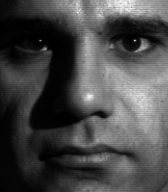}
} \addtocounter{subfigure}{0}
\subfigure[$\hat L$]{
\includegraphics[scale=0.33]{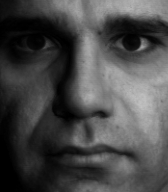}
} \addtocounter{subfigure}{0}
\subfigure[$\hat S$]{
\includegraphics[scale=0.33]{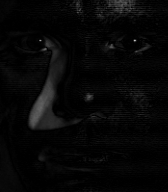}
}
\end{minipage}
\vline
\begin{minipage}{3in}
\centering
\addtocounter{subfigure}{-3}
\subfigure{
\includegraphics[scale=0.33]{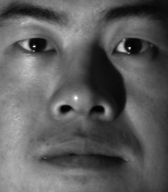}
}
\addtocounter{subfigure}{-1}
\subfigure{
\includegraphics[scale=0.33]{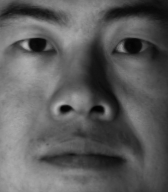}
}
\addtocounter{subfigure}{-1}
\subfigure{
\includegraphics[scale=0.33]{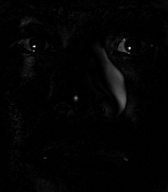}
} \addtocounter{subfigure}{-1} \\
\subfigure{
\includegraphics[scale=0.33]{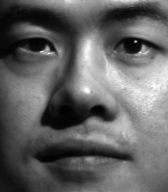}
} \addtocounter{subfigure}{-1}
\subfigure{
\includegraphics[scale=0.33]{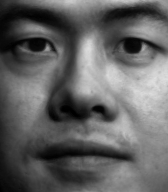}
} \addtocounter{subfigure}{-1}
\subfigure{
\includegraphics[scale=0.33]{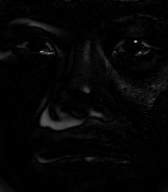}
} \addtocounter{subfigure}{-1} \\
\centering
\subfigure[$M$]{
\includegraphics[scale=0.33]{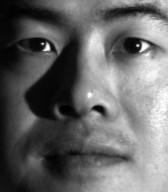}
} \addtocounter{subfigure}{0}
\subfigure[$\hat L$]{
\includegraphics[scale=0.33]{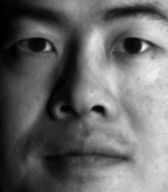}
} \addtocounter{subfigure}{0}
\subfigure[$\hat S$]{
  \includegraphics[scale=0.33]{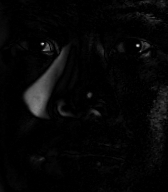}
}
\end{minipage}
}
\caption{Removing shadows, specularities, and saturations from face images. (a)
  Cropped and aligned images of a person's face under different
  illuminations from the Extended Yale B database. The size of each
  image is $192 \times 168$ pixels, a total of 58 different
  illuminations were used for each person. (b) Low-rank approximation
  $\hat L$ recovered by convex programming. (c) Sparse error $\hat S$
  corresponding to specularities in the eyes, shadows around the nose
  region, or brightness saturations on the face. Notice in the bottom
  left that the sparse term also compensates for errors in image
  acquisition.} \label{fig:faces}
\end{figure}

Figure \ref{fig:faces} shows two examples with face images taken
from the Yale B face database \cite{Georghiades2001-PAMI}. Here, each
image has resolution $192 \times 168$; there are a total of $58$
illuminations per subject, which we stack as the columns of our matrix
$M \in \mathbb{R}^{32,256 \times 58}$. We again solve PCP 
with $\lambda = 1/\sqrt{n_1}$. In this case, the algorithm requires
642 iterations to converge, and the total computation time on the same
Core 2 Duo machine is $685$ seconds.

Figure \ref{fig:faces} plots the low rank term $\hat L$ and the
magnitude of the sparse term $\hat S$ obtained as the solution to the
convex program. The sparse term $\hat S$ compensates for cast shadows
and specular regions. In one example (bottom row of Figure
\ref{fig:faces} left), this term also compensates for errors in image
acquisition. These results may be useful for conditioning the training
data for face recognition, as well as face alignment and tracking
under illumination variations.


\section{Algorithms} \label{sec:algorithms}

Theorem \ref{teo:main} shows that incoherent low-rank matrices can be
recovered from nonvanishing fractions of gross errors in polynomial
time. Moreover, as the experiments in the previous section attest, the
low computation cost is guaranteed not only in theory, the efficiency
is becoming {\em practical} for real imaging problems. This
practicality is mainly due to the rapid recent progress in scalable
algorithms for nonsmooth convex optimization, in particular for
minimizing the $\ell_1$ and nuclear norms. In this section, we briefly
review this progress, and discuss our algorithm of choice for this
problem.

For small problem sizes, \name
\[
  \begin{array}{ll}
    \text{minimize}   & \quad \|L\|_* + \lambda \|S\|_1\\
    \text{subject to} & \quad L + S = M 
  \end{array}
\]
can be performed using off-the-shelf tools such as interior point methods \cite{Boyd-cvx}. This was suggested for rank minimization in \cite{FazelM2003-ACC,Recht2008-SR} and for low-rank and sparse decomposition \cite{Venkat-09} (see also \cite{Liu2009}). However, despite their superior convergence rates, interior point methods are typically limited to small problems, say $n < 100$, due to the $O(n^6)$ complexity of computing a step direction.  

The limited scalability of interior point methods has inspired a
recent flurry of work on first-order methods. Exploiting an analogy
with iterative thresholding algorithms for $\ell_1$-minimization
\cite{Yin2008,Yin2008-SJIS}, Cai et.\ al.\ developed an algorithm that
performs nuclear-norm minimization by repeatedly shrinking the {\em
  singular values} of an appropriate matrix, essentially reducing the
complexity of each iteration to the cost of an SVD
\cite{Cai2008}. However, for our low-rank and sparse decomposition
problem, this form of iterative thresholding converges slowly,
requiring up to $10^4$ iterations. Ma et.\ al.\
\cite{Goldfarb2009,Ma2009} suggest improving convergence using
continuation techniques, and also demonstrate how Bregman iterations
\cite{Osher2005} can be applied to nuclear norm minimization.

The convergence of iterative thresholding has also been greatly
improved using ideas from Nesterov's optimal first-order algorithm for
smooth minimization \cite{Nesterov1983-SMD}, which was extended to
non-smooth optimization in \cite{Nesterov2005,Beck2009}, and applied
to $\ell_1$-minimization in
\cite{Nesterov2007,Beck2009,Becker2009}. Based on \cite{Beck2009}, Toh
et.\ al.\ developed a proximal gradient algorithm for matrix
completion which they termed {\em Accelerated Proximal Gradient
  (APG)}. A very similar APG algorithm was suggested for low-rank and
sparse decomposition in \cite{Lin2009-APG}. That algorithm inherits
the optimal $O(1/k^2)$ convergence rate for this class of
problems. Empirical evidence suggests that these algorithms can solve
the convex PCP problem at least 50 times faster than straightforward
iterative thresholding (for more details and comparisons, see
\cite{Lin2009-APG}).

However, despite its good convergence guarantees, the practical
performance of APG depends strongly on the design of good continuation
schemes. Generic continuation does not guarantee good accuracy and
convergence across a wide range of problem settings.\footnote{In our
  experience, the optimal choice may depend on the relative magnitudes
  of the $L$ and $S$ terms and the sparsity of the corruption.} In
this paper, we have chosen to instead solve the convex PCP problem
\eqref{eq:sdp} using an augmented Lagrange multiplier (ALM) algorithm
introduced in \cite{Lin2009-ALM,Yuan2009}. In our experience, ALM
achieves much higher accuracy than APG, in fewer iterations. It works
stably across a wide range of problem settings with no tuning of
parameters. Moreover we observe an appealing (empirical) property: the
rank of the iterates often remains bounded by $\mathrm{rank}(L_0)$
throughout the optimization, allowing them to be computed especially
efficiently. APG, on the other hand, does not have this property.

The ALM method operates on the {\em augmented Lagrangian} 
\begin{equation}
l(L,S,Y) = \| L \|_* + \lambda \| S \|_1 + \langle Y, M-L-S\rangle +
\dfrac{\mu}{2}\|M-L-S\|_F^2.
\end{equation}
A generic Lagrange multiplier algorithm \cite{Bertsekas-LM} would
solve PCP by repeatedly setting $(L_k,S_k) = \arg \min_{L,S}
l(L,S,Y_k)$, and then updating the Lagrange multiplier matrix via
$Y_{k+1} = Y_{k} + \mu(M - L_k- S_k)$.

For our low-rank and sparse decomposition problem, we can avoid having
to solve a sequence of convex programs by recognizing that $\min_L
l(L,S,Y)$ and $\min_S l(L,S,Y)$ both have very simple and efficient
solutions. Let $\mathcal{S}_\tau : \mathbb{R} \to \mathbb{R}$ denote
the shrinkage operator $\mathcal{S}_\tau[x] = \sgn(x) \max( |x|-\tau,
0 )$, and extend it to matrices by applying it to each element. It is
easy to show that
\begin{equation}
\arg \min\limits_S l(L,S,Y) = \mathcal{S}_{\lambda \mu}(M-L+\mu^{-1} Y).
\end{equation}
Similarly, for matrices $X$, let $\mathcal{D}_\tau(X)$ denote the singular value thresholding operator given by $\mathcal{D}_\tau(X) = U \mathcal{S}_\tau(\Sigma) V^*$, where $X = U \Sigma V^*$ is any singular value decomposition. It is not difficult to show that 
\begin{equation}
\arg \min\limits_L l(L,S,Y) = \mathcal{D}_\mu( M-S-\mu^{-1} Y ).
\end{equation}
Thus, a more practical strategy is to first minimize $l$ with respect to $L$ (fixing $S$), then minimize $l$ with respect to $S$ (fixing $L$), and then finally update the Lagrange multiplier matrix $Y$ based on the residual $M-L-S$, a strategy that is summarized as Algorithm \ref{alg:RPCA_ALM} below. 
\begin{algorithm}[th]
\caption{\bf (\name\, by Alternating Directions \cite{Lin2009-ALM,Yuan2009})}
\begin{algorithmic}[1]
\STATE {\bf initialize:} $S_0 = Y_0 = 0, \mu > 0$.
\WHILE{not converged} 
\STATE compute $L_{k+1} = \mathcal{D}_\mu( M-S_k-\mu^{-1} Y_k)$;
\STATE compute $S_{k+1} = \mathcal{S}_{\lambda \mu}(M-L_{k+1}+\mu^{-1} Y_k)$;
\STATE compute $Y_{k+1} = Y_k + \mu (M - L_{k+1} - S_{k+1})$;
\ENDWHILE
\STATE {\bf output:} $L,S$.
\end{algorithmic}
\label{alg:RPCA_ALM}
\end{algorithm}

Algorithm 1 is a special case of a more general class of augmented
Lagrange multiplier algorithms known as {\em alternating directions}
methods \cite{Yuan2009}. The convergence of these algorithms has been
well-studied (see e.g.\ \cite{Lions1979,Kontogiorgis1989} and the many
references therein, as well as discussion in
\cite{Lin2009-ALM,Yuan2009}). Algorithm \ref{alg:RPCA_ALM} performs excellently on a
wide range of problems: as we saw in Section 3, relatively small
numbers of iterations suffice to achieve good relative accuracy. The
dominant cost of each iteration is computing $L_{k+1}$ via singular
value thresholding. This requires us to compute those singular vectors
of $M-S_k-\mu^{-1} Y_k$ whose corresponding singular values exceed the
threshold $\mu$. Empirically, we have observed that the number of such
large singular values is often bounded by $\mathrm{rank}(L_0)$,
allowing the next iterate to be computed efficiently via a partial
SVD.\footnote{Further performance gains might be possible by replacing
  this partial SVD with an approximate SVD, as suggested in
  \cite{Goldfarb2009} for nuclear norm minimization.} The most
important implementation details for this algorithm are the choice of
$\mu$ and the stopping criterion. In this work, we simply choose $\mu
= n_1 n_2 / 4 \| M \|_1$, as suggested in \cite{Yuan2009}. We
terminate the algorithm when $\| M - L - S \|_F \le \delta \| M \|_F$,
with $\delta = 10^{-7}$.

Very similar ideas can be used to develop simple and effective
augmented Lagrange multiplier algorithms for matrix completion
\cite{Lin2009-ALM}, and for the robust matrix completion problem
\eqref{eq:sdp2} discussed in Section \ref{sec:matrix-completion}, with
similarly good performance. In the preceding section, all simulations
and experiments are therefore conducted using ALM-based
algorithms. For a more thorough discussion, implementation details and
comparisons with other algorithms, please see
\cite{Lin2009-ALM,Yuan2009}.

\section{Discussion}
\label{sec:discussion}

This paper delivers some rather surprising news: one can disentangle
the low-rank and sparse components exactly by convex programming, and
this provably works under very broad conditions that are much broader
than those provided by the best known results.  Further, our analysis
has revealed rather close relationships between matrix completion and
matrix recovery (from sparse errors) and our results even generalize
to the case when there are both incomplete and corrupted entries
(i.e. Theorem \ref{teo:MCRobust}).  In addition, \name\, 
does not have any free parameter and can be solved by simple
optimization algorithms with remarkable efficiency and accuracy. More
importantly, our results may point to a very wide spectrum of new
theoretical and algorithmic issues together with new practical
applications that can now be studied systematically.

Our study so far is limited to the low-rank component being exactly
low-rank, and the sparse component being exactly sparse. It would be
interesting to investigate when either or both these assumptions are
relaxed. One way to think of this is via the new observation model $M
= L_0 + S_0 + N_0$, where $N_0$ is a dense, small perturbation
accounting for the fact that the low-rank component is only
approximately low-rank and that small errors can be added to all the
entries (in some sense, this model unifies the classical PCA and the
robust PCA by combining both sparse gross errors and dense small
noise). The ideas developed in \cite{Candes2009-PIEEE} in connection
with the stability of matrix completion under small perturbations may
be useful here. Even more generally, the problems of sparse signal
recovery, low-rank matrix completion, classical PCA, and robust PCA
can all be considered as special cases of a general measurement model
of the form
\[
M = \mathcal{A}(L_0) + \mathcal{B}(S_0) + \mathcal{C}(N_0),
\]
where $ \mathcal{A}, \mathcal{B}, \mathcal{C}$ are known linear maps.
An ambitious goal might be to understand exactly under what
conditions, one can effectively retrieve or decompose $L_0$ and $S_0$
from such noisy linear measurements via convex programming.

The remarkable ability of convex optimizations in recovering low-rank
matrices and sparse signals in high-dimensional spaces suggest that
they will be a powerful tool for processing massive data sets that
arise in  image/video processing, web data analysis, and
bioinformatics. Such data are often of millions or even billions of
dimensions so the computational and memory cost can be far beyond that of a
typical PC. Thus, one important direction for future investigation is
to develop algorithms that have even better scalability, and can be
easily implemented on the emerging parallel and distributed computing
infrastructures.

\section{Appendix}
\label{sec:appendix}

\subsection{Equivalence of sampling models}

We begin by arguing that a recovery result under the Bernoulli model
automatically implies a corresponding result for the uniform
model. Denote by $\P_{\text{Unif}(m)}$ and $\P_{\text{Ber}(p)}$
probabilities calculated under the uniform and Bernoulli models and
let ``Success'' be the event that the algorithm succeeds. We have
\begin{align*}
  \P_{\text{Ber}(p)}(\text{Success}) & = \sum_{k = 0}^{n^2} \P_{\text{Ber}(p)}(\text{Success} \, | \, |\Omega| = k) \P_{\text{Ber}(p)}(|\Omega| = k)\\
  & \le \sum_{k = 0}^{m-1} \P_{\text{Ber}(p)}(|\Omega| = k) + \sum_{k
    = m}^{n^2}
  \P_{\text{Unif}(k)}(\text{Success}) \P_{\text{Ber}(p)}(|\Omega| = k)\\
  & \le \P_{\text{Ber}(p)}(|\Omega| < m) +
  \P_{\text{Unif}(m)}(\text{Success}),
\end{align*}
where we have used the fact that for $k \ge m$, $\P_{\text{Unif}(k)}(
\text{Success} ) \le \P_{\text{Unif}(m)}( \text{Success} )$, and that
the conditional distribution of $\Omega$ given its cardinality is
uniform. Thus, 
\[
\P_{\text{Unif}(m)} (\text{Success}) \ge
\P_{\text{Ber}(p)}(\text{Success}) - \P_{\text{Ber}(p)}(|\Omega| <
m). 
\]
Take $p = m/n^2 + \epsilon$, where $\epsilon > 0$. The conclusion
follows from $\P_{\text{Ber}(p)}(|\Omega| < m) \le
e^{-\frac{\epsilon^2n^2}{2p}}$.  In the other direction, the same
reasoning gives
\begin{align*}
  \P_{\text{Ber}(p)}(\text{Success}) & \ge \sum_{k = 0}^{m} \P_{\text{Ber}(p)}(\text{Success} \, | \, |\Omega| = k) \P_{\text{Ber}(p)}(|\Omega| = k)\\
  & \ge \P_{\text{Unif}(m)}(\text{Success}) \sum_{k = 0}^{m} \P_{\text{Ber}(p)}(|\Omega| = k)\\
  & = \P_{\text{Unif}(m)}(\text{Success}) \P(|\Omega| \le m),
\end{align*}
and choosing $m$ such that $\P(|\Omega| > m)$ is exponentially small,
establishes the claim.

\subsection{Proof of Lemma \ref{teo:infty}}

The proof is essentially an application of Bernstein's inequality,
which states that for a sum of uniformly bounded independent random
variables with $|Y_k - \E Y_k| < c$,
\begin{equation}
 \label{eq:bern2}
 \P\Bigl(\sum_{k = 1}^n (Y_k - \E Y_k) > t\Bigr) \le 2 \exp\Bigl(-\frac{t^2}{2\sigma^2 + 2ct/3}\Bigr),
\end{equation}
where $\sigma^2$ is the sum of the variances, $\sigma^2 \equiv \sum_{k
 = 1}^n \text{Var}(Y_k)$. 

Define $\Omega_0$ via $\Omega_0 = \{(i,j) : \delta_{ij} = 1\}$ where
$\{\delta_{ij}\}$ is an independent sequence of Bernoulli variables
with parameter $\rho_0$.  With this notation, $Z' = Z - \rho_0^{-1} \PT
\POzero Z$ is given by 
\[
Z' = \sum_{ij} (1 - \rho_0^{-1} \delta_{ij}) Z_{ij} \PT(e_ie_j^*)
\]
so that $Z'_{i_0j_0}$ is a sum of independent random variables, 
\[
Z'_{i_0j_0} = \sum_{ij} Y_{ij}, \qquad Y_{ij} = (1 - \rho_0^{-1}
\delta_{ij}) Z_{ij} \< \PT(e_ie_j^*), e_{i_0} e_{j_0}^*\>.
\]
We have 
\begin{align*}
\sum_{ij} \text{Var}(Y_{ij}) & = (1-\rho_0) \rho_0^{-1}  \sum_{ij}
|Z_{ij}|^2 |\< \PT(e_ie_j^*), e_{i_0} e_{j_0}^*\>|^2\\
& \le   (1-\rho_0) \rho_0^{-1}  \|Z\|_\infty^2 \sum_{ij} |\< e_ie_j^*, \PT(e_{i_0} e_{j_0}^*)\>|^2\\
& =  (1-\rho_0) \rho_0^{-1}  \|Z\|_\infty^2 \|\PT(e_{i_0} e_{j_0}^*)\|_F^2\\
 & \le  (1-\rho_0) \rho_0^{-1}  \|Z\|_\infty^2 \frac{2\mu r}{n},
\end{align*}
where the last inequality holds because of \eqref{eq:PTeiej}. Also, it
follows from \eqref{eq:PU} that $|\< \PT(e_ie_j^*), e_{i_0}
e_{j_0}^*\>| \le \|\PT(e_ie_j^*)\|_F \|\PT(e_{i_0} e_{j_0}^*)\|_F \le
2\mu r/n$ so that $|Y_{ij}| \le \rho_0^{-1} \|Z\|_\infty \mu
r/n$. Then Bernstein's inequality gives
\[
\P(|Z'_{ij}| > \epsilon \|Z\|_\infty) \le 2 \exp\Bigl(-\frac{3}{16} \,
\frac{\epsilon^2 n \rho_0}{\mu r}\Bigr).
\]
If $\rho_0$ is as in Lemma \ref{teo:infty}, the union bound proves the
claim. 

\subsection{Proof of Theorem \ref{teo:MCRobust}}

\newcommand{\PG}{\mathcal{P}_\Gamma}
\newcommand{\PGp}{\mathcal{P}_{\Gamma^\perp}}

This section presents a proof of Theorem \ref{teo:MCRobust}, which
resembles that of Theorem \ref{teo:main}. Here and below, $S'_0 =
\POobs S_0$ so that the available data are of the form $Y = \POobs L_0
+ S'_0$. We make three observations.
\begin{itemize}
\item If PCP correctly recovers $L_0$ from the input data $\POobs L_0
  + S'_0$ (note that this means that $\hat L = L_0$ and $\hat S =
  S'_0$), then it must correctly recover $L_0$ from $\POobs L_0 +
  S''_0$, where $S''_0$ is a trimmed version of $S'_0$. The proof is
  identical to that of our elimination result, namely, Theorem
  \ref{teo:elimination}. The derandomization argument then applies and
  it suffices to consider the case where the signs of $S'_0$ are
  i.i.d.~symmetric Bernoulli variables.

\item It is of course sufficient to prove the theorem when each entry
  in $\Obs$ is revealed with probability $p_0 := 0.1$, i.e.~when $\Obs
  \sim \text{Ber}(p_0)$.

\item We establish the theorem in the case where $n_1 = n_2 = n$ as
  slight modifications would give the general case.
\end{itemize}

Further, there are now three index sets of interest:
\begin{itemize}
\item $\Obs$ are those locations where data are available.
\item $\Gamma \subset \Obs$ are those locations where data are
  available and clean; that is, $\PG Y = \PG L_0$.
\item $\Omega = \Obs \setminus \Gamma$ are those locations where data
  are available but totally unreliable.
\end{itemize}
The matrix $S'_0$ is thus supported on $\Omega$. If $\Obs \sim
\text{Ber}(p_0)$, then by definition, $\Omega \sim \text{Ber}(p_0
\tau)$.

\paragraph{Dual certification.} We begin with two lemmas concerning
dual certification.
\begin{lemma}
\label{teo:kkt2}
Assume $\|\PGp \PT\| < 1$.  Then $(L_0, S'_0)$ is the unique
solution if there is a pair $(W, F)$ obeying
\[
UV^* + W = \lambda(\sgn(S'_0) + F),  
\]
with $\PT W = 0 $, $\|W\| < 1$, $\PGp F = 0$ and $\|F\|_\infty < 1$.
\end{lemma}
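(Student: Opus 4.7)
My plan is to follow the template of Lemma \ref{teo:kkt}, modifying it to handle (i) the relaxed feasibility $\POobs(L+S)=Y$, which allows perturbations in $L$ and $S$ that are no longer negatives of one another, and (ii) the three-way splitting of the index set into the corruption support $\Omega$, the clean observations $\Gamma$, and the unobserved locations $\Obs^c$.

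First I would write an arbitrary feasible competitor as $(L_0+H_L,\,S'_0+H_S)$ with $\POobs(H_L+H_S)=0$, which is equivalent to $\PG H_L=-\PG H_S$ and $\PO H_L=-\PO H_S$, while $\POobsp H_L$ and $\POobsp H_S$ are free. I would then invoke the standard subgradient inequalities for $\|\cdot\|_*$ at $L_0$ and $\|\cdot\|_1$ at $S'_0$, choosing the free pieces $W_0\in T^\perp$ with $\|W_0\|=1$ so that $\langle W_0,H_L\rangle=\|\PTp H_L\|_*$, and $F_0=\sgn(\POc H_S)$ (which is supported on $\Omega^c$ and bounded by $1$ in $\ell_\infty$) so that $\langle F_0,H_S\rangle=\|\POc H_S\|_1$. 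Crucially, $F_0$ is allowed to live on all of $\Omega^c=\Gamma\cup\Obs^c$, whereas the certificate $F$ is supported only on $\Gamma$; this is precisely the slack to be exploited.

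Substituting the certificate identity $UV^*=\lambda\sgn(S'_0)+\lambda F-W$ converts the linear term $\langle UV^*,H_L\rangle+\lambda\langle\sgn(S'_0),H_S\rangle$ into
\[
\lambda\langle\sgn(S'_0),\POobs(H_L+H_S)\rangle+\lambda\langle F,\PG H_L\rangle-\langle W,\PTp H_L\rangle,
\]
where I used $\PT W=0$, $\PGp F=0$, and $\mathrm{supp}(\sgn(S'_0))\subset\Omega\subset\Obs$. The first summand vanishes by feasibility; the second, after invoking $\PG H_L=-\PG H_S$, equals $-\lambda\langle F,\PG H_S\rangle$ and is bounded in modulus by $\lambda\|F\|_\infty\|\PG H_S\|_1$; the third is bounded by $\|W\|\|\PTp H_L\|_*$. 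Decomposing $\|\POc H_S\|_1=\|\PG H_S\|_1+\|\POobsp H_S\|_1$ (a disjoint split since $\Omega^c=\Gamma\sqcup\Obs^c$) and collecting, I arrive at
\begin{align*}
\|L_0+H_L\|_*+\lambda\|S'_0+H_S\|_1 &\ge \|L_0\|_*+\lambda\|S'_0\|_1+(1-\|W\|)\|\PTp H_L\|_* \\
&\quad+\lambda(1-\|F\|_\infty)\|\PG H_S\|_1+\lambda\|\POobsp H_S\|_1.
\end{align*}

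Since $\|W\|<1$ and $\|F\|_\infty<1$, the three remainder terms are strictly positive unless $\PTp H_L=0$, $\PG H_S=0$, and $\POobsp H_S=0$ all hold. The decisive step, and the one I expect to do the real work, is ruling out nonzero $H_L$: from $H_L\in T$ and, via feasibility, $\PG H_L=-\PG H_S=0$, one has $H_L=\PGp\PT H_L$, hence $\|H_L\|_F\le\|\PGp\PT\|\,\|H_L\|_F$, and the hypothesis $\|\PGp\PT\|<1$ forces $H_L=0$. Feasibility then yields $\PO H_S=-\PO H_L=0$, which combined with $\PG H_S=0$ and $\POobsp H_S=0$ gives $H_S=0$, establishing uniqueness. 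The main obstacle is really just careful bookkeeping across the disjoint decompositions $\Obs=\Gamma\sqcup\Omega$ and $\Omega^c=\Gamma\sqcup\Obs^c$, and checking that the certificate $F$, although prescribed only on $\Gamma$, is automatically a valid $\ell_1$ subgradient component since $\Gamma\cap\Omega=\emptyset$.
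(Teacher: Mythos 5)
Your proof is correct and follows essentially the same route as the paper's (very brief) argument: a feasible perturbation, subgradient inequalities with the free components chosen to pick out $\|\PTp H_L\|_*$ and $\|\POc H_S\|_1$, substitution of the dual certificate, and the conclusion $H_L\in\Gamma^\perp\cap T=\{0\}$ from $\|\PGp\PT\|<1$. The only difference is cosmetic: you keep the terms $\lambda(1-\|F\|_\infty)\|\PG H_S\|_1$ and $\lambda\|\POobsp H_S\|_1$ explicit (which makes the deduction $H_S=0$ airtight), whereas the paper folds them into a single $(1-\beta)$ remainder and leaves the off-$\Obs$ part of $H_S$ implicit.
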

The proof is about the same as that of Lemma \ref{teo:kkt}, and is
discussed in very brief terms. The idea is to consider a feasible
perturbation of the form $(L_0 + H_L, S'_0 - H_S)$ obeying $\POobs H_L
= \POobs H_S$, and show that this increases the objective functional
unless $H_L = H_S = 0$. Then a sequence of steps similar to that in
the proof of Lemma \ref{teo:kkt} establishes
\begin{equation}
\label{eq:kkt2}
\|L_0 + H_L\|_* + \lambda \|S'_0 - H_S\|_1 \ge \|L_0\|_* +
\lambda \|S'_0\|_1 + (1-\beta)(\|\PTp H_L\|_* + \lambda \|P_\Gamma
H_L\|_1), 
\end{equation}
where $\beta = \text{max}(\|W\|, \|F\|_\infty)$.  Finally, $\|\PTp
H_L\|_* + \lambda \|P_\Gamma H_L\|_1$ vanishes if and only if $H_L \in
\Gamma^\perp \cap T = \{0\}$.

\begin{lemma}
\label{teo:kktdg2}
Assume that for any matrix $M$, $\|\PT \PGp M\|_F \le n \|\PTp \PGp
M\|_F$ and take $\lambda > 4/n$.  Then $(L_0, S'_0)$ is the unique
solution if there is a pair $(W, F)$ obeying
\[
UV^* + W + \PT D = \lambda(\sgn(S'_0) + F),
\]
with $\PT W = 0 $, $\|W\| < 1/2$, $\PGp F = 0$ and $\|F\|_\infty < 1/2$,
and $\|\PT D\|_F \le n^{-2}$.
\end{lemma}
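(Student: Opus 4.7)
\medskip
\noindent\textbf{Proof plan for Lemma \ref{teo:kktdg2}.}

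The plan is to mimic the proof of Lemma \ref{teo:kktdg}, but with two twists: the feasible perturbations are now constrained only on $\Obs$ (not globally), and the approximate dual equation contains an additional $\PT D$ term on the \emph{low-rank} side rather than a $\PO D$ term on the sparse side. Let $(L_0 + H_L, S'_0 - H_S)$ be any feasible perturbation, i.e.\ $\POobs(H_L - H_S) = 0$. I would first argue (as a no-loss-of-generality reduction) that $H_S$ may be assumed supported on $\Obs$: its restriction to $\Obs^c$ only inflates $\lambda\|S'_0 - H_S\|_1$ without helping feasibility. In particular $\POobs H_L = H_S$, and consequently $\PG H_L = \PG H_S$ because $\Gamma \subset \Obs$.

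Next I would plug in convenient subgradients, exactly as in Lemma \ref{teo:kkt} and Lemma \ref{teo:kktdg}: pick $W_0 \in T^\perp$ with $\langle W_0, H_L\rangle = \|\PTp H_L\|_*$ and $F_0$ supported in $\Gamma$ with $\langle F_0, H_S\rangle = -\|\PG H_S\|_1$ (note that $\sgn(S'_0)$ lives on $\Omega \subset \Obs$, and the free direction of the $\ell_1$-subgradient at $S'_0$ is $\Omega^c$; restricted to the support $\Obs$ of $H_S$ this leaves $\Gamma$). Using $\langle \sgn(S'_0), H_L\rangle = \langle \sgn(S'_0), \POobs H_L\rangle = \langle\sgn(S'_0), H_S\rangle$ to cancel the cross term, and invoking the relaxed dual equation $UV^* + W + \PT D = \lambda(\sgn(S'_0)+F)$, the standard manipulations would yield
\begin{equation*}
\|L_0 + H_L\|_* + \lambda\|S'_0 - H_S\|_1 \;\ge\; \|L_0\|_* + \lambda\|S'_0\|_1 + \tfrac{1}{2}\|\PTp H_L\|_* + \tfrac{\lambda}{2}\|\PG H_S\|_1 - \tfrac{1}{n^{2}}\|\PT H_L\|_F,
\end{equation*}
where I used $|\langle W,H_L\rangle| \le \tfrac12\|\PTp H_L\|_*$, $|\langle F,H_L\rangle|=|\langle F,\PG H_S\rangle|\le\tfrac12\|\PG H_S\|_1$, and $|\langle \PT D, H_L\rangle| \le \|\PT D\|_F \|\PT H_L\|_F \le n^{-2}\|\PT H_L\|_F$.

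The key step, and the main technical obstacle, will be absorbing the rogue term $n^{-2}\|\PT H_L\|_F$ into the two good terms. This is exactly where the hypothesis $\|\PT \PGp M\|_F \le n \|\PTp \PGp M\|_F$ (for all $M$) is designed to be used. Writing $\PT H_L = \PT\PG H_L + \PT\PGp H_L$, using $\PG H_L = \PG H_S$ and the hypothesis on $M=H_L$, followed by $\|\PTp \PGp H_L\|_F \le \|\PTp H_L\|_F + \|\PTp\PG H_L\|_F \le \|\PTp H_L\|_* + \|\PG H_S\|_F$, I would obtain
\begin{equation*}
\|\PT H_L\|_F \;\le\; (n+1)\|\PG H_S\|_1 + n\|\PTp H_L\|_*,
\end{equation*}
so $n^{-2}\|\PT H_L\|_F \le \tfrac{2}{n}\|\PG H_S\|_1 + \tfrac{1}{n}\|\PTp H_L\|_*$. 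With $\lambda > 4/n$ the remaining coefficients $\tfrac{\lambda}{2}-\tfrac{2}{n}$ and $\tfrac{1}{2}-\tfrac{1}{n}$ are strictly positive, so the perturbation strictly increases the objective unless $\|\PTp H_L\|_* = 0$ and $\|\PG H_S\|_1 = 0$. In that degenerate case $H_L \in T$ and $\PGp H_L = H_L$ (since $\PG H_L = \PG H_S = 0$), so applying the hypothesis once more gives $\|H_L\|_F = \|\PT\PGp H_L\|_F \le n\|\PTp\PGp H_L\|_F = n\|\PTp H_L\|_F = 0$, hence $H_L = 0$ and then $H_S = \POobs H_L = 0$. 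This establishes uniqueness of $(L_0, S'_0)$.
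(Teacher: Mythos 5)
Your proposal is correct and follows essentially the same route as the paper: the same relaxed subgradient inequality with the extra $-n^{-2}\|\PT H_L\|_F$ term, the same decomposition $\PT H_L = \PT\PG H_L + \PT\PGp H_L$ combined with the hypothesis to absorb that term, and the same use of the hypothesis (equivalently $\Gamma^\perp \cap T = \{0\}$) to dispose of the degenerate case. You in fact spell out a couple of steps the paper leaves implicit, such as the reduction to $H_S$ supported on $\Obs$ and the cancellation $\langle \sgn(S'_0),H_L\rangle = \langle \sgn(S'_0),H_S\rangle$.
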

Note that $\|\PT \PGp M\|_F \le n \|\PTp \PGp M\|_F$ implies
$\Gamma^\perp \cap T = \{0\}$, or equivalently $\|\PGp \PT\| <
1$. Indeed if $M \in \Gamma^\perp \cap T$, $\PT \PGp M = M$ while
$\PTp \PGp M = 0$, and thus  $M = 0$. 

\begin{proof}
  It follows from \eqref{eq:kkt2} together with the same argument as
  in the proof of Lemma \ref{teo:kktdg2} that
\[
\|L_0 + H_L\|_* + \lambda \|S'_0 - H_S\|_1 \ge \|L_0\|_* +
\lambda \|S'_0\|_1 + \frac12 \Bigl(\|\PTp H_L\|_* + \lambda
\|\PG H_L\|_1\Bigr) - {1 \over n^2} \|\PT H_L\|_F.
\]
Observe now that 
\begin{align*}
  \|\PT H_L\|_F & \le \|\PT \PG H_L\|_F  + \|P_T\PGp H_L\|_F  \\
                          & \le \|\PT \PG H_L\|_F  + n\|\PTp \PGp  H_L\|_F\\
                          & \le \|\PT \PG H_L\|_F  + n(\|\PTp \PG H_L\|_F+\|\PTp H_L\|_F)\\
 & \le (n+1) \|\PG H_L\|_F  + n \|P_{T^{\perp}} H_L\|_F. 
\end{align*}
Using both $\|\PG H_L\|_F \le \|\PG H_L\|_1$ and $\|\PTp H_L\|_F \le
\|\PTp H_L\|_*$, we obtain 
\[
\|L_0 + H_L\|_* + \lambda \|S'_0 - H_S\|_1 \ge \|L_0\|_* +
\lambda \|S'_0\|_1 + \Bigl(\frac12 -\frac{1}{n}\Bigr)
\|\PTp H_L\|_* + \Bigl(\frac{\lambda}{2} -\frac{n+1}{n^2}\Bigr) \|\PG
H_L\|_1.
\]
The claim follows from $\Gamma^\perp \cap T = \{0\}$.
\end{proof}

\begin{lemma}
\label{teo:xiao}
Under the assumptions of Theorem \ref{teo:MCRobust}, the
assumption of Lemma \ref{teo:kktdg2} is satisfied with high
probability. That is, $\|\PT \PGp M\|_F \leq n\|\PTp \PGp M\|_F$ for
all $M$.
\end{lemma}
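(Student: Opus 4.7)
\medskip

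\noindent\textbf{Proof proposal for Lemma \ref{teo:xiao}.}

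The plan is to reduce the claim to a bound on the operator norm $\|\PT\PGp\|$ and then invoke Theorem \ref{teo:rudelson} applied to the \emph{good} index set $\Gamma$ (the entries which are both observed and uncorrupted). Note that the right-hand side factor $n$ is very generous; morally the lemma is saying only that $\Gamma^\perp\cap T=\{0\}$ with a quantitative gap.

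First I would perform a trivial linear-algebraic reduction. Fix any $M$ and set $N:=\PGp M$, so that $\PGp N=N$ and one has the orthogonal decomposition $\|N\|_F^2 = \|\PT N\|_F^2 + \|\PTp N\|_F^2$. Since $\|\PT N\|_F = \|\PT\PGp N\|_F \le \|\PT\PGp\|\,\|N\|_F$, rearranging gives
\[
\|\PT N\|_F^2 \;\le\; \frac{\|\PT\PGp\|^2}{1-\|\PT\PGp\|^2}\;\|\PTp N\|_F^2.
\]
Hence it suffices to show that $1-\|\PT\PGp\|^2 \ge 1/(1+n^2)$, and in fact one can afford a much larger lower bound (a positive constant).

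Next I would apply Theorem \ref{teo:rudelson} to the set $\Gamma$. Under the sampling model of Theorem \ref{teo:MCRobust}, $\Obs\sim\mathrm{Ber}(p_0)$ with $p_0=0.1$ and each observed entry is independently corrupted with probability $\tau\le\tau_s$; therefore $\Gamma\sim\mathrm{Ber}(\rho_\Gamma)$ with $\rho_\Gamma = p_0(1-\tau)\ge p_0(1-\tau_s)$, a positive constant. The rank hypothesis $\rank(L_0)\le \rho_r\,n\,\mu^{-1}(\log n)^{-2}$ together with $\rho_r$ small ensures $\rho_\Gamma \ge C_0\,\epsilon^{-2}\,\mu r\log n/n$ for, say, $\epsilon=1/2$, so that with high probability
\[
\|\PT - \rho_\Gamma^{-1}\PT\PG\PT\| \;\le\; \epsilon.
\]
Using the identity $\PT\PGp\PT = \PT - \PT\PG\PT$, this rewrites as
\[
\|\PT\PGp\PT - (1-\rho_\Gamma)\PT\| \;=\; \rho_\Gamma\,\|\PT-\rho_\Gamma^{-1}\PT\PG\PT\| \;\le\; \epsilon\rho_\Gamma,
\]
and hence $\|\PT\PGp\PT\| \le (1-\rho_\Gamma)+\epsilon\rho_\Gamma = 1-\rho_\Gamma(1-\epsilon)$. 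Since $\|\PT\PGp\|^2=\|\PT\PGp\PT\|$ (both $\PT$ and $\PGp$ are self-adjoint projections), we conclude
\[
1-\|\PT\PGp\|^2 \;\ge\; \rho_\Gamma(1-\epsilon),
\]
which is a positive constant independent of $n$, comfortably larger than $1/(1+n^2)$.

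There is really no serious obstacle: the slack between the actual estimate ($1-\|\PT\PGp\|^2$ bounded below by a constant) and what the lemma requires ($\ge 1/(1+n^2)$) is enormous, and the lemma is stated in this weak form only because the subsequent application in Lemma \ref{teo:kktdg2} needs nothing more. The one point to be a bit careful about is the chain of parameter conditions: we need $\rho_\Gamma \gtrsim \mu r\log n/n$ to apply Theorem \ref{teo:rudelson}, which in turn is guaranteed by the rank hypothesis $\rank(L_0)\le \rho_r n \mu^{-1}(\log n)^{-2}$ of Theorem \ref{teo:MCRobust} once $\rho_r$ is chosen small relative to the fixed constant $p_0(1-\tau_s)$.
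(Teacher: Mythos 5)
Your proof is correct, and it rests on exactly the same probabilistic ingredient as the paper's argument: Theorem \ref{teo:rudelson} applied to $\Gamma \sim \text{Ber}(\rho_0)$ with $\rho_0 = p_0(1-\tau)$ a fixed constant, so that the hypothesis $\rho_0 \ge C_0\,\epsilon^{-2}\,\mu r \log n/n$ is met under the rank condition of Theorem \ref{teo:MCRobust} (your parameter bookkeeping at the end is the same as the paper's implicit use of $\rho_0/2 \ge 1/n$). The only difference is how the deterministic step is packaged. You convert \eqref{eq:rudelson} into a subspace-angle bound, $\|\PT\PGp\|^2 = \|\PT\PGp\PT\| \le 1-\rho_0(1-\epsilon)$ --- in effect Corollary \ref{teo:POPT} applied to $\Gamma^\perp$ --- and then finish by Pythagoras applied to $N=\PGp M$. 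The paper instead fixes $M'=\PGp M$, observes that $\PG M'=0$ forces $\|\PG\PT M'\|_F = \|\PG\PTp M'\|_F \le \|\PTp M'\|_F$, and lower-bounds $\|\PG\PT M'\|_F$ by a constant times $\|\PT M'\|_F$ using the near-isometry of $\rho_0^{-1}\PT\PG\PT$ on $T$, arriving at $\|\PTp M'\|_F \ge \tfrac{\rho_0}{2}\|\PT M'\|_F$. Both routes yield a constant-factor inequality, vastly stronger than the factor $n$ the lemma asks for, so the slack you point out is genuine. Your version is slightly more modular, since it isolates the reusable quantitative statement that $\|\PGp\PT\|$ is bounded away from $1$; the paper's version stays at the level of a fixed matrix and never needs the operator-norm reduction, but the content is the same.
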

\begin{proof}
  Set $\rho_0 = p_0(1-\tau)$ and $M' = \PGp M$. Since $\Gamma \sim
  \text{Ber}(\rho_0)$, Theorem \ref{teo:rudelson} gives $\|\PT -
  \rho_0^{-1} \PT\PG\PT\|\leq 1/2$ with high probability. Further,
  because $\|\PG \PT M'\|_F = \|\PG \PTp M'\|_F$, we have
\[
\|\PG \PT M'\|_F \le \|\PTp M'\|_F.
\] 
In the other direction,
\begin{align*}
  \rho_0^{-1} \|\PG \PT M'\|_F^2 & = \rho_0^{-1} \<\PT M', \PT \PG \PT M')\\
  & = \<\PT M', \PT M'\> +  \<\PT M', (\rho_0^{-1} \PT\PG\PT-\PT) M')\\
  & \ge \|\PT M'\|_F^2 - \frac12 \|\PT M'\|_F^2 = \frac12 \|\PT M'\|_F^2. 
\end{align*}
In conclusion, $\|\PTp M'\|_F\geq \|\PG\PT M'\|_F\geq \frac{\rho_0}{2}
\|\PT M'\|_F$, and the claim follows since $\frac{\rho_0}{2} \ge
\frac{1}{n}$.
\end{proof}

\newcommand{\POone}{\mathcal{P}_{\Omega}} 

Thus far, our analysis shows that to establish our theorem, it suffices
to construct a pair $(Y^L, W^S)$ obeying
\begin{equation}
\label{eq:dual-certif2}
\begin{cases} 
  \|\PTp Y^L\| < 1/4,\\
  \|\PT Y^L - UV^*\|_F \le n^{-2},\\
  \PGp Y^L = 0,\\
  \|\PG Y^L\|_\infty < \lambda/4,\\
\end{cases}
\qquad \text{and} \qquad
\begin{cases} 
  \PT W^S = 0,\\
  \|W^S\| \le 1/4,\\
  \POone W^S = \lambda \sgn(S'_0),\\
\POobsp W^S = 0,\\
  \|\PG W^S\|_\infty \le \lambda/4. 
\end{cases}
\end{equation}
Indeed, by definition, $Y^L + W^S$ obeys
\[
Y^L + W^S = \lambda(\text{sgn}(S'_0) + F),
\]
where $F$ is as in Lemma \ref{teo:kktdg2}, and it can also be
expressed as
\[
Y^L + W^S = UV^* + W + \PT D, 
\]
where $W$ and $\PT D$ are as in this lemma as well. 

\newcommand{\PGj}{\mathcal{P}_{\Gamma_j}}

\paragraph{Construction of the dual certificate $Y^L$.} We use the
golfing scheme to construct $Y^L$. Think of $\Gamma \sim
\text{Ber}(\rho_0)$ with $\rho_0 = p_0(1-\tau)$ as $\cup_{1 \le j \le
  j_0} \Gamma_j$, where the sets $\Gamma_j \sim \text{Ber}(q)$ are
independent, and $q$ obeys $\rho_0 = 1 - (1-q)^{j_0}$. Here, we take
$j_0 = \lceil 3\log n \rceil$, and observe that $q \ge \rho_0/j_0$ as
before. Then starting with $Y_0 = 0$, inductively define
\[
Y_j = Y_{j-1} + q^{-1} \PGj \PT(UV^* - Y_{j-1}),
\]
and set
\begin{equation}
  \label{eq:YL}
  Y^L = Y_{j_0} = q^{-1} \sum_j \PGj Z_{j-1}, \quad Z_j = (P_T - q^{-1} \PT \PGj \PT) Z_{j-1}.  
\end{equation}
By construction, $\PGp Y^L = 0$.  Now just as in Section
\eqref{sec:WL}, because $q$ is sufficiently large, $\|Z_j\| \le e^{-j}
\|UV^*\|_\infty$ and $\|Z_j\|_F \le e^{-j} \sqrt{r}$, both inequality
holding with large probability. The proof is now identical to
that in \eqref{eq:WL}. First, the same steps show that
\[
\|\PTp Y^L\| \le C \sqrt{\frac{n\log n}{q}} \|UV^*\|_\infty = C'
\sqrt{\frac{\mu r (\log n)^2}{n \rho_0}}.
\]
Whenever $\rho_0 \ge C_0 \frac{\mu r (\log n)^2}{n}$ for a
sufficiently large value of the constant $C_0$ (which is possible
provided that $\rho_r$ in \eqref{eq:MCrobust} is sufficiently small),
this terms obeys $\|\PTp Y^L\| \le 1/4$ as required.  Second,
\[
\|\PT Y^L - UV^*\|_F = \|Z_{j_0}\|_F \le e^{-3\log n} \sqrt{r} \le
n^{-2}.
\]
And third, the same steps give
\[
\|Y^L\|_\infty \le q^{-1} \|UV^*\|_\infty \sum_j e^{-j} \le 3(1-e^{-1})
\sqrt{\frac{\mu r (\log n)^2}{\rho_0^2 n^2}}. 
\]
Now it suffices to bound the right-hand side by $\frac{\lambda}{4} =
\frac14 \sqrt{\frac{1-\tau}{n\rho_0}}$. This is automatic when $\rho_0
\ge C_0 \frac{\mu r (\log n)^2}{n}$ whenever $C_0$ is sufficiently
large and, thus, the situation is as before.  In conclusion, we have
established that $Y^L$ obeys \eqref{eq:dual-certif2} with high
probability.

\newcommand{\PTOp}{\mathcal{P}_{(T + \Obs^\perp)^\perp}}
\newcommand{\PTO}{\mathcal{P}_{(T + \Obs^\perp)}}
\newcommand{\cQ}{\mathcal{Q}}

\paragraph{Construction of the dual certificate $W^S$.} We first
establish that with  high probability, 
\begin{equation}
  \label{eq:PTPO2}
  \|\PT \PO\| \le  \sqrt{\tau' p_0}, \quad \tau' = \tau + \tau_0,
\end{equation}
where $\tau_0(\tau)$ is a continuous function of $\tau$ approaching
zero when $\tau$ approaches zero. In other words, the parameter
$\tau'$ may become arbitrary small constant by selecting $\tau$ small
enough.  This claim is a straight application of Corollary
\ref{teo:POPT}. We also have
\begin{equation}
  \label{eq:PTPO3}
  \|\POone \PTO \POone \| \leq 2\tau'.
\end{equation}
with high probability. This second claim uses the identity
\[
\POone \PTO \POone = \POone \PT 
(\PT\POobs\PT)^{-1}\PT\POone. 
\]
This is well defined since the restriction of $\PT \POobs \PT$ to $T$
is invertible. Indeed, Theorem \ref{teo:rudelson} gives $\PT \POobs
\PT \ge \frac{p_0}{2} \PT$ and, therefore, $\|(\PT\POobs\PT)^{-1}\|
\le 2 p_0^{-1}$. Hence, 
\[
\|\POone \PTO \POone\| \le 2p_0^{-1} \|\POone \PT\|^2, 
\]
and \eqref{eq:PTPO3} follows from \eqref{eq:PTPO2}. 

Setting $E = \sgn(S'_0)$, this allows to define $W^S$ via
\begin{align*}
W^S & = \lambda (\OpId - \PTO) (\POone - \POone \PTO \POone)^{-1} E\\
    & := (\OpId - \PTO) (W_0^S +  W_1^S),
\end{align*}
where $W_0^S = \lambda E$, and $W_1^S = \mathcal{R}E$ with
$\mathcal{R} = \sum_{k \ge 1} (\POone \PTO \POone)^k$.  The operator
$\mathcal{R}$ is self-adjoint and obeys $\|\mathcal{R}\| \le
\frac{2\tau'}{1-2\tau'}$ with high probability. By construction, $\PT
W^S = \POobsp W^S = 0$ and $\PO W^S = \lambda \sgn(S'_0)$. It remains
to check that both events $\|W^S\| \le 1/4$ and $\|\PG W^S\|_\infty
\le \lambda/4$ hold with high probability.

{\em Control of $\|W^S\|$}. For the first term, we have $\|(\OpId -
\PTO) W_0^S\| \le \|W_0^S\| = \lambda \|E\|$. Because the entries of
$E$ are i.i.d.~and take the value $\pm 1$ each with probability $p_0
\tau/2$, and the value $0$ with probability $1-p_0 \tau$, standard
arguments give
\[
\|E\| \le 4\sqrt{n p_0 (\tau+\tau_0)}
\]
with large probability. Since $\lambda = 1/\sqrt{p_0 n}$,
$\|W_0^S\| \le 4 \sqrt{\tau + \tau_0} < 1/8$ with high probability,
provided $\tau$ is small enough.

\newcommand{\cR}{\mathcal{R}}
For the second term, $\|(\OpId - \PTO) W_1^S\| \le \lambda \|\mathcal{R}
E\|$, and the same covering argument as before gives 
\[
\P(\lambda \|\mathcal{R}(E)\| > t) \le 2 \times 6^{2n} \, \exp\Bigl(- \frac{
  t^2}{2 \lambda^2 \sigma^2}\Bigr) + \P(\|\cR\| \ge \sigma).
\]
Since $\lambda = 1/\sqrt{np_0}$ this shows that $\|W^S\| \le 1/4$ with
high probability, since one can always choose $\sigma$, or
equivalently $\tau' = \tau + \tau_0$, sufficiently small.

{\em Control of $\|\PG W^S\|_\infty$.} For $(i,j) \in \Gamma$, we have 
\[
W^S_{ij} = \<e_i e_j^*, W^S\> = \lambda \<X(i,j), E\>,
\]
where 
\[
X(i,j) = (\POone - \POone \PTO \POone)^{-1} \POone \PTOp e_i e_j^*. 
\]
The same strategy as before gives
\[
\P\Bigl(\sup_{(i,j) \in G} |W^S_{ij}| > \frac{\lambda}{4} \Bigr) \le 2n^2
\exp\Bigl(-\frac{1}{8\sigma^2}\Bigr) + \P\Bigl(\sup_{(i,j) \in G}
\|X(i,j)\|_F > \sigma\Bigr).
\]
It remains to control the Frobenius norm of $X(i,j)$. To do this, we
use the identity
\[
\POone \PTOp e_i e_j^* = \POone \PT  (\PT \POobs \PT)^{-1} \PT e_ie_j^*,
\]
which gives
\[
\|\POone \PTOp e_i e_j^*\|_F \le \sqrt{\frac{4 \tau'}{p_0}} \|\PT e_i
e_j^*\|_F \le \sqrt{\frac{8 \mu r \tau'}{np_0}}
\]
with high probability. This follows from the fact that $\|(\PT \POobs
\PT)^{-1}\| \le 2p_0^{-1}$ and $\|\PO \PT\| \le \sqrt{p_0 \tau'}$ as
we have already seen. Since we also have $\|(\POone - \POone \PTO
\POone)^{-1}\| \le \frac{1}{1-2\tau'}$ with high probability,
\[
\sup_{(i,j) \in \Gamma} \|X(i,j)\|_F \le \frac{1}{1-2\tau'}
\sqrt{\frac{8 \mu r \tau'}{np_0}}. 
\]
This shows that $\|\PG W^S\|_\infty \le \lambda/4$ if $\tau'$, or
equivalently $\tau$, is sufficiently small.

\subsection*{Acknowledgements}

E. C. is supported by ONR grants N00014-09-1-0469 and N00014-08-1-0749
and by the Waterman Award from NSF. Y. M. is partially supported by
the grants NSF IIS 08-49292, NSF ECCS 07-01676, and ONR
N00014-09-1-0230. E. C. would like to thank Deanna Needell for
comments on an earlier version of this manuscript.  We would also like
to thank Zhouchen Lin (MSRA) for his help with the ALM algorithm, and
Hossein Mobahi (UIUC) for his help with some of the simulations.

\small 
\bibliographystyle{plain}
\bibliography{RobustPCA}

\begin{thebibliography}{10}

\bibitem{Basri2003-PAMI}
R.~Basri and D.~Jacobs.
\newblock Lambertian reflectance and linear subspaces.
\newblock {\em {IEEE} Trans. on Pattern Analysis and Machine Intelligence},
  25(2):218--233, 2003.

\bibitem{Beck2009}
A.~Beck and M.~Teboulle.
\newblock A fast iterative shrinkage-thresholding algorithm for linear inverse
  problems.
\newblock {\em {SIAM} Journal on Imaging Sciences}, 2(1):183--202, Mar 2009.

\bibitem{Becker2009}
S.~Becker, J.~Bobin, and E.~J. Cand\`es.
\newblock {NESTA}: A fast and accuract first-order method for sparse recovery.
\newblock {\em preprint}, 2009.

\bibitem{Belkin2003-NC}
M.~Belkin and P.~Niyogi.
\newblock Laplacian eigenmaps for dimensionality reduction and data
  representation.
\newblock {\em Neural Computation}, 15(6):1373--1396, 2003.

\bibitem{Bertsekas-LM}
D.P. Bertsekas.
\newblock {\em Constrained Optimization and {Lagrange} Multiplier Method}.
\newblock Academic Press, 1982.

\bibitem{Cai2008}
J.~Cai, E.~J. Cand{\`{e}s}, and Z.~Shen.
\newblock A singular value thresholding algorithm for matrix completion.
\newblock {\em {\rm preprint}}, 2008.

\bibitem{Candes2009-PIEEE}
E.~J. Cand\`es and Y.~Plan.
\newblock Matrix completion with noise.
\newblock {\em Proceedings of the {IEEE} {\em (to appear)}}, 2009.

\bibitem{CR08}
E.~J. Cand\`es and B.~Recht.
\newblock Exact matrix completion via convex optimzation.
\newblock {\em Found. of Comput. Math.}, 9:717--772, 2009.

\bibitem{CRT}
E.~J. Cand{\`e}s, J.~Romberg, and T.~Tao.
\newblock Robust uncertainty principles: exact signal reconstruction from
  highly incomplete frequency information.
\newblock {\em IEEE Trans. Inform. Theory}, 52(2):489--509, 2006.

\bibitem{CT09}
E.~J. Cand\`es and T.~Tao.
\newblock The power of convex relaxation: Near-optimal matrix completion.
\newblock {\em IEEE Trans. Inf. Theory {\em (to appear)}}, 2009.

\bibitem{Cevher2008-ECCV}
V.~Cevher, A.~Sankaranarayanan, M.~Duarte, D.~Reddy, R.~Baraniuk, and
  R.~Chellappa.
\newblock Compressive sensing for background subtraction.
\newblock In {\em Proceedings of European Conference on Computer Vision
  (ECCV)}, 2009.

\bibitem{Venkat-09}
V.~Chandrasekaran, S.~Sanghavi, P.~Parrilo, and A.~Willsky.
\newblock Rank-sparsity incoherence for matrix decomposition.
\newblock {\em preprint}, 2009.

\bibitem{ChenS2001-SIAM}
S.~Chen, D.~Donoho, and M.~Saunders.
\newblock Atomic decomposition by basis pursuit.
\newblock {\em SIAM Review}, 43(1):129--159, 2001.

\bibitem{Dewester1990-JSIS}
S.~Dewester, S.~Dumains, T.~Landauer, G.~Furnas, and R.~Harshman.
\newblock Indexing by latent semantic analysis.
\newblock {\em Journal of the Society for Information Science}, 41(6):391--407,
  1990.

\bibitem{Eckart1936-Psychometrika}
C.~Eckart and G.~Young.
\newblock The approximation of one matrix by another of lower rank.
\newblock {\em Psychometrika}, 1:211--218, 1936.

\bibitem{FazelM2003-ACC}
M.~Fazel, H.~Hindi, and S.~Boyd.
\newblock Log-det heuristic for matrix rank minimization with applications to
  {H}ankel and {E}uclidean distance matrices.
\newblock In {\em Proceedings of the American Control Conference}, pages
  2156--2162, Jun 2003.

\bibitem{Fischler1981-ACM}
M.~Fischler and R.~Bolles.
\newblock Random sample consensus: A paradigm for model fitting with
  applications to image analysis and automated cartography.
\newblock {\em Communications of the {ACM}}, 24:381--385, 1981.

\bibitem{Georghiades2001-PAMI}
A.~Georghiades, P.~Belhumeur, and D.~Kriegman.
\newblock From few to many: Illumination cone models for face recognition under
  variable lighting and pose.
\newblock {\em {IEEE} Trans. on Pattern Analysis and Machine Intelligence},
  23(6), 2001.

\bibitem{Gnanadesikan1972-Biometrics}
R.~Gnanadesikan and J.~Kettenring.
\newblock Robust estimates, residuals, and outlier detection with multiresponse
  data.
\newblock {\em Biometrics}, 28:81--124, 1972.

\bibitem{Goldfarb2009}
D.~Goldfarb and S.~Ma.
\newblock Convergence of fixed point continuation algorithms for matrix rank
  minimization.
\newblock {\em preprint}, 2009.

\bibitem{Boyd-cvx}
M.~Grant and S.~Boyd.
\newblock {CVX}: {M}atlab software for disciplined convex programming (web page
  and software).
\newblock http://stanford.edu/$\sim$boyd/cvx, June 2009.

\bibitem{GrossMC}
D.~Gross.
\newblock Recovering low-rank matrices from few coefficients in any basis.
\newblock {\em CoRR}, abs/0910.1879, 2009.

\bibitem{GrossQuantum}
D.~Gross, Y-K. Liu, S.~T. Flammia, S.~Becker, and J.~Eisert.
\newblock Quantum state tomography via compressed sensing.
\newblock {\em CoRR}, abs/0909.3304, 2009.

\bibitem{Tony-Hey}
T.~Hey, S.~Tansley, and K.~Tolle.
\newblock {\em The Fourth Paradigm: Data-Intensive Scientific Discovery}.
\newblock Microsoft Research, 2009.

\bibitem{Hotelling}
H.~Hotelling.
\newblock Analysis of a complex of statistical variables into principal
  components.
\newblock {\em Journal of Educational Psychology}, 24:417--441, 1933.

\bibitem{Huber}
P.~Huber.
\newblock {\em Robust Statistics}.
\newblock Wiley and Sons, 1981.

\bibitem{Jolliffe1986}
I.~Jolliffe.
\newblock {\em Principal Component Analysis}.
\newblock Springer-Verlag, 1986.

\bibitem{Ke2005-CVPR}
Q.~Ke and T.~Kanade.
\newblock Robust $\ell^1$-norm factorization in the presence of outliers and
  missing data.
\newblock In {\em Proceedings of {IEEE} International Conference on Computer
  Vision and Pattern Recognition}, 2005.

\bibitem{Kontogiorgis1989}
S.~Kontogiorgis and R.~Meyer.
\newblock A variable-penalty alternating direction method for convex
  optimization.
\newblock {\em Mathematical Programming}, 83:29--53, 1989.

\bibitem{Ledoux}
M.~Ledoux.
\newblock {\em The Concentration of Measure Phenomenon}.
\newblock American Mathematical Society, 2001.

\bibitem{Li2004-TIP}
L.~Li, W.~Huang, I.~Gu, and Q.~Tian.
\newblock Statistical modeling of complex backgrounds for foreground object
  detection.
\newblock {\em {IEEE} Transactions on Image Processing}, 13(11):1459--1472,
  2004.

\bibitem{Lin2009-ALM}
Z.~Lin, M.~Chen, L.~Wu, and Y.~Ma.
\newblock The augmented {Lagrange} multiplier method for exact recovery of a
  corrupted low-rank matrices.
\newblock {\em Mathematical Programming}, submitted, 2009.

\bibitem{Lin2009-APG}
Z.~Lin, A.~Ganesh, J.~Wright, L.~Wu, M.~Chen, and Y.~Ma.
\newblock Fast convex optimization algorithms for exact recovery of a corrupted
  low-rank matrix.
\newblock In {\em Computational Advances in Multi-Sensor Adaptive Processing
  (CAMSAP)}, 2009.

\bibitem{Lions1979}
P.~Lions and B.~Mercier.
\newblock Splitting algorithms for the sum of two nonlinear operators.
\newblock {\em {SIAM} Journal on Numerical Analysis}, 16(6):964--979, 1979.

\bibitem{Liu2009}
Z.~Liu and L.~Vandenberge.
\newblock Interior-point method for nuclear norm approximation with application
  to system identification.
\newblock {\em {SIAM} Journal on Matrix Analysis and Applications},
  31(3):1235--1256, 2009.

\bibitem{Ma2009}
S.~Ma, D.~Goldfarb, and L.~Chen.
\newblock Fixed point and {Bregman} iterative methods for matrix rank
  minimization.
\newblock {\em preprint}, 2009.

\bibitem{Nesterov1983-SMD}
Y.~Nesterov.
\newblock A method of solving a convex programming problem with convergence
  rate \textit{O}$(1/k^2)$.
\newblock {\em Soviet Mathematics Doklady}, 27(2):372--376, 1983.

\bibitem{Nesterov2005}
Y.~Nesterov.
\newblock Smooth minimization of non-smooth functions.
\newblock {\em Mathematical Programming}, 103(1), 2005.

\bibitem{Nesterov2007}
Y.~Nesterov.
\newblock Gradient methods for minimizing composite objective functions.
\newblock {\em Technical Report - {CORE} - Universite Catholique de Louvain},
  2007.

\bibitem{Netflix}
{Netflix, Inc.}
\newblock The {N}etflix prize.
\newblock {\em http://www.netflixprize.com/}.

\bibitem{Osher2005}
S.~Osher, M.~Burger, D.~Goldfarb, J.~Xu, and W.~Yin.
\newblock An iterative regularization method for total variation-based image
  restoration.
\newblock {\em Multiscale Modeling and Simulation}, 4:460--489, 2005.

\bibitem{Papadimitriou}
C.~Papadimitriou, P.~Rghavan, H.~Tamaki, and S.~Vempala.
\newblock Latent semantic indexing, a probabilistic analysis.
\newblock {\em Journal of Computer and System Sciences}, 61(2):217--235, 2000.

\bibitem{Montanari-IT}
A.~Montanari R.~Keshavan and S.~Oh.
\newblock Matrix completion from a few entries.
\newblock 2009.

\bibitem{RechtMC}
B.~Recht.
\newblock A simpler approach to matrix completion.
\newblock {\em CoRR}, abs/0910.0651, 2009.

\bibitem{Recht2008-SR}
B.~Recht, M.~Fazel, and P.~Parillo.
\newblock Guaranteed minimum rank solution of matrix equations via nuclear norm
  minimization.
\newblock {\em {\rm submitted to} SIAM Review}, 2008.

\bibitem{Tenenbaum2000-Science}
J.~Tenenbaum, V.~de~Silva, and J.~Langford.
\newblock A global geometric framework for nonlinear dimensionality reduction.
\newblock {\em Science}, 290(5500):2319--2323, 2000.

\bibitem{DeLaTorre2003-IJCV}
F.~De~La Torre and M.~Black.
\newblock A framework for robust subspace learning.
\newblock {\em International Journal on Computer Vision}, 54:117--142, 2003.

\bibitem{Vershynin280}
R.~Vershynin.
\newblock Math 280 lecture notes.
\newblock Available at \url{http://www-stat.stanford.edu/~dneedell/280.html},
  2007.

\bibitem{Yin2008}
W.~Yin, E.~Hale, and Y.~Zhang.
\newblock Fixed-point continuation for $\ell^1$-minimization: Methodology and
  convergence.
\newblock {\em preprint}, 2008.

\bibitem{Yin2008-SJIS}
W.~Yin, S.~Osher, D.~Goldfarb, and J.~Darbon.
\newblock Bregman iterative algorithms for $\ell_1$-minimization with
  applications to compressed sensing.
\newblock {\em {SIAM} Journal on Imaging Sciences}, 1(1):143--168, 2008.

\bibitem{Yuan2009}
X.~Yuan and J.~Yang.
\newblock Sparse and low-rank matrix decomposition via alternating direction
  methods.
\newblock {\em preprint}, 2009.

\bibitem{Zhou2009-ICCV}
Z.~Zhou, A.~Wagner, H.~Mobahi, J.~Wright, and Y.~Ma.
\newblock Face recognition with contiguous occlusion using {M}arkov random
  fields.
\newblock In {\em Proceedings of International Conference on Computer Vision
  (ICCV)}, 2009.

\end{thebibliography}

\end{document}